 \newtheorem{DE}{Definition}[section]
\newcommand {\sm} {\setminus}
 \newcommand{\qed}{\relax\ifmmode\hskip2em\Box\else\unskip\nobreak\hfill$\Box$\fi}
\newtheorem{theorem}[DE]{Theorem}
\newtheorem{lemma}[DE]{Lemma}
\newtheorem{conjecture}[DE]{Conjecture}
\theoremstyle{break}\theorembodyfont{\rmfamily}}
\theoremstyle{break}\theorembodyfont{\rmfamily}}
\newcounter{claim}
\newenvironment{proof}[1][]%
	{\noindent {\setcounter{claim}{0}\it Proof. }{#1}{}}{\qed\vspace{2ex}}
\newenvironment{claim}[1][]%
	{\refstepcounter{claim}\vspace{1ex}\noindent {(\it\arabic{claim}) {#1}{}}\it}{\vspace{1ex}}
\newenvironment{proofclaim}[1][]%
	{\noindent {}{#1}{}}{ This proves~(\arabic{claim}).\vspace{1ex}}
\newenvironment{customthm}[1]
  {\innercustomthm}
  {\endinnercustomthm}
\renewcommand{\theenumi}{(\roman{enumi})}
\renewcommand{\labelenumi}{\theenumi
}
\author{Maria Chudnovsky\thanks{Princeton University. This material is
    based upon work supported in part by the U. S. Army Research
    Office under grant number W911NF-16-1-0404, and by NSF grant
    DMS-1763817.  }~, St\'ephan Thomass\'e~\thanks{Univ Lyon, EnsL,
    UCBL, CNRS, LIP, F-69342, LYON Cedex 07, France. Partially
    supported by the LABEX MILYON (ANR-10-LABX-0070) of Universit\'e
    de Lyon, within the program ‘‘Investissements d'Avenir’’
    (ANR-11-IDEX-0007) operated by the French National Research Agency
    (ANR)}~, Nicolas Trotignon\footnotemark[2]~~and Kristina~Vu\v
  skovi\'c\thanks{School of Computing, University of Leeds, UK and
    Faculty of Computer Science (RAF), Union University, Belgrade,
    Serbia.  Partially supported by EPSRC grant EP/N0196660/1, and
    Serbian Ministry of Education and Science projects 174033 and
    III44006.}}
\title{Maximum independent sets in (pyramid, even hole)-free graphs}
\begin{document}

\maketitle

\begin{abstract}
  A \emph{hole} in a graph is an induced cycle with at least 4
  vertices.  A graph is \emph{even-hole-free} if it does not contain
  a hole on an even number of vertices.  A \emph{pyramid} is a graph
  made of three chordless paths $P_1 = a \dots b_1$,
  $P_2 = a \dots b_2$, $P_3 = a \dots b_3$ of length at least~1, two
  of which have length at least 2, vertex-disjoint except at $a$, and
  such that $b_1b_2b_3$ is a triangle and no edges exist between the
  paths except those of the triangle and the three edges incident with
  $a$.

  We give a polynomial time algorithm to compute a maximum weighted
  independent set in a even-hole-free graph that contains no pyramid
  as an induced subgraph.  Our result is based on a decomposition
  theorem and on bounding the number of minimal separators.  All our
  results hold for a slightly larger class of graphs, the class of
  (square, prism, pyramid, theta, even wheel)-free graphs.
\end{abstract}

\section{Introduction}

In this article, graphs are finite and simple.  A \emph{hole} in a
graph is an induced cycle with at least 4 vertices. The \emph{length}
of a hole is the number of vertices in it.  A graph $G$
\emph{contains} a graph $H$ if some induced subgraph of $G$ is
isomorphic to $H$.  A graph $G$ is \emph{$H$-free} if it does not
contain $H$.  When $\cal H$ is a set of graphs, $G$ is
\emph{$\cal H$-free} if it is $H$-free for all $H$ in $\cal H$.

The class of even-hole-free graphs was the object of much research
(see \cite{vuskovic:evensurvey} for a survey). However, the complexity
of computing a maximum independent set in an even-hole-free graph is not
known.

A \emph{pyramid} is a graph made of three chordless paths
$P_1 = a \dots b_1$, $P_2 = a \dots b_2$, $P_3 = a \dots b_3$ of
length at least~1, two of which have length at least 2, vertex-disjoint
except at $a$, and such that $b_1b_2b_3$ is a triangle and no edges
exist between the paths except those of the triangle and the three
edges incident with $a$.  See Fig.~\ref{f:tc}. 

\begin{figure}
  \begin{center}
    \includegraphics[height=2cm]{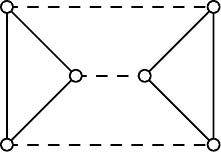}
    \hspace{.2em}
    \includegraphics[height=2cm]{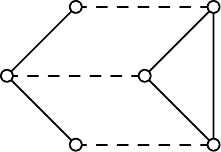} 
    \hspace{.2em}
    \includegraphics[height=2cm]{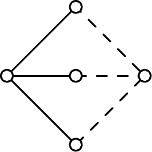}
    \hspace{.2em}
    \includegraphics[height=2cm]{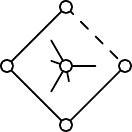}
  \end{center}
  \caption{Prism, pyramid, theta and wheel (dashed lines represent
    paths)\label{f:tc}}
\end{figure}

Our main result is a polynomial time algorithm to compute a maximum
weighted independent set in an (even-hole, pyramid)-free graph.  Our
approach is by first proving a decomposition theorem for the class of
(even-hole, pyramid)-free graph.  This theorem might have other
applications because the presence of a pyramid in an even-hole-free
graphs places significant restrictions on its structure. The graphs
seems to "organize itself" around the pyramid, in a way that can
likely be exploited algorithmically. Results in this direction appear
in Chudnovsky and Seymour~\cite{chudSey:bisimplicial}. So our result
on the pyramid-free case might help to understand the full class of
even-hole-free graphs.  We use our decomposition theorem to prove that
(even-hole, pyramid)-free graph contain polynomially many minimal
separators (to be defined in the next section).  In fact, we prove
this property for a slightly larger class of graphs, namely the
(theta, pyramid, prism, even wheel, square)-free graphs.  And as we
explain in the next section, this property implies the existence of a
polynomial time algorithm to compute maximum weighted independent
sets.

In section~\ref{s:results} we state formally our main results and
motivate them further.  In section~\ref{s:decomp}, we prove the
decomposition theorem.  In section~\ref{s:cutset}, we give several
properties of minimal separators in our class of graphs. In
section~\ref{s:final}, we prove that graphs in our class
contain polynomially many minimal separators.

\subsection*{Notation}

Let $G$ be a graph. By a \emph{path} we mean a chordless (or induced)
path. When $P$ is a path in $G$, we denote by $P^*$ the path induced
by the internal vertices of $P$.  When $a$ and $b$ are vertices of a
path $P$, we denote by $aPb$ the subpath of $P$ with ends $a$ and $b$.
A \emph{clique} in a graph is a set of pairwise adjacent vertices.

When $A, B\subseteq V(G)$, we denote by $N_B(A)$ the
set of vertices of $B\sm A$ that have at least one neighbor in $A$
and $N(A)$ means $N_{V(G)}(A)$.  Note that $N_B(A)$ is disjoint from
$A$.  We write $N(a)$ instead of $N(\{a\})$ and $N[a]$ for $\{a\}\cup
N(a)$.  We denote by $G[A]$ the subgraph of $G$ induced by $A$.  To
avoid too heavy notation, since there is no risk of confusion, when $H$
is an induced subgraph of $G$, we write $N_H$ instead of $N_{V(H)}$. 

A vertex $x$ is \emph{complete} (resp.\ \emph{anticomplete}) to $A$ if
$x\notin A$ and $x$ is adjacent to all vertices of $A$ (resp.\ to no
vertex of $A$). We say that $A$ is \emph{complete} (resp.\
\emph{anticomplete}) to $B$ if every vertex of $A$ is complete
(resp. anticomplete) to $B$ (note that this means in particular that
$A$ and $B$ are disjoint).

\section{Results}
\label{s:results}

Let $G$ be a graph and $a,b \in V(G)$. A set $C\subseteq V(G)$ is an
\emph{minimal $(a,b)$-separator} if $a$ and $b$ are in disctint
components of $G\sm C$ and $C$ is minimal with this property. We say
that $C$ is a \emph{minimal separator} if $C$ is a minimal
$(a,b)$-separator for some pair $a,b$.

It is easy to check that a minimal separator in a graph $G$ can be
equivalently defined as a set $C\subseteq V(G)$ such that $G\sm C$ has
a connected component $L$ and a connected component $R$ such that
every vertex of $C$ has neighbors in both $L$ and $R$.  Note that
$G\sm C$ has possibly more connected components.

Say that a class $\cal C$ of graphs has the \emph{polynomial separator
  property} if there exists $b_{\cal C}$ such that every graph $G$ in
$\cal C$ has at most $|V(G)|^{b_{\cal C}}$ minimal separators.  As
explained by Chudnovsky, Pilipczuk, Pilipczuk and Thomass\'e
in~\cite{DBLP:journals/corr/abs-1903-04761} (see also
the end of Section~\ref{p:compAnalysis}), it follows from results of Bouchit\'e and
Todinca~\cite{DBLP:journals/siamcomp/BouchitteT01,DBLP:journals/tcs/BouchitteT02}
that for any class of graphs, having the polynomial separator property
implies that the Maximum Weighted Independent Set Problem can be
solved in polynomial time.

We are therefore interested in finding classes of graphs where the
number of minimal separators is bounded by some polynomial.  To gain
insight on this question, let us survey examples of graphs with
exponentially many minimal separators.

For an integer $k\geq 1$, the \emph{$k$-prism} is the graph consisting
of two cliques on $k$ vertices, and a $k$-edge matching between
them. More precisely, the $k$-prism $G$ has vertex set
$\{a_1, \dots, a_k, b_1,\dots ,b_k\}$, each of the sets
$\{a_1,\dots ,a_k\}$ and $\{b_1,\dots, b_k\}$ is a clique,
$a_ib_i \in E(G)$ for every $i\in \{1,\dots, k\}$, and there are no
other edges in $G$.  See Fig.~\ref{f:kstuff}. As observed
in~\cite{DBLP:journals/corr/abs-1903-04761}, it is easy to check that
a $k$-prism has $2^k-2$ minimal separators.  This suggests that not
containing a big matching plays a role in bounding the number of
minimal separators, and indeed a simple theorem can be proved in this
direction. Call \emph{$k$-semi-induced matching} any graph whose
vertex set can be partitoned into two sets $X= \{x_1, \dots, x_k\}$ and
$Y= \{y_1, \dots, y_k\}$ such that the only edges between $X$ and $Y$
are the edges $x_iy_i$ ($i=1, \dots, k$).  The edges among vertices of
$X$ and vertices of $Y$ are unrestricted.

\begin{theorem}
  \label{th2}
  For every $k$, every graph $G$ on $n$ vertices that contains no
  $k$-semi-induced matching has at most $O(n^{2k-2})$ minimal
  separators that can be enumerated in time $O(n^{2k})$.
\end{theorem}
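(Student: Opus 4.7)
The plan is to show that every minimal separator $C$ of $G$ can be encoded by a pair of small vertex sets, one on each side of $C$. Given such a $C$, let $L$ and $R$ be two full components of $G\sm C$, so every vertex of $C$ has a neighbor in each of $L$ and $R$. I would then pick $A \subseteq L$ inclusion-minimal with $C \subseteq N(A)$, and symmetrically $B \subseteq R$ inclusion-minimal with $C \subseteq N(B)$. Such sets exist because $L$ and $R$ each dominate $C$ entirely.

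The main (and in my view the only nontrivial) step is to prove $|A|, |B| \leq k-1$. Here I would exploit inclusion-minimality of $A$: for every $a \in A$ there must be a private witness $c_a \in C$ whose unique neighbor in $A$ is $a$, for otherwise $A \sm \{a\}$ would already dominate $C$. If $|A| \geq k$, then picking $a_1, \dots, a_k \in A$ together with their distinct witnesses $c_1, \dots, c_k \in C$ gives edges $a_i c_i \in E(G)$ and non-edges $a_i c_j \notin E(G)$ whenever $i \neq j$. Setting $X = \{a_1, \dots, a_k\}$ and $Y = \{c_1, \dots, c_k\}$, this is exactly a $k$-semi-induced matching (the adjacencies within $X$ and within $Y$ are irrelevant), contradicting the hypothesis on $G$. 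Hence $|A| \leq k-1$, and symmetrically $|B| \leq k-1$. This private-witness extraction is the heart of the argument, and where I would be most careful.

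To recover $C$ from $(A, B)$, I would observe that $A \subseteq L$ and $L$ is a connected component of $G \sm C$, so $N(A) \subseteq L \cup C$; symmetrically $N(B) \subseteq R \cup C$. Since $L$ and $R$ are disjoint components of $G \sm C$, this gives $N(A) \cap N(B) \subseteq C$, and the reverse inclusion holds by the domination property defining $A$ and $B$. Hence $C = N(A) \cap N(B)$, and each minimal separator is determined by some pair $(A, B)$ with $|A|, |B| \leq k-1$.

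It remains to count and to enumerate. The number of such pairs is at most $\left(\sum_{i=0}^{k-1} \binom{n}{i}\right)^2 = O(n^{2k-2})$, which bounds the number of minimal separators. For enumeration within $O(n^{2k})$, I would iterate over all such pairs $(A, B)$, compute $N(A) \cap N(B)$, and check in $O(n^2)$ time by two BFS runs whether the resulting set is a minimal separator (i.e.\ whether $G$ minus the candidate has two components each containing a neighbor of every candidate vertex), retaining only those that pass and removing duplicates.
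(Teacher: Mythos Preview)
Your proof is correct and follows essentially the same approach as the paper's: choose inclusion-minimal dominating subsets in two full components, use private witnesses to bound their size below $k$ via the forbidden $k$-semi-induced matching, recover $C$ as the intersection of their neighborhoods, and count/enumerate pairs. The paper's argument is identical up to notation (it calls your $L,R$ the components $A,B$ and your $A,B$ the sets $X_A,X_B$); you even supply the verification $N(A)\cap N(B)\subseteq C$ that the paper leaves implicit.
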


\begin{proof}
  Let $a$ and $b$ be two non-adjacent vertices in a graph $G$ that
  does not contain a $k$-semi-induced matching, and let $C$ be a
  minimal separator separating them.  Call $A$ and $B$ the components
  of $G\sm C$ that contain $a$ and $b$ respectively.  By minimality of
  $C$, every vertex in $C$ has a neighbor in $A$.  It is therefore
  well defined to consider an inclusion-wise minimal subset $X_A$ of
  $A$ such that $C\subseteq N(X_A)$.  For every $x\in X_A$, there
  exists a vertex $c\in C$ such that $xc\in E(G)$ and no other vertex
  of $X_A$ is adjacent to $c$.  For otherwise, $X_A\sm \{x\}$ would
  contradict the minimality of $X_A$.  It follows that $G[X_A \cup C]$
  contains an $|X_A|$-semi-induced matching, so $|X_A|< k$.  We may
  define a similar set $X_B$, and we observe that
  $C = N(X_A) \cap N(X_B)$.

  From the previous paragraph, the following algorithm enumerates all
  minimal separators of $G$: for every pair of sets $X_A, X_B$ of
  cardinality less than $k$, compute $C = N(X_A) \cap N(X_B)$ and
  check whether $C$ is a minimal separator.  Since
  ${n\choose i} \leq n^i$, we have
  ${n \choose 0}+\cdots+{n \choose k-1}\leq kn^{k-1}$, so the algorithm
  enumerates at most $O(n^{2k-2})$ minimal separators in time
  $O(n^{2k})$.
 \end{proof}

 Results in other directions can be proved. Chudnovsky, Pilipczuk,
 Pilipczuk, and Thomass\'e~\cite{DBLP:journals/corr/abs-1903-04761}
 proved a graph $G$ that contains no $k$-prism and no hole of length
 at least~5 has at most $|V(G)|^{k+2}$ minimal separators.  But since
 we are interested in even-hole-free graphs, we do not want to exclude
 odd holes.

\begin{figure}
  \begin{center}
  \includegraphics{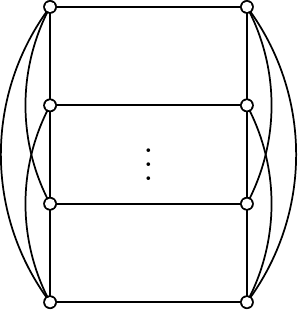}
  \hspace{1.5em}
  \includegraphics{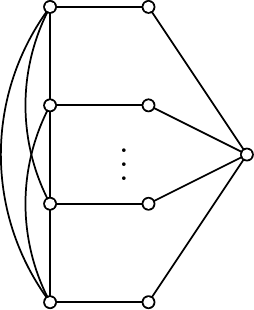}
  \hspace{1.5em}
  \includegraphics{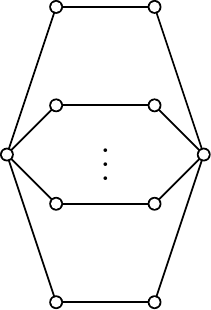}
  \end{center}
  \caption{$k$-prism, $k$-pyramid, $k$-theta\label{f:kstuff}}
\end{figure}

In Fig.~\ref{f:kstuff}, variants of $k$-prims are shown. They are
obtained from $k$-prisms by subdividing the matching edges
(once or twice) and contracting one or two of the cliques into a single
vertex.  We call these graphs $k$-pyramids and $k$-thetas. They are all
easily checked to contain exponentially many minimal separators, and
we do not define them more formally.

From these three examples, we can see that the so-called \emph{3-path
  configurations} are maybe important to understand minimal
separators. They are defined as being the pyramids (that we already
know) and the thetas and prisms that we define now (see
Fig.~\ref{f:tc}). 

A \emph{theta} is a graph made of three internally vertex-disjoint
chordless paths $P_1 = a \dots b$, $P_2 = a \dots b$,
$P_3 = a \dots b$ of length at least~2 and such that no edges exist
between the paths except the three edges incident with $a$ and the three
edges incident with $b$.  

A \emph{prism} is a graph made of three vertex-disjoint chordless paths
$P_1 = a_1 \dots b_1$, $P_2 = a_2 \dots b_2$, $P_3 = a_3 \dots b_3$ of
length at least 1, such that $a_1a_2a_3$ and $b_1b_2b_3$ are triangles
and no edges exist between the paths except those of the two
triangles.

The examples of graphs with exponentially many separators that we have
shown so far all contain a theta, a pyramid or a prism.  But excluding
them is not enough to guaranty a polynomial number of minimum
separators. In Fig.~\ref{f:turtle} a construction of graphs with no
theta, no pyramid and no prism with exponentially many minimal
separator is shown.  We call this construction a \emph{$k$-turtle},
and again it is easy to check that a $k$-turtle contains exponentially
many minimal separators.

\begin{figure}
  \begin{center}
  \includegraphics[height=4cm]{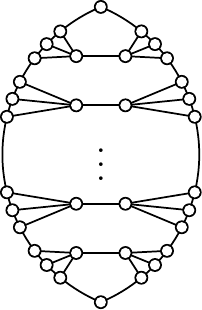}
  \hspace{1.5em}
  \includegraphics[height=4cm]{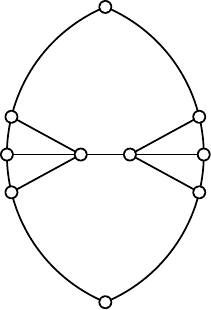}
  \hspace{1.5em}
  \includegraphics[height=4cm]{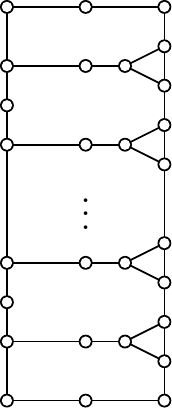}
\end{center}
\caption{$k$-turtle, turtle and $k$-ladder\label{f:turtle}}
\end{figure}

The $k$-turtles suggest defining a \emph{turtle} as a graph made of
two internally vertex-disjoint chordless paths $P_1 = u \dots v$,
$P_2 = u \dots v$ that form a hole. Moreover, there are two adjacent
vertices $x, y$ not in the paths, such that $x$ has at least three
neighbors in $P_1$ (and none in $P_2$) and $y$ has at least three
neighbors in $P_2$ (and none in $P_1$), see Fig.~\ref{f:turtle}.  A
$k$-turtle contains a turtle, in the same way as a $k$-prism contains
a prism, a $k$-pyramid contains a pyramid and a a $k$-theta contains a
theta.  In Fig.~\ref{f:turtle} is also represented a construction that
we call the $k$-ladder, that provides examples of even-hole-free
graphs with maximum degree~3 and exponentially many minimal
separators.  

Since we are not able to imagine examples of graphs with exponentially
many minimal separators containing no prism, pyramid, theta or turtle,
we propose the following conjecture, that would be in some sense the
best possible statement regarding bounding the number of minimal
separators.  

\begin{conjecture}
  \label{conj}
  There is a polynomial $P$ such that every graph $G$ that contains no
  prism, pyramid, theta or turtle has at most $P(|V(G)|)$ minimal
  separators.
\end{conjecture}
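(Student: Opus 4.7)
The plan is to combine the counting strategy of Theorem~\ref{th2} with a structural dichotomy driven by the forbidden configurations. Fix a minimal separator $C$ of $G$ with full components $L$ and $R$ on either side, and let $X_L \subseteq L$, $X_R \subseteq R$ be inclusion-minimal subsets whose neighborhoods cover $C$. As in the proof of Theorem~\ref{th2}, every $x \in X_L$ has a private neighbor $\pi(x) \in C$ adjacent to no other vertex of $X_L$, and similarly for $X_R$, and $C = N(X_L) \cap N(X_R)$. If we can show that $|X_L|$ and $|X_R|$ are both bounded by a constant $k_0$ depending only on the forbidden configurations, then the enumeration argument of Theorem~\ref{th2} immediately gives an $O(n^{2k_0})$ bound and hence the polynomial $P$ required by the conjecture.

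The heart of the proof would be a structural lemma: in any (prism, pyramid, theta, turtle)-free graph $G$, for any minimal separator $C$ and any choice of $X_L, X_R$ as above, $|X_L|, |X_R| \le k_0$. To prove it, assume $|X_L| \ge k$ for $k$ very large and label $X_L = \{x_1, \dots, x_k\}$ with private neighbors $c_i = \pi(x_i) \in C$. For each $i$, since $C \subseteq N(X_R)$, pick $y_i \in X_R$ with $c_i y_i \in E(G)$. Repeated applications of Ramsey's theorem to the adjacency patterns within $\{x_i\}$, $\{c_i\}$, $\{y_i\}$, and across these groups allow me to pass to a large sub-collection that is homogeneous in all these patterns, yielding a highly symmetric substructure linking $L$ and $R$ through $C$.

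Within this homogenized substructure, any triple of indices gives three internally disjoint paths from the $L$-side to the $R$-side, with the middle vertex of each path being a $c_i$ owning its own private endpoints. Depending on the adjacencies in $X_L$ and $X_R$ and on how the $c_i$ attach inside $C$, three such paths typically realize a theta (no triangles at the ends), a prism (triangles at both ends) or a pyramid (triangle at exactly one end); when those configurations are forbidden, the homogenization forces several of the $c_i$'s to share attachment points along a common hole built from an $L$-path and an $R$-path. This shared-attachment regime is precisely what a turtle describes, with the role of the two adjacent ``cap'' vertices $x,y$ played by the exceptional witnesses that accumulate multiple neighbors on the two sides of the hole.

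The main obstacle will be the last step: showing rigorously that after forbidding the three path-configurations, enough Ramsey-type homogenization survives to produce a genuine turtle rather than some hybrid configuration that escapes all four obstructions. I expect this to require a careful secondary analysis of how the paths in $L$ and $R$ are routed between homogenized witnesses, together with either an induction on $|C|$ or a refinement of the witness notion that controls not just the minimality of $X_L$ but the number of \emph{distinct} private neighbors assigned to each witness. One should also separately handle the degenerate case in which $G[L]$ or $G[R]$ is very ``short'' (say of bounded radius around $X_L$ or $X_R$), since then the paths used in the 3-path configuration may collapse; this case seems to admit a direct counting bound via small vertex cuts in $L$ or $R$.
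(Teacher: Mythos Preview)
This statement is Conjecture~\ref{conj}, which the paper explicitly poses as an \emph{open problem}; there is no proof in the paper to compare against. Your proposal is therefore an attempted resolution of an open conjecture, and as you yourself acknowledge, the decisive step is not carried out.

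The strategy has a concrete obstruction you should confront directly. Your plan is to show that the covering sets $X_L$, $X_R$ have bounded size, but the proof of Theorem~\ref{th2} shows that a large $X_L$ yields a large semi-induced matching inside $G[X_L \cup C]$, and the paper explicitly observes (at the end of the rankwidth discussion in Section~\ref{s:results}) that the class $\cal C$ --- a strict subclass of the graphs in the conjecture --- already contains graphs with arbitrarily large semi-induced matchings. This does not kill your plan outright, since those matchings need not arise from an $(X_L, C)$ pair attached to a minimal separator, but it does mean your Ramsey argument on $\{x_i\}, \{c_i\}, \{y_i\}$ cannot simply locate ``some'' forbidden configuration inside a semi-induced matching; it must exploit in an essential way that the $c_i$ lie in a minimal separator with full components on both sides. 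Your sketch does not use minimality of $C$ at that point.

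The case analysis you outline is also too coarse. Three length-two paths $x_i c_i y_i$ do not form a theta, prism, or pyramid: those configurations require three paths with \emph{common} endpoints (a vertex or a triangle at each end), whereas your $x_i$'s and $y_i$'s are distinct. Merging the endpoints forces you to route paths inside $L$ and $R$ and to control induced-ness and cross-adjacencies to the other $c_j$'s along those routes --- which is precisely where the difficulty of the conjecture lives. Producing a turtle is even more delicate: you need a single hole together with two \emph{adjacent} external vertices, each with at least three attachments on opposite arcs; nothing in your homogenization guarantees adjacency of the two would-be caps, nor that their attachments land on a common hole. Note finally that even for the subclass $\cal C$, where the paper does obtain a polynomial bound, the argument does not proceed by bounding $|X_L|$ but via the decomposition Theorem~\ref{th:struct} and the separator analysis of Sections~\ref{s:cutset}--\ref{s:final}; this is at least circumstantial evidence that the $X_L$-bounding route is not straightforward.
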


Since we are interested in even-hole-free graphs, it is worth
observing that every prism, theta and turtle contains an even hole.
For theta and prism, this is because at least two of the three paths
must have the same parity and therefore form an even hole.  For
turtles, it is because every turtle contains an even wheel. Let us
define them. 

A \emph{wheel} is a graph made of a hole $H$ called the \emph{rim} and
a vertex $v$ called the \emph{center} that has at least three
neighbors in $H$ (see Fig.~\ref{f:tc}).  An \emph{even wheel} is a
wheel whose center has an even number of neighbors in the rim.  It is
easy to check that every turtle contains an even wheel, and that every
even wheel contain an even hole.

A weakening of Conjecture~\ref{conj} is therefore obtained by
restricting it to (prism, pyramid, theta, even wheel)-free graphs.
Note that (prism, theta, even wheel)-free graphs have been studied
under the name of \emph{odd-signable graphs} and they seem to capture
essential properties of even-hole-free graphs, for more about them see
the survey of Vu\v skovi\'c~\cite{vuskovic:evensurvey}.
Interestingly, prisms, pyramids, thetas and wheels are called
\emph{Truemper configurations} and they play an important role in many
decomposition theorems for classes of graphs,
see~\cite{vuskovic:truemper} for a survey.  But we were not able to
prove that (prism, pyramid, theta, even wheel)-free graphs have
polynomially many minimal separators.  However, we can prove that if
we also exclude \emph{squares} (holes of length~4), then the number of
minimal separators is polynomially bounded.

We call $\cal C$ the class of (square, prism, pyramid, theta, even
wheel)-free graphs.  Observe that $\cal C$ is a superclass of the
class of (even hole, pyramid)-free graphs.  Here is our main result
(proved in Section~\ref{s:final}).

\begin{theorem}
  \label{th:main}
  Every graph in $\cal C$ on $n$ vertices contains at most $O(n^{8})$
  minimal separators.  There is an algorithm of complexity~$O(n^{10})$
  that enumerates them. Consequently, there exists a polynomial time
  algorithm for the Maximum Weighted Independent Set restricted to
  $\cal C$.
\end{theorem}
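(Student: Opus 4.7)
The plan is to reduce the theorem to a structural encoding of minimal separators. Following the strategy of the proof of Theorem~\ref{th2}, I would show that for every minimal separator $C$ of a graph $G\in\mathcal{C}$, with full-side components $L$ and $R$ of $G\sm C$ (each vertex of $C$ having a neighbor in $L$ and in $R$), there exist sets $X_L\subseteq L$ and $X_R\subseteq R$ of bounded size with $C=N(X_L)\cap N(X_R)$. If one can guarantee $|X_L|,|X_R|\leq 4$, the $O(n^{8})$ bound follows by enumerating all pairs $(X_L,X_R)$, and the $O(n^{10})$ enumeration time is obtained by computing $N(X_L)\cap N(X_R)$ in $O(n^2)$ per pair and verifying via a BFS that the candidate is actually a minimal separator. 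The consequence for Maximum Weighted Independent Set is then immediate from the Bouchitt\'e--Todinca framework cited in Section~\ref{s:results}.

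The real content is producing this small encoding, and this is where I would invoke the previous two sections. First I would apply the decomposition theorem of Section~\ref{s:decomp} to reduce to basic pieces around which separators are organised, and then use the properties of minimal separators established in Section~\ref{s:cutset} to control how $C$ meets $L$ and $R$. The rough heuristic is that excluding prisms, pyramids and thetas cuts off the ``three-path'' attachment patterns that drive the exponential separator counts in the $k$-prism, $k$-pyramid and $k$-theta examples of Fig.~\ref{f:kstuff}, while excluding even wheels and squares removes the ``turtle/ladder'' patterns of Fig.~\ref{f:turtle}; together these exclusions should force the minimal private-neighbor witnesses $X_L,X_R$ (in the sense of the proof of Theorem~\ref{th2}) to be short.

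The main obstacle will be handling these five exclusions simultaneously to obtain a uniform constant bound on $|X_L|$ and $|X_R|$. Concretely, I would start from an inclusion-wise minimal $X_L\subseteq L$ with $C\subseteq N(X_L)$ and, for each $x\in X_L$, pick a private neighbour $c_x\in C$. If $X_L$ were large, the resulting semi-induced matching $\{xc_x\}$, combined with a path from each $c_x$ through $R$ (obtainable because $C$ is a minimal separator, so every $c_x$ also has a neighbour in $R$), should force one of the forbidden structures: a long prism/pyramid/theta if several $x$'s are connected through $L$, an even wheel or square if they share neighbourhoods in a structured way, or one of the configurations ruled out in Section~\ref{s:cutset}. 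A careful case analysis using the decomposition theorem should rule out $|X_L|\geq 5$, and the symmetric argument gives $|X_R|\leq 4$, yielding the $O(n^{8})$ bound; the algorithmic and MWIS statements then follow immediately as described above.
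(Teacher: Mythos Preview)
Your proposal has a genuine gap: it is a plan rather than a proof, and the plan rests on exactly the mechanism the paper warns does not go through. You want to encode every proper separator as $C=N(X_L)\cap N(X_R)$ with $|X_L|,|X_R|\le 4$, obtaining the bound on $|X_L|$ from the private-neighbour semi-induced matching of Theorem~\ref{th2}. But at the end of Section~\ref{s:results} the paper explicitly constructs graphs in $\mathcal{C}$ (the pyramid-free modifications of the graphs from~\cite{adlerLMRTV:rwehf}) that contain arbitrarily large semi-induced matchings, and concludes that ``our main result cannot be a simple corollary of Theorem~\ref{th2}.'' So a $5$-semi-induced matching between $X_L$ and private neighbours in $C$ does not, by itself, force any of the excluded configurations; the ``careful case analysis'' you defer is not a detail but the whole theorem, and you give no indication of how to carry it out.

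The paper's proof is organised quite differently and does not use a common-neighbourhood encoding at all. The eight parametrising vertices are two non-adjacent vertices $c_1,c_2\in C$ chosen of maximum potential, their four hole-neighbours $l_1,r_1,l_2,r_2$, and two auxiliary vertices $c'_1\in C_1$, $c'_2\in C_2$. From this $8$-tuple $C$ is \emph{reconstructed} algorithmically: heavy vertices are recovered through a connectivity test justified by Theorem~\ref{th:struct}; a clean $(C,c_1,c_2)$-hole $H$ is recomputed and all vertices major for $H$ are added; then the trichotomy of Lemma~\ref{l:main} together with Lemmas~\ref{l:candidates} and~\ref{l:Cmonotonous} recovers the remaining vertices of $C$ (those in the cliques $C_i$ and the viaduct endpoints in $L_i$ or $R_i$). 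None of these steps is of the form ``intersect two small neighbourhoods''; Section~\ref{s:cutset} is there precisely because the simple encoding you propose does not suffice.
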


To prove Theorem~\ref{th:main}, we rely on a decomposition theorem for
$\cal C$.  To state it, we need terminology. When $H$ is a hole in
some graph and $u$ is a vertex not in $H$ with at least two neighbors
in $H$, we call \emph{$u$-sector} of $H$ any path of $H$ of length at
least~1, whose ends are adjacent to $u$ and whose internal vertices
are not. Observe that $H$ is edgewise partitioned into its
$u$-sectors.  Let $H$ be a hole in a graph and let $u$ be a vertex not
in $H$. We say that $u$ is \emph{major} w.r.t.\ $H$ if $N_H(u)$ is not
included in a 3-vertex path of $H$.  The decomposition theorem is the
following (proved in Section~\ref{s:decomp}). 

\begin{theorem}
  \label{th:struct}
  Let $G$ be a graph in $\cal C$, $H$ a hole in $G$ and $w$ a major
  vertex w.r.t.\ $H$. If $C$ is a connected component of $G\sm N[w]$,
  then there exists a $w$-sector $P=x\dots y$ of $H$ such that
  $N(C) \subseteq \{x, y\} \cup (N(w) \sm V(H))$.
\end{theorem}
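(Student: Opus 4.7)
The conclusion is equivalent to showing that $T := N(C) \cap V(H)$ is contained in the endpoint set $\{x, y\}$ of some $w$-sector. Indeed, since $C$ is a connected component of $G \sm N[w]$, any vertex with a neighbor in $C$ lies in $N(w)$, and therefore $N(C) \sm V(H) \subseteq N(w) \sm V(H)$ holds automatically. If $|N_H(w)| = 2$, the two $w$-sectors share their endpoints, so the claim is trivial. Hence assume $|N_H(w)| \geq 3$; then $(H, w)$ is a wheel, and since $\cal C$ forbids even wheels, $|N_H(w)|$ is odd.

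Suppose for contradiction that there exist $u_1, u_2 \in T$ that are not endpoints of a common sector. They are non-consecutive in $N_H(w)$, hence non-adjacent in $G$ (adjacency on the induced hole $H$ would make them the ends of a length-$1$ sector). Let $Q$ be a shortest path from $u_1$ to $u_2$ in $G[\{u_1, u_2\} \cup C]$; it exists because $u_1, u_2$ have neighbors in $C$, and it is induced.

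If $|Q| = 2$ with interior vertex $q \in C$, then $q \not\sim w$, so $\{u_1, w, u_2, q\}$ induces a square, contradicting $G \in \cal C$. Hence $|Q| \geq 3$; the minimality of $Q$ together with the fact that $w$ is anticomplete to $C$ imply that $u_1 w u_2 \cup Q$ is an induced hole of length $\geq 5$. Now let $H_a, H_b$ be the two arcs of $H$ from $u_1$ to $u_2$ and $m_a, m_b$ the numbers of $w$-sectors they contain; then $m_a + m_b = |N_H(w)|$ is odd and $m_a, m_b \geq 2$ (by non-consecutivity), so one of them, say $m_a$, is odd and $\geq 3$. Thus $H_a$ carries $m_a + 1$ (even, $\geq 4$) vertices of $N(w)$. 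If $H_a \cup Q$ is an induced hole, then $w$ lies outside it and is anticomplete to $Q^* \subseteq C$, so it has exactly $m_a + 1$ neighbors on $H_a \cup Q$ and is therefore the center of an even wheel --- contradicting $G \in \cal C$.

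The main obstacle is guaranteeing that $H_a \cup Q$ is genuinely an induced hole: $Q^*$ could share vertices with $V(H)$ inside the interior of some $C$-sector, or carry chords to $H_a^*$. I plan to handle this by an extremal choice of $(u_1, u_2) \in T$ and $Q$, first minimizing $|Q|$ and then the number of incidences between $Q^*$ and $V(H)$ outside $\{u_1, u_2\}$. Any surviving chord or shared vertex either produces a new element of $T$, yielding a strictly better bad pair for recursion, or forces its $H$-endpoint into the interior of a $C$-sector whose two endpoints already lie in $T$, again producing a shorter bad pair. Since $m_a$ and $m_b$ have opposite parities, the even-wheel argument applies symmetrically to whichever of $H_a \cup Q$, $H_b \cup Q$ has been rendered clean by the extremal choice, closing the descent.
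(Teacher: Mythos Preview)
There is a concrete gap in your reduction step. You claim that if the conclusion fails then there exist $u_1,u_2\in T$ that are \emph{not} the endpoints of a common $w$-sector. This is false precisely in the case $|N_H(w)|=3$ and $T=N_H(w)$: then every two elements of $T$ \emph{are} the endpoints of one of the three sectors, so no ``bad pair'' exists, yet no single sector contains all three of them and the theorem's conclusion is violated. (By Lemma~\ref{l:vHole} this is a genuine case, not an empty one.) The paper treats this configuration explicitly as a separate case, producing a square or a theta; your argument simply does not see it.

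Even when $|N_H(w)|\ge 5$, the ``cleaning'' sketch is not a proof. Two specific problems: first, if some $q\in Q^*$ is adjacent to a vertex $h\in N_H(w)\setminus\{u_1,u_2\}$, then indeed $h\in T$, but if $u_1,h,u_2$ are three consecutive members of $N_H(w)$ then neither $(u_1,h)$ nor $(h,u_2)$ is a bad pair, so no ``strictly better bad pair'' is produced. Second, when $q$ has a neighbour $h$ in the \emph{interior} of a $w$-sector lying on the odd side $H_a$ (such $h$ lies in $C$, not in $T$), you give no mechanism to reroute or to conclude that the sector's endpoints are in $T$; the claim that this ``forces its $H$-endpoint into the interior of a $C$-sector whose two endpoints already lie in~$T$'' is unjustified. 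The paper avoids these difficulties by first proving a stronger path-interception statement (Lemma~\ref{l:pathStrong}): for \emph{any} four vertices $a,w',b,w''$ in this cyclic order on $H$ with $w',w''\in N(w)$, every $a$--$b$ path in $G\setminus w$ meets $N(w)$ internally. The generality (allowing $a,b\notin N(w)$) is exactly what makes the minimal-counterexample recursion close, and the proof relies on the attachment analysis of Lemmas~\ref{l:vHole}--\ref{l:zeroNested}; the theorem then follows in a few lines, with the three-neighbour case handled separately.
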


\subsection*{Lower bounds for the number of minimal separators in
  $\cal C$}

For every integer $k$, there exits a graph in $\cal C$ with at least
$O(k^2)$ minimal separators, and this is the best lower bound that we
have so far.  A simple example of this phenomenon is a chordless cycle
of length $k\geq 5$ (any pair of non-adjacent vertices is a minimal
separator). Another example $G_k$ is maybe worth mentioning because
it does not contain holes of length greater than~5.  Let us describe
$G_k$.  Consider four cliques $X$, $Y$, $X'$ and $Y'$, each on $k$
vertices.  Set $X = \{x_1, \dots, x_k\}$, $Y = \{y_1, \dots, y_k\}$,
$X' = \{x'_1, \dots, x'_k\}$, $Y' = \{y'_1, \dots, y'_k\}$.  Add a
vertex $z$. Add all possible edges between $z$ and $X \cup X'$.  Add
all possible edges between $Y$ and $Y'$.  For every $i=1, \dots, k$,
add all possible edges from $x_i$ to
$\{y_{k-i+1}, y_{k-i+2}, \dots, y_k\}$ and all possible edges from
$x'_i$ to $\{y'_{k-i+1}, y'_{k-i+2}, \dots, y'_k\}$.  These are all
the vertices and edges of $G_k$, see Fig.~\ref{f:g4} where $G_4$ is represented.

\begin{figure}
  \begin{center}
  \includegraphics{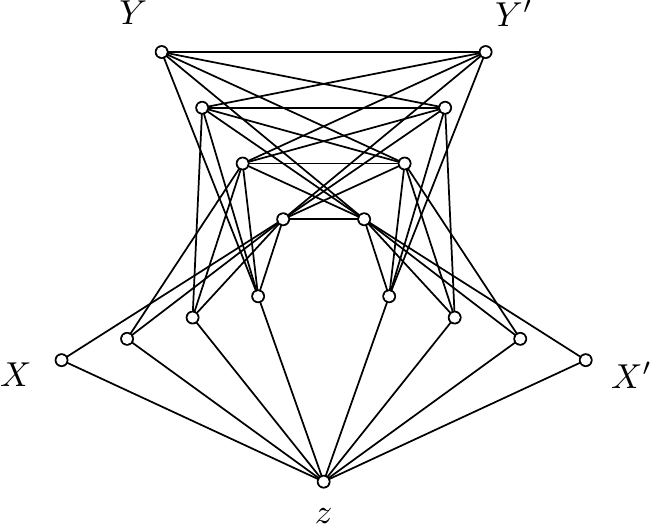}
  \end{center}
  \caption{Graph $G_4$\label{f:g4} (edges of the cliques $X$, $Y$,
  $X'$ and $Y'$ are not represented)}
\end{figure}

It is straightforward to check that $G_k\in \cal C$.  To do so, it is
convenient to note that every hole $H$ in $G_k$ must go through $z$
and contains exactly one vertex in each of the sets $X, X', Y$ and
$Y'$.  So every hole in $G_k$ has length~5.  Since squares, even
wheels, thetas and prisms all contain even holes, the only obstruction
that may exist in $G_k$ is the pyramid. But $G_k$ cannot contain it
since in a pyramid there exists a vertex whose neighborhood contains
three non-adjacent vertices, and this does not exist in $G_k$.

For every $i\in \{1, \dots, k\}$, set
$C_i = \{x_i, x_{i+1}, \dots, x_k\} \cup \{y_{k-i+2}, y_{k-i+3},
\dots, y_k\}$.  Note that for $i\in \{1, \dots, k\}$, $y_1\notin C_i$ and 
$C_1\cap Y = \emptyset$.  For every $j\in \{1, \dots, k\}$, set
$C'_j = \{x'_j, x'_{j+1}, \dots, x'_k\} \cup \{y'_{k-j+2}, y'_{k-j+3},
\dots, y'_k\}$.  It is now easy to check that $C_i\cup C'_j$ is a
minimal separator (separating $z$ from $y_1$) for every pair $(i, j)$ in
$\{1, \dots, k\}^2$.  Since
$|V(G_k)| = 4k+1$ and there are $k^2$ pairs $(i, j)$ in
$\{1, \dots, k\}^2$, $G_k$ has $O(k^2)$ minimal separators.

\subsection*{Rankwidth and semi-induced matchings in $\cal C$}

One may suspect that graphs in $\cal C$ are very ``simple'' in some
sense (in which case our result would be less interesting). And it is
not so easy to exhibit graphs from $\cal C$ that are ``complex'', so
it is worth explaining here how to build such graphs.  To measure the
complexity of a graph we use the notion of rankwidth, that is
equivalent to the notion of cliquewidth, in the sense that a class of
graphs has unbounded rankwidth if and only if it has unbounded
cliquewidth (see~\cite{DBLP:journals/corr/abs-1901-00335} for more
about cliquewidth).

To provide graphs in $\cal C$ of arbitrarily large rankwidth, it
suffices to note that every hole-free graph (better known as
\emph{chordal graphs}) is in $\cal C$. Chordal graphs are known to
have unbounded rankwidth. But chordal graphs are in some sense
``simple'': they are all complete graphs or have clique separators,
and many problems can be solved in polynomial time for them
(see~\cite{vuskovic:truemper} for more about that).

In~\cite{adlerLMRTV:rwehf} Adler, Le, M{\"{u}}ller, Radovanovi\'c,
Trotignon and Vu\v skovi\'c describe even-hole-free graphs of
arbitrarily large rankwidth.  They are also diamond-free (where the
\emph{diamond} is the graph on vertices $a, b, c, d$ with all possible
edges except $ab$) and they have no clique separator.  So, to the best
of our knowledge, they are ``complex''.  The only problem is that they
are not in $\cal C$ because they contain pyramids.  We now explain how
to modify graphs defined in~\cite{adlerLMRTV:rwehf} to obtain graphs
in $\cal C$.

\begin{figure}[htbp]
  \begin{center}  
    \newcommand{\NtoS}[1]{\ifcase #1 
          (0)                      \or 0a    \or 0b    
      \or (1)                      \or 1a              
      \or    (1,1)                 \or 11a             
      \or         (1,1,1)          \or 111a            
      \or                (1,1,1,1) \or 1111a \or 1111b 
      \or                (1,1,1,2) \or 1112a \or 1112b 
      \or                (1,1,1,3) \or 1113a \or 1113b 
      \or                (1,1,1,4) \or 1114a           
      \or         (1,1,2)          \or 112a  \or 112b  
      \or         (1,1,3)          \or 113a            
      \or                (1,1,3,1) \or 1131a \or 1131b 
      \or                (1,1,3,2) \or 1132a \or 1132b 
      \or                (1,1,3,3) \or 1133a \or 1133b 
      \or                (1,1,3,4) \or 1134a           
      \or         (1,1,4)          \or 114a            
      \or    (1,2)                 \or 12a   \or 12b   
      \or    (1,3)                 \or 13a             
      \or         (1,3,1)          \or 131a            
      \or                (1,3,1,1) \or 1311a \or 1311b 
      \or                (1,3,1,2) \or 1312a \or 1312b 
      \or                (1,3,1,3) \or 1313a \or 1313b 
      \or                (1,3,1,4) \or 1314a           
      \or         (1,3,2)          \or 132a  \or 132b  
      \or         (1,3,3)          \or 133a            
      \or                (1,3,3,1) \or 1331a \or 1331b 
      \or                (1,3,3,2) \or 1332a \or 1332b 
      \or                (1,3,3,3) \or 1333a \or 1333b 
      \or                (1,3,3,4) \or 1334a           
      \or         (1,3,4)          \or 134a            
      \or    (1,4)                 \or 14a             
      \or (2)                      \or 2a    \or 2b    
      \or (3)                      \or 3a              
      \or    (3,1)                 \or 31a             
      \or         (3,1,1)          \or 311a            
      \or                (3,1,1,1) \or 3111a \or 3111b 
      \or                (3,1,1,2) \or 3112a \or 3112b 
      \or                (3,1,1,3) \or 3113a \or 3113b 
      \or                (3,1,1,4) \or 3114a           
      \or         (3,1,2)          \or 312a  \or 312b  
      \or         (3,1,3)          \or 313a            
      \or                (3,1,3,1) \or 3131a \or 3131b 
      \or                (3,1,3,2) \or 3132a \or 3132b 
      \or                (3,1,3,3) \or 3133a \or 3133b 
      \or                (3,1,3,4) \or 3134a           
      \or         (3,1,4)          \or 314a            
      \or    (3,2)                 \or 32a   \or 32b   
      \or    (3,3)                 \or 33a             
      \or         (3,3,1)          \or 331a            
      \or                (3,3,1,1) \or 3311a \or 3311b 
      \or                (3,3,1,2) \or 3312a \or 3312b 
      \or                (3,3,1,3) \or 3313a \or 3313b 
      \or                (3,3,1,4) \or 3314a           
      \or         (3,3,2)          \or 332a  \or 332b  
      \or         (3,3,3)          \or 333a            
      \or                (3,3,3,1) \or 3331a \or 3331b 
      \or                (3,3,3,2) \or 3332a \or 3332b 
      \or                (3,3,3,3) \or 3333a \or 3333b 
      \or                (3,3,3,4) \or 3334a           
      \or        (3,3,4)           \or 334a            
      \or   (3,4)                  \or 34a             
      \or (4)                      \or 4a    \or 4b    
      \or (5)                                          
      \fi}

    \newcommand{\layer}[1]{\ifcase #1 0,155
      \or 3,76,79,152
      \or 5,38,41,74,81,114,117,150
      \or 7,20,23,36,43,56,59,72,83,96,99,112,119,132,135,148
      \or 9,12,15,18,25,28,31,34,45,48,51,54,61,64,67,70,85,88,91,94,%
          101,104,107,110,121,124,127,130,137,140,143,146
      \or 4,6,8,10,11,13,14,16,17,19,21,22,24,26,27,29,30,32,33,35,37,39,%
          40,42,44,46,47,49,50,52,53,55,57,58,60,62,63,65,66,68,69,71,73,75,%
          77,78,80,82,84,86,87,89,90,92,93,95,97,98,100,102,103,105,106,108,%
          109,111,113,115,116,118,120,122,123,125,126,128,129,131,133,134,%
          136,138,139,141,142,144,145,147,149,151
      \fi}

    \newcommand{\col}[1]{\ifcase #1 %
      black\or blue\or red\or green!70!black\or yellow!60!red\else none\fi}

    \tikzset{v/.style={circle, draw=black, minimum size=1.5mm, inner sep=0pt},
             n/.style={draw=none}
            }

    \begin{tikzpicture}
      \node[v, fill=\col{1}, label=180:$v_1$] (x1) at (195:2.0) {};
      \node[v, fill=\col{2}, label=135:$v_2$] (x2) at (135:2.5) {};
      \node[v, fill=\col{3}, label=105:$v_3$] (x3) at ( 65:2.0) {};
      \node[v, fill=\col{4}, label=  0:$v_4$] (x4) at (320:2.0) {};
      \edef\nlist{\layer{5}}
      \foreach \num in \nlist{
        \pgfmathsetmacro{\angle}{270-360/155*\num}
        \node[v] (\num) at (\angle : 6.0) {};
      }
      \foreach \ind in {4,3,2,1}{ 
        \edef\nlist{\layer{\ind}}
        \pgfmathsetmacro{\dista}{6.3+\ind/7}
        \foreach \num in \nlist{
          \pgfmathsetmacro{\angle}{270-360/155*\num}
          \node[v, fill=\col{\ind}] (\num) at (\angle:6.0) {};
          \draw[\col{\ind}] (x\ind)--(\num);
          \node[n,rotate=\angle] () at (\angle:\dista) {\NtoS{\num}};
        }
      }
      \foreach[count=\mum from 3] \num in {4,5,...,152} \draw (\num)--(\mum);
      \draw (x1)--(x2)--(x3)--(x4)--(x1)--(x3)  (x2)--(x4);
      \node[v, fill=\col{4}, label=  0:$v_4$] (x4) at (320:2.0) {};
    \end{tikzpicture}
  \end{center}
  \caption{A graph from~\cite{adlerLMRTV:rwehf} (the clique contains 4
  vertices)}
  \label{fig:G4}
\end{figure}

Graphs defined in~\cite{adlerLMRTV:rwehf} all vertex-wise partition
into a path $P$ and a clique $K$.  An example is represented in
Fig.~\ref{fig:G4}, where $|K| = 4$ and $P$ is represented as a circle
around $K$.  Every vertex of $K$ has neighbors
in $P$ and every vertex of $P$ has at most one neighbor in $K$.  These
graphs contain pyramids that are built as follows: take three vertices
$a, b, c$ of $K$ that induce a triangle, and consider neighbors
$a', b', c'$ of $a, b, c$ respectively in $P$.  Suppose that $a', b', c'$
are chosen such that $a'Pc'$ has no neighbor of $a$ and $c$ in its
interior and $b'$ is the unique neighbor of $b$ in it.  The pyramid is
then formed by $a'Pc'$ and $abc$. 

To avoid pyramids in graphs from~\cite{adlerLMRTV:rwehf}, take
such a graph $G$ and apply the following algorithm to it:

While there exist a 6-tuple $(a, b, c, a', b', c')$ as above, denote
by $x$ and $y$ the two neighbors of $b'$ in $P$. Remove
$b'$ from $P$, replace it by a path $xp_1p_2p_3p_4p_5p_6p_7y$ and add
the following edges: $bp_1$, $bp_4$ and $bp_7$.  Note that the
obtained graph is still vertex-wise partitioned into a clique and a
path, so that our procedure can be applied repeatedly.

Each time, the number of pyramids in the graph decreases so that the
algorithm terminates.  From the proofs in~\cite{adlerLMRTV:rwehf}, it
is easy to check that we obtain graphs in $\cal C$ that have unbounded
rankwidth. We omit further details that can be found
in~\cite{adlerLMRTV:rwehf}.

We observe that graphs from~\cite{adlerLMRTV:rwehf} may contain
arbitrarily large semi-induced matchings, so that our main result
cannot be a simple corollary of Theorem~\ref{th2}. 

\section{Decomposing graphs in $\cal C$}
\label{s:decomp}

Recall that when $H$ is a hole in some graph and $u$ is a vertex not
in $H$ with at least two neighbors in $H$, we call \emph{$u$-sector}
of $H$ any path of $H$ of length at least~1, whose ends are adjacent
to $u$ and whose internal vertices are not. Observe that $H$ is
edgewise partitioned into its $u$-sectors.

Let $H$ be a hole in a graph and let $u$ be a vertex not in $H$. We
say that $u$ is \emph{major} w.r.t.\ $H$ if $N_H(u)$ is not included
in a 3-vertex path of $H$.  We omit ``w.r.t.\ $H$'' when $H$ is clear
from the context.

\begin{lemma}
  \label{l:vHole}
  In every graph in $\cal C$, every major vertex $u$ w.r.t.\ a hole $H$ has at
  least five neighbors in $H$ or has exactly three neighbors in $H$ that
  are pairwise non-adjacent. 
\end{lemma}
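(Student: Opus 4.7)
The plan is a case analysis on $k := |N_H(u)|$. Since $u$ is major we have $k \geq 2$, and in the case $k = 2$ the two neighbors $a, b$ must be at $H$-distance at least $3$ (otherwise $\{a,b\}$ is contained in three consecutive vertices of $H$, i.e.\ in a 3-vertex path of $H$).

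I first rule out $k \in \{2, 4\}$ by exhibiting a forbidden Truemper configuration. If $k = 2$, both arcs of $H$ between the neighbors $a$ and $b$ have length at least $3$, and $ab \notin E(G)$ because $H$ is chordless. The two arcs, together with the path $aub$, are three internally vertex-disjoint chordless paths from $a$ to $b$ of length at least $2$. Since $H$ has no chord and $N_H(u) = \{a,b\}$, there is no other edge between them, so this subgraph is a theta, contradicting $G \in \cal C$. If $k = 4$, then $u$ together with the hole $H$ is a wheel with exactly four spokes, hence an even wheel, again contradicting $G \in \cal C$. The case $k \geq 5$ is precisely the first conclusion of the lemma.

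It remains to treat $k = 3$. Write $N_H(u) = \{a, b, c\}$. Since $H$ is a hole it contains no triangle, so $a, b, c$ are not pairwise $H$-adjacent; and if two of the three pairs $\{a,b\}, \{b,c\}, \{a,c\}$ were edges of $H$ — say $ab, bc \in E(H)$ — then $abc$ would be a 3-vertex subpath of $H$ containing $N_H(u)$, contradicting majority. Hence either no pair is an edge of $H$ (giving the second conclusion), or exactly one pair, say $ab$, is an edge of $H$. In the latter case $c$ is non-adjacent in $H$ to both $a$ and $b$, so the two $u$-sectors $P_1 = c \ldots a$ and $P_2 = c \ldots b$ have length at least $2$. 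Together with the edge $cu$, they form three chordless paths meeting only at $c$, whose other ends $a, b, u$ form the triangle $abu$. The standard chord checks — $H$ has no chord and $N_H(u) = \{a, b, c\}$ — ensure no extra edges between the three paths, so this subgraph is a pyramid (with two legs of length $\geq 2$), again excluded.

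The main obstacle is this last sub-case of $k = 3$, where a pyramid must be assembled explicitly (with apex $c$ and the triangle $uab$) rather than invoking a theta or an even wheel directly; the remaining small-$k$ cases reduce to those configurations without additional work.
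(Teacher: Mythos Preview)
Your proof is correct and follows essentially the same case analysis as the paper: $k=2$ gives a theta, $k=4$ gives an even wheel, and $k=3$ with an adjacent pair gives a pyramid. You spell out the $k=3$ pyramid construction (apex $c$, triangle $uab$) in more detail than the paper, which simply asserts that a pyramid arises, but the underlying argument is identical.
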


\begin{proof}
  If $u$ has exactly two neighbors in $H$, since these two neighbors are
  not included in a 3-vertex path, they are non-adjacent. Hence, $u$ and
  $H$ form a theta, a contradiction.  If $u$ has exactly three
  neighbors in $H$, since these are not included a 3-vertex path, they
  are pairwise non-adjacent for otherwise $u$ and $H$ form a
  pyramid. If $u$ has exactly four neighbors in $H$, then $u$ and $H$
  form an even wheel, a contradiction.
\end{proof}

It follows from Lemma~\ref{l:vHole} that if $u$ is major w.r.t.~a hole
$H$, then $(H, u)$ is a wheel. We will use this fact throughout the
paper.  A vertex that is not major with respect to some hole $H$ and
still has neighbors in $H$ is \emph{minor} w.r.t.~$H$.

\begin{lemma}
  \label{l:vMinorHole}
  In a graph from $\cal C$, every minor vertex $u$ w.r.t.\ a hole $H$
  satisfies one of the following.
  \begin{itemize}
  \item $u$ has a unique neighbor in $H$ (we then say that $u$ is
    \emph{pending} w.r.t.\ $H$).
  \item $u$ has two neighbors in $H$ which are adjacent (we then say
    that $u$ is a \emph{cap} w.r.t.\ $H$).
  \item $u$ has three neighbors in $H$ which induce a path $xyz$ (we
    then say that $u$ is a \emph{clone of $y$} w.r.t.\ $H$).
  \end{itemize}
\end{lemma}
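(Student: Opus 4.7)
The plan is to case-split on $|N_H(u)|$. Since $u$ is minor, $N_H(u)$ is contained in some 3-vertex sub-path $xyz$ of $H$, so immediately $1 \leq |N_H(u)| \leq 3$. Recall also that since $G$ is square-free, every hole of $G$ has length at least $5$.

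If $|N_H(u)|=1$, then $u$ is pending. If $|N_H(u)|=3$, then $N_H(u) = \{x,y,z\}$ and $u$ is a clone of $y$. The only real work lies in the case $|N_H(u)|=2$: here either $N_H(u)$ is one of the adjacent pairs $\{x,y\}$ or $\{y,z\}$ (in which case $u$ is a cap, as required), or $N_H(u) = \{x,z\}$. I will exclude this last possibility using the theta obstruction.

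Assume for contradiction that $N_H(u) = \{x,z\}$. Let $Q$ be the sub-path of $H$ from $x$ to $z$ that avoids $y$; since $H$ has length at least $5$, $Q$ has length at least $3$. Consider three paths between $x$ and $z$: the path $xyz$ of length~2, the path $xuz$ of length~2, and $Q$ of length $\geq 2$. Their pairwise internal vertex sets are $\{y\}$, $\{u\}$, and $V(Q)^*$, all disjoint. The fact that $H$ is a hole gives that $y$ is anticomplete to $V(Q)^*$, while $N_H(u) = \{x,z\}$ gives that $u$ is anticomplete to $V(Q)^* \cup \{y\}$. Hence the three paths form a theta on the ends $x,z$, contradicting $G \in \cal C$.

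The main (small) obstacle is really just this last exclusion, and it crucially uses squarefreeness to guarantee that $Q$ has length $\geq 2$; without it, the long arc of $H$ could collapse to a single edge and we would not obtain a theta. All other cases are immediate from the definitions.
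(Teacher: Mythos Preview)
Your proof is correct and follows the same route as the paper's one-line argument: the only nontrivial case is $N_H(u)=\{x,z\}$, which you exclude via the theta on $x$ and $z$ formed by $xyz$, $xuz$, and the other arc $Q$ of $H$. One small correction to your closing remark: square-freeness is not actually needed here, since even a hole of length~$4$ would give $Q$ of length~$2$ (not a single edge), which already suffices for a theta; the paper accordingly does not invoke it.
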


\begin{proof}
  Otherwise, $u$ has two non-adjacent neighbors in $H$, so $u$ and $H$
  form a theta. 
\end{proof}

When $H$ is hole and $u$ a clone of $y$ w.r.t.\ $H$, we denote by
$H_{u\sm y}$ the hole induced by $\{u\} \cup V(H) \sm \{y\}$. Observe
that $y$ is a clone of $u$ w.r.t.\ $H_{u\sm y}$.

\begin{lemma}
  \label{l:majorClone}
  Let $H$ be a hole in a graph $G \in \cal C$ and $u$ be a clone of
  $y$ w.r.t.~$H$.  Let $v$ be a major vertex w.r.t.\ $H$.  Then,
  $vu\in E(G)$ if and only if $vy\in E(G)$.  In particular, a vertex
  is major w.r.t.\ $H$ if and only if it is major w.r.t.\
  $H_{u\sm y}$.
\end{lemma}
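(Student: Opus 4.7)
My plan is to prove the main equivalence by contradiction, and then derive the ``in particular'' clause as a consequence. I will use the shorthand $H' := H_{u\sm y}$, the hole obtained from $H$ by replacing $y$ with $u$, where $u$'s neighbors on the cycle $H'$ are precisely the two vertices $x$ and $z$ (the ends of the 3-vertex path $xyz$ in $H$). Since $v$ is major w.r.t.\ $H$ and $G\in\cal C$ forbids even wheels, Lemma~\ref{l:vHole} leaves only two possibilities: either (A) $|N_H(v)|\geq 5$ and odd, or (B) $|N_H(v)|=3$ with the three neighbors pairwise non-adjacent.

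To prove $vu\in E(G)\Leftrightarrow vy\in E(G)$, I assume the two adjacencies disagree and consider the four sub-cases obtained by crossing (A)/(B) with the choice of which adjacency holds. In three of them — both sub-cases of (A), and the sub-case of (B) with $vu\in E$, $vy\notin E$ — a direct computation of $N_{H'}(v)$ by adding $u$, removing $y$, or swapping them yields $|N_{H'}(v)|\geq 4$. Since four vertices cannot lie inside a 3-vertex path, $v$ is major w.r.t.\ $H'$; a parity count then gives that $|N_{H'}(v)|$ is even, so $(H',v)$ is an even wheel, contradicting $G\in\cal C$. The remaining sub-case is (B) with $vy\in E(G)$ but $vu\notin E(G)$: here $N_{H'}(v)=\{a,b\}$ where $a,b$ are the two non-$y$ neighbors of $v$ in $H$, still non-adjacent in $H'$ since their adjacency is unaffected by the $y\leftrightarrow u$ swap. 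Then together with the two arcs of $H'$ from $a$ to $b$, the vertex $v$ yields three internally disjoint induced paths from $a$ to $b$ of length at least $2$; chordlessness of $H'$ and the fact that $v$ has no further neighbor on $H'$ ensure there are no extra edges between these paths, so the configuration is an induced theta, contradicting $G\in\cal C$. I expect this theta sub-case to be the main obstacle, as it requires careful verification that no unwanted chords appear and that $a,b$ are truly non-adjacent on $H'$.

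For the ``in particular'' clause, once the first equivalence is established, the set $N_{H'}(v)$ is obtained from $N_H(v)$ by swapping $y$ for $u$ when both (equivalent) adjacencies hold, and left unchanged otherwise. The bijection $V(H)\to V(H')$ that fixes every common vertex and sends $y\mapsto u$ is an isomorphism of the cycles $H$ and $H'$ and therefore maps 3-vertex paths to 3-vertex paths; it also maps $N_H(v)$ to $N_{H'}(v)$. Hence $N_H(v)$ fits into a 3-vertex path of $H$ if and only if $N_{H'}(v)$ fits into a 3-vertex path of $H'$, giving the implication ``major w.r.t.\ $H$ $\Rightarrow$ major w.r.t.\ $H'$''. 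For the converse, I use the symmetric observation that $y$ is a clone of $u$ w.r.t.\ $H'$ (its neighbors on $H'$ are exactly $x,u,z$, which form a 3-vertex path of $H'$) and that $(H')_{y\sm u}=H$; applying the first statement of the lemma to $H'$ and its clone $y$ yields the missing direction.
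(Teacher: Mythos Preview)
Your proof is correct and follows essentially the same approach as the paper's: a parity argument showing that if the adjacencies $vu$ and $vy$ disagree then one of $(H,v)$, $(H',v)$ is an even wheel, with the residual case $|N_{H'}(v)|=2$ handled by exhibiting a theta. The paper packages this more tersely (it observes directly that $|N_H(v)|$ and $|N_{H'}(v)|$ differ by one, and for the residual case cites Lemma~\ref{l:vMinorHole} rather than constructing the theta by hand), and it leaves the ``in particular'' clause implicit, but the substance is the same.
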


\begin{proof}
  Suppose that $v$ is adjacent to exactly one of $u, y$.  Since $v$ is
  major w.r.t.\ $H$, $(H, v)$ is a wheel. If $(H_{u \sm y}, v)$ is
  also a wheel, then one of $(H, v)$, $(H_{u \sm y}, v)$ is an even
  wheel, a contradiction.  So, $v$ has exactly two neighbors in
  $H_{u \sm y}$, and hence exactly three neighbors in $H$.  By
  Lemma~\ref{l:vHole} the neighbors of $v$ in H are non-adjacent, but
  by Lemma~\ref{l:vMinorHole}, the neighbors of $v$ in $H_{u\sm y}$
  are adjacent, a contradiction.
\end{proof}

\begin{lemma}
  \label{l:vSector}
  Let $u$ and $v$ be two non-adjacent major vertices w.r.t.\ a hole
  $H$ of a graph $G\in \cal C$. Let $P = u'\dots u''$ be a $u$-sector of
  $H$. Then one of the following holds.
  \begin{enumerate}
  \item\label{i:vSector2} $P$ contains at most one neighbor of $v$, and if it has one,
    it is either $u'$ or $u''$.
  \item\label{i:vSector1} $u'u''\in E(G)$ and $v$ is adjacent to both
    $u'$ and $u''$.
  \item\label{i:vSector3} $P$ contains at least 3 neighbors of $v$.
  \end{enumerate}
\end{lemma}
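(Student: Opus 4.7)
I would argue by contradiction: assume that none of (i), (ii), (iii) holds. The negation reduces to two principal cases: either $v$ has exactly one neighbor in $P$ and that neighbor is internal to $P$, or $v$ has exactly two neighbors in $P$ that do not match the configuration of~(ii).

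A preliminary observation used throughout: square-freeness of $\cal C$ forces every $u$-sector of a hole to have length $1$ or at least $3$, since a sector of length $2$, say $u'mu''$, together with $u$ forms the square $uu'mu''$ (using $u'u''\notin E$ by chordlessness of $H$ and $um\notin E$ by the sector definition). So in what follows $P$ has length $1$ or $\ge 3$.

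For the ``two neighbors'' case, let $a,b$ be the two neighbors of $v$ on $P$. If $\{a,b\}=\{u',u''\}$ then the failure of (ii) forces $u'u''\notin E$, hence $|P|\ge 3$; then the three paths from $u'$ to $u''$ through $u$, through $v$, and along $P$ are internally disjoint (using $uv\notin E$ and the sector property) and of length at least~$2$, forming a theta, a contradiction. If at least one of $a,b$ lies in $P^*$, the two induced paths $uu'\dots av$ and $uu''\dots bv$ along $P$ are internally disjoint and of length $\ge 2$ between $u$ and $v$; using Lemma~\ref{l:vHole} applied to $u$ and to $v$, both have additional neighbors on $V(H)\setminus V(P)$, and I would combine these to construct a third internally disjoint path from $u$ to $v$ through the complementary subpath $P'=H\setminus P^*$, yielding a theta. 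If moreover $a,b$ are adjacent on $P$, the triangle $vab$ combines with a path from $u$ through $P'$ (and the two $P$-paths above) to form a pyramid with apex $u$.

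For the ``one internal neighbor'' case, Lemma~\ref{l:vHole} forces $v$ to have further neighbors on $V(H)\setminus V(P)$; a careful analysis of the $v$-sector $R$ of $H$ containing $w$ on each side, together with the adjacencies of $u$ to $R$, produces either a theta between $u$ and $v$ or an even wheel centered at $u$ or $v$, both forbidden in $\cal C$. The principal technical obstacle is the bookkeeping in the ``two neighbors'' case: one must verify that each candidate cycle is induced and that each candidate triple of paths is internally vertex-disjoint with no extra edges between them, relying on the sector definition, the chordlessness of $P$ and of $H$, and Lemma~\ref{l:vHole} applied to both $u$ and $v$ to supply enough neighbors on $P'$.
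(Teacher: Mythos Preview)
Your case split mirrors the paper's, and the subcase $\{a,b\}=\{u',u''\}$ is handled correctly. But the other subcases have a real gap.

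In the two-neighbor case with at least one of $a,b$ internal, your ``third path through $P'=H\setminus P^*$'' is not safe. Write $R=x\dots y$ for the path induced by $V(H)\setminus V(P)$, so $xu',yu''\in E(G)$. Your first two paths contain $u'$ and $u''$ as internal vertices, so if the third path touches $x$ or $y$ you pick up an unwanted chord and get neither a theta nor a pyramid. And you cannot always avoid $x,y$: take $v$ with exactly three pairwise non-adjacent neighbors $a,b\in P^*$ and $x$; this is consistent with Lemma~\ref{l:vHole}, yet $v$ has no neighbor in $R^*$, so any $u$--$v$ path through $R$ must pass through $x$. The paper sidesteps this entirely by first forming the hole $H_u$ on $V(P)\cup\{u\}$: if $a,b$ are non-adjacent then $v$ together with $H_u$ is already a theta (no excursion into $R$ needed), and if $a,b$ are adjacent then Lemma~\ref{l:vHole} forces $v$ to have at least five neighbors in $H$, hence at least one in $R^*$, so the pyramid can be built with its third path confined to $R^*$ where the chords $xu',yu''$ are harmless.

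Your one-internal-neighbor case is only a promise (``a careful analysis \dots produces either a theta \dots or an even wheel''), not an argument. Here too the edges $vx,vy$ are the obstruction: the paper first disposes of $vx\in E(G)$ (and symmetrically $vy$) by exhibiting a theta or pyramid on $V(P)\cup\{u,v,x\}$, and only then builds a theta from $u$ to the internal neighbor $v'$ (not from $u$ to $v$) whose third path runs through $v$ and $R^*$. Without that preliminary step the construction fails for the same reason as above.
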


\begin{proof}
  Let $R = x \dots y$ be the path induced by $V(H) \sm V(P)$, with
  ends such that $u'x\in E(G)$ and $u''y\in E(G)$. 

  \begin{claim}
    \label{c:vSector}
    $u$ has a neighbor in the interior of $R$ (in particular, $R$ has
    length at least~2).
  \end{claim}

  \begin{proofclaim}
    Otherwise, $N_H(u) \subseteq \{u', u'', x, y\}$, contradicting
    Lemma~\ref{l:vHole}.
  \end{proofclaim}
  
  Suppose first that $P$ contains exactly one neighbor $v'$ of $v$.
  Suppose for a contradiction that $v'$ is not an end of $P$.  If
  $vx\in E(G)$ then $u'v'\notin E(G)$ because $G$ is
  square-free. Hence, $V(P) \cup \{u, v, x\}$ induces a theta (if
  $ux\notin E(G)$) or a pyramid (if $ux\in E(G)$).  So,
  $vx\notin E(G)$. Symmetrically, $vy\notin E(G)$.  Hence, $G$
  contains a theta from $u$ to $v'$: two paths use vertices of $P$,
  and the third one goes through $v$, some neighbor of $v$ in the
  interior of $R$ (which exists since $v$ is major) and some neighbor
  of $u$ in the interior of $R$, which exists
  by~(\ref{c:vSector}).  So, \ref{i:vSector2} holds.

  Suppose now that $P$ contains exactly two neighbors $v'$ and $v''$
  of $v$. If $u'u''\in E(G)$, then \ref{i:vSector1} holds, so we may
  assume that $u'u''\notin E(G)$.  Hence, $u$ and $P$ form a hole that
  we denote by $H_u$. We have $v'v''\in E(G)$ for otherwise, $v$ and
  $H_u$ form a theta.  By Lemma~\ref{l:vHole}, $v$ has at least five
  neighbors in $H$, so at least one of them is in the interior of $R$.
  Also, $u$ has a neighbor in the interior of $R$
  by~(\ref{c:vSector}).  Hence, $H_u$ together with a shortest path
  from $u$ to $v$ with interior in the interior of $R$ form a pyramid,
  a contradiction.

  Finally, if $P$ contains at least three neighbors of $v$,
  then~\ref{i:vSector3} holds.
\end{proof}

Let $H$ be a hole in a graph and let $u$ and $v$ be two vertice not in
$H$. We say that $u$ and $v$ are \emph{nested} w.r.t.\ $H$ if $H$
contains two distinct vertices $a$ and $b$ such that
one $(a,b)$-path of $H$ contains all neighbors of~$u$ in $H$, and the
other one contains all neighbors of~$v$ in $H$. Observe that $u$ and
$v$ may both be adjacent to $a$ or to $b$. Observe that under the
assumption that $v$ has at least two neighbors in $H$ (so that the
notion of $v$-sector is defined), $u$ and $v$ are nested if and only
if there exists a $v$-sector that contains every neighbor of $u$
in~$H$.  If $u$ is a cap, a pending vertex, or a vertex with no
neighbor in $H$, then it is nested with all other vertices not in
$H$.

\begin{lemma}
  \label{l:sumUpMC}
  Let $H$ be a hole in a graph $G\in \cal C$. If $u$ and $v$ are major
  or clones w.r.t.~$H$ and are nested, then $uv\notin E(G)$.  
\end{lemma}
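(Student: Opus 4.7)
The plan is to suppose $uv \in E(G)$ for contradiction and, in each case, to build a new hole $H'$ in which one of $u, v$ has exactly one more neighbor than in $H$, producing an even number of spokes and hence an even wheel, contradicting $G \in \cal C$. The key parity fact is that for any $w$ that is major or a clone w.r.t.\ a hole in $G$, $|N_H(w)|$ is odd and at least $3$: a clone has exactly three neighbors by definition, and for a major vertex Lemma~\ref{l:vHole} leaves only $3$ or $\geq 5$, while any even value $|N_H(w)| \geq 4$ would itself yield an even wheel with $H$.

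Suppose first that one of $u, v$, say $u$, is a clone of $y_u$. I would work inside the hole $H_{u\sm y_u}$. First I would check that the nested hypothesis forces $y_u \notin N(v)$: if $y_u \in N_H(v)$, then $y_u \in N_H(u) \cap N_H(v) \subseteq P \cap Q = \{a,b\}$, so $y_u$ equals $a$ or $b$; but then among the two $H$-neighbors $x_u, z_u$ of $y_u$, one lies on $P$ and the other on $Q$, contradicting $N_H(u) \subseteq P$. Consequently, using $uv \in E$, $N_{H_{u\sm y_u}}(v) = \{u\} \cup N_H(v)$, of size $|N_H(v)|+1$, which is even and at least $4$ by the parity fact above. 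Hence $(H_{u\sm y_u}, v)$ is an even wheel, a contradiction.

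Suppose now that both $u$ and $v$ are major. Because $N_H(v) \subseteq Q$, no vertex in the interior of $P$ is a neighbor of $v$, so all edges of $P$ lie in a single $v$-sector $S$, and in particular $V(S) \supseteq V(P)$. I would then form $H_v := G[V(S) \cup \{v\}]$, the cycle obtained by joining $v$ to the two endpoints of $S$. It is chordless because the interior of $S$ contains no neighbor of $v$, and it has length at least $5$ since $|V(P)| \geq 4$ (otherwise $N_H(u)$ would be contained in a $3$-vertex path of $H$, contradicting that $u$ is major). Now $N_H(u) \subseteq V(P) \subseteq V(S)$, so the neighbors of $u$ in $H_v$ are exactly $N_H(u) \cup \{v\}$, of size $|N_H(u)|+1$, which is again even and at least $4$; hence $(H_v, u)$ is an even wheel, contradiction.

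The main obstacle is the parity accounting: one must combine Lemma~\ref{l:vHole} with the exclusion of even wheels in $\cal C$ to rule out $|N_H(w)|$ being any even value $\geq 4$ for a major $w$. Once this is in hand, what remains is a careful verification that $H_{u\sm y_u}$ and $H_v$ are genuine holes of length at least $4$ and that the augmented neighborhood really gains only one vertex, which uses the nested hypothesis, the definition of a $v$-sector, and the chord-freeness of $H$.
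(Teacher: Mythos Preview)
Your proof is correct and uses essentially the same idea as the paper's: reroute $H$ through one of $u, v$ so that the other's spoke count changes by exactly one, then invoke even-wheel-freeness. The paper does it in a single line without your case split, forming $H_u$ from $u$ together with the $u$-sector of $H$ containing $N_H(v)$ and observing that one of $(H, v)$, $(H_u, v)$ is an even wheel; your clone-swap hole $H_{u\setminus y_u}$ is exactly this $H_u$ in the clone case, and your $H_v$ is the symmetric version in the major--major case.
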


\begin{proof}
  Otherwise, let $H_u$ be the hole formed by $u$ and the $u$-sector of
  $H$ that contains all neighbors of $v$. Then one of $(H, v)$ or
  $(H_u, v)$ is an even wheel, a contradiction.
\end{proof}

If $u$ and $v$ are two vertices not in $H$ and not nested w.r.t.\ $H$,
then they \emph{cross} on $H$.

\begin{lemma}
  \label{l:notNested}
  Let $H$ be a hole in a graph $G\in \cal C$ and let $u$ and $v$ be
  two vertice not in $H$. If $u$ and $v$ cross, then one the following
  holds.
  \begin{enumerate}
  \item\label{i:cross} $H$ contains four vertices $u'$, $u''$, $v'$
    and $v''$ such that:
    \begin{itemize}
    \item $u'$, $v'$, $u''$ and $v''$ are distinct and appear in this
      order along $H$;
    \item $u', u'' \in N(u)$;
    \item $v', v'' \in N(v)$.
    \end{itemize}
  \item\label{i:corn3} $N_H(u) = N_H(v)$, $N_H(u)$ is an independent set and
     $|N_H(u)| = 3$.
   \item\label{i:twoClones} $N_H(u) = N_H(v)$ and both $u$ and $v$ are
     clones w.r.t.~$H$.
  \end{enumerate}
\end{lemma}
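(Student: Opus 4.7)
The plan is to split on whether $N_H(u)$ and $N_H(v)$ agree. As a preliminary, observe that since $u$ and $v$ cross, neither can be pending, a cap, or a vertex with no neighbor in $H$, for each such vertex is nested with every other vertex not in $H$. Combined with Lemmas~\ref{l:vHole} and~\ref{l:vMinorHole}, this forces $|N_H(u)|, |N_H(v)| \geq 3$, and each of $u, v$ is either major or a clone with respect to $H$.

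Stage~1: suppose $N_H(u) \neq N_H(v)$, aiming to produce case~(i). By symmetry, fix $w \in N_H(u) \setminus N_H(v)$. Then $w$ lies strictly inside some $v$-sector $P = v_1 \cdots v_2$. If every neighbor of $u$ in $H$ were contained in $V(P)$, the characterization of nested vertices given just before Lemma~\ref{l:sumUpMC} (applicable since $|N_H(v)| \geq 2$) would imply that $u$ and $v$ are nested, contradicting the hypothesis. So some $u$-neighbor $w'$ lies in $V(H) \setminus V(P)$. Walking around $H$ from $v_1$, we encounter $w$ strictly inside $P$, then $v_2$, then $w'$ in the other arc, and finally return to $v_1$; thus $w, v_2, w', v_1$ are four pairwise distinct vertices appearing in this cyclic order. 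Setting $(u', v', u'', v'') = (w, v_2, w', v_1)$ realizes case~(i).

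Stage~2: now suppose $N_H(u) = N_H(v)$. If $|N_H(u)| \geq 4$, any four distinct vertices of this common set, taken in their cyclic order on $H$, witness case~(i). Otherwise $|N_H(u)| = 3$, and Lemmas~\ref{l:vHole} and~\ref{l:vMinorHole} applied to each of $u$ and $v$ show that the common three-element neighborhood either forms an independent set (so both $u$ and $v$ are major, yielding case~(ii)) or induces a 3-vertex path (so both are clones of its middle vertex, yielding case~(iii)). No intermediate configuration with three neighbors containing exactly one edge can arise: such a vertex would be major by definition, yet would violate the pairwise non-adjacency guaranteed by Lemma~\ref{l:vHole}.

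The only real subtlety is in Stage~1, namely verifying that the four vertices $w, v_2, w', v_1$ are genuinely distinct; this follows immediately from $w$ lying in the interior of $P$, $w'$ lying outside $V(P)$, and $v_1 \neq v_2$ because the sector $P$ has length at least~1.
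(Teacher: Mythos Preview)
Your proof is correct and arguably cleaner than the paper's. Both arguments begin by observing that $u$ and $v$ must each be major or a clone (so $|N_H(u)|,|N_H(v)|\ge 3$), but then diverge. The paper picks two non-adjacent neighbours $a,b$ of $u$, chooses an $(a,b)$-arc $P_v$ whose interior meets $N_H(v)$, and minimizes $P_v$; this minimality is what pins down $N_H(v)\subseteq\{a,b,u'\}$ in the residual case and forces the split into (ii)/(iii) versus (i). You instead branch on whether $N_H(u)=N_H(v)$: in the unequal case a single vertex $w\in N_H(u)\setminus N_H(v)$ lands in the interior of a $v$-sector, and the non-nestedness immediately supplies a $u$-neighbour $w'$ outside that sector, giving the alternating quadruple; in the equal case the size dichotomy handles everything. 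Your route sidesteps the minimality argument entirely and makes the role of the ``$v$-sector'' characterisation of nestedness more transparent; the paper's route, on the other hand, extracts a little more structural information (it shows $N_H(v)$ is contained in a specific three-vertex set determined by $u$) that is not needed for the lemma as stated.
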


\begin{proof}
  Since a vertex with no neighbor in $H$, a cap or a
  vertex pending w.r.t.\ $H$ is nested with any other vertex
  not in $H$, by Lemma~\ref{l:vMinorHole}, $u$ and $v$ are major or clones
  w.r.t.~$H$.  Hence, consider two non-adjacent neighbors $a, b$ of
  $u$ in $H$. Since $v$ is major or clone, $v$ has a neighbor in the
  interior of one $(a, b)$-path $P_v$ of $H$.  We suppose that $a$,
  $b$ and $P_v$ are chosen subject to these properties
  ($ab\notin E(G)$, $v$ has a neighbor in the interior of $P_v)$ and
  so that $P_v$ is minimal.  If $v$ has neighbors in the
  interior of the other $(a, b)$-path of $H$, then~\ref{i:cross}
  holds.

  Otherwise, $N_H(v) \subseteq V(P_v)$.  If $P_v$ is a $u$-sector,
  then $u$ and $v$ are nested, so suppose that $u$ has a neighbor $u'$
  that is an internal vertex of $P_v$. By the minimality of $P_v$, 
  $N_H(v) \subseteq \{a, b, u'\}$.  Hence, either $v$ is a clone of $u'$
  w.r.t.~$H$, or by Lemma~\ref{l:vHole} applied to $v$,
  $N_H(v) =\{a, b, u'\}$ and $N_H(v)$ is an independent set. So, 
  if $N_H(u) \geq 4$, then~\ref {i:cross} holds, and if $N_H(u) = 3$,
  then~\ref{i:corn3} or~\ref{i:twoClones} holds. 
\end{proof}

\begin{lemma}
  \label{l:corn}
  Let $H$ be a hole in a graph  $G\in \cal C$ and suppose that $u$ and $v$ are
  two major vertices w.r.t.\ $H$. Then $uv\in E(G)$ if and only if
  $u$ and $v$ cross. 
\end{lemma}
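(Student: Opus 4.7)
One direction is immediate: $uv \in E(G)$ implies $u,v$ cross, by the contrapositive of Lemma~\ref{l:sumUpMC} applied to major vertices.

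For the converse, suppose $u,v$ are major and cross but $uv \notin E(G)$, and derive a contradiction. Apply Lemma~\ref{l:notNested}: case~\ref{i:twoClones} is ruled out since neither vertex is a clone. In case~\ref{i:corn3}, $N_H(u)=N_H(v)=\{a,b,c\}$ is an independent set of size three, and then $\{u,v,a,b,c\}$ induces a theta with ends $u$ and $v$ (three internally disjoint length-$2$ paths through $a$, $b$, and $c$), contradicting $G\in\mathcal{C}$.

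It remains to handle case~\ref{i:cross}: there exist $u',v',u'',v''\in V(H)$ in this cyclic order with $uu',uu'',vv',vv''\in E(G)$. Observe first that $|N_H(u)\cap N_H(v)|\le 2$, since two non-$H$-adjacent common neighbors would make a square with $u,v$. Hence I may pick $v'\in N_H(v)\setminus N_H(u)$ and then choose $u',u''$ to be the $u$-neighbors on $H$ flanking $v'$ most closely, so that $P:=u'Hu''$ (the arc through $v'$) is a $u$-sector with $v'$ in its interior. Since $u,v$ cross, there must also be a $v$-neighbor $v''$ in the interior of the opposite arc $P':=u''Hu'$; otherwise all $v$-neighbors would lie in $P$, and $u,v$ would be nested witnessed by $a=u',b=u''$. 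Lemma~\ref{l:vSector}, applied to the non-adjacent major pair $u,v$ and to $P$, then forces case~\ref{i:vSector3}: $P$ contains at least three $v$-neighbors $r_1,\ldots,r_m$ (with $m\ge 3$), listed in order along $P$.

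The plan is to exhibit a theta from $u$ to $v$ via three internally vertex-disjoint paths of length at least $2$: $A_1 = u\text{-}u'\text{-}\cdots\text{-}r_1\text{-}v$ and $A_2 = u\text{-}u''\text{-}\cdots\text{-}r_m\text{-}v$ both tracing subpaths of $P$, together with $A_3 = u\text{-}u^*\text{-}\cdots\text{-}v''\text{-}v$ along a subpath of $P'$, where $u^*\in N_H(u)\cap\operatorname{int}(P')$ exists because $u$ is major. Their interiors are pairwise disjoint by construction. The main obstacle is the chord-check: since $V(H)$ induces $H$ in $G$, any chord between interiors must be an edge of $H$ incident to the junctions $u'$ or $u''$ where $P$ meets $P'$. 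Square-freeness forbids $u$-sectors of length exactly $2$, so $u^*$ can be chosen not $H$-adjacent to $u'$ or $u''$; and Lemma~\ref{l:vHole}, combined with $v$ having at least $3$ neighbors in $P$ plus $v''$ outside $P$, forces $|N_H(v)|\ge 5$, giving enough room to position $v''$ away from $u',u''$. In the residual degenerate configurations where these choices collide, the same setup instead yields a pyramid, prism, or even wheel, each forbidden in $\mathcal{C}$, completing the contradiction.
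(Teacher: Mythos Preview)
Your overall strategy matches the paper's Case~2: reduce via Lemma~\ref{l:notNested} to the crossing configuration, pick a $u$-sector $P$ containing an interior $v$-neighbor $v'\notin N_H(u)$, invoke Lemma~\ref{l:vSector} to get at least three $v$-neighbours on $P$, and build a theta with two paths through $P$ and a third through the complementary arc $P'$. Your handling of outcome~\ref{i:corn3} via a theta (rather than a square) is a valid alternative, and by choosing $v'\in N_H(v)\setminus N_H(u)$ first you neatly sidestep the paper's Case~1. The setup through the construction of $A_1,A_2$ is correct.

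The gap is in your third path. Write $x,y$ for the $P'$-neighbours of $u',u''$; to avoid chords between $A_3$ and $A_1\cup A_2$, the interior of $A_3$ must miss $\{x,y\}$, so you need both $u$ and $v$ to have neighbours in $\operatorname{int}(P')\setminus\{x,y\}$. For $u$ this does follow from Lemma~\ref{l:vHole} (three pairwise non-adjacent neighbours, or else at least five, at least three of which lie in $\operatorname{int}(P')$). For $v$, however, your argument is insufficient: the bound $|N_H(v)|\ge 5$ does not by itself prevent $N_H(v)\cap\operatorname{int}(P')\subseteq\{x,y\}$. For instance, if $P$ has length at least~$4$, then $v$ could have four neighbours in $P$ and the single neighbour $x$ in $\operatorname{int}(P')$; nothing you wrote rules this out. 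Your closing sentence about ``residual degenerate configurations'' gestures at this but does not supply an argument.

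The paper closes the gap with a genuinely different idea. The hole formed by $P$ and $u$ forces $|N_P(v)|$ to be odd (else an even wheel), and $|N_H(v)|$ is odd as well, so $|N_{\operatorname{int}(P')}(v)|$ is even; hence if it is contained in $\{x,y\}$ it must equal $\{x,y\}$. Square-freeness on $\{u,v,u',u''\}$ then forces $v$ to miss one of $u',u''$, say $u''$; now $u''$ sits in the interior of a $v$-sector of $H$ in which $u$ has at most two neighbours, contradicting Lemma~\ref{l:vSector} applied with the roles of $u$ and $v$ swapped. This parity-plus-role-swap step is the real content you are missing, and it is not the sort of thing one can wave away.
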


\begin{proof}
  If $u$ and $v$ are nested, then $uv\notin E(G)$ by
  Lemma~\ref{l:sumUpMC}.  It remains to prove the converse: if
  $u$ and $v$ cross, then they are adjacent.  So suppose for a
  contradiction that they are not adjacent.

  We apply Lemma~\ref{l:notNested}. Since $u$ and $v$ are
  major,~\ref{i:twoClones} does not hold.  If~\ref{i:corn3} holds,
  then $G$ contains a square, a contradiction.  Hence we may assume
  that~\ref{i:cross} holds: there exist in $H$ two neighbors $u'$,
  $u''$ of $u$ and two neighbors $v'$, $v''$ of $v$ such that $u'$,
  $v'$, $u''$, $v''$ are distinct and appear in this order along $H$.
  We choose them so that the path $P$ from $u'$ to $u''$ in $H\sm v''$
  is minimal.  We now break into two cases.

    \medskip
    \noindent{\bf Case 1:}  $P$ is not a $u$-sector.

    So, let $u'''$ be a neighbor of $u$ in the interior of $P$.  By
    the minimality of $P$, $u'Pu'''$ and $u'''Pu''$ are $u$-sectors
    and have no neighbor of $v$ in their interior, so $v'=u'''$.  Our
    goal in this case is to show the existence of three paths $R_1, R_2$
    and $R_3$ forming a theta from $u$ to~$v$.  We set $R_1 = u v' v$.  

    If $v$ is adjacent to both $u'$ and $u''$, then $\{u, v, u', v'\}$
    induces a square, a contradiction.  So we may assume up to
    symmetry that $v$ is not adjacent to $u''$.  W.l.o.g.\ we may
    assume that $v''$ is such that $Q = v'\dots v''$ is a $v$-sector
    of $H$ (that contains $u''$). By Lemma~\ref{l:vSector}, $Q$
    contains at least three neighbors of $u$. So, there exists a path
    $R_2$ from $u$ to $v$ going through $v''$ and the interior of
    $R_2$ contains no neighbors of $v'$.

    Let $x$ be the neighbor of $v''$ in $H$ that is not in $Q$ and let
    $R$ be the path of $H$ from $x$ to $u'$ that does not contain
    $P$.

    We claim that $v$ has a neighbor in the interior of $R$ (which
    therefore has length at least~2). Otherwise,
    $N_H(v) \subseteq \{u', v', v'', x\}$.  By Lemma~\ref{l:vHole},
    $N_H(v) = \{u', v', v''\}$ and $u'v'\notin E(G)$. So, $\{u, v, u',
    v'\}$ induces a square, a contradiction. 
 
    We claim that $u$ has a neighbor in the interior of $R$. For
    suppose not. Since $G$ contains no even wheel, $u$ has an odd
    number of neighbors in $Q$, and since it also has an odd number
    of neighbors in $H$, $u$ must be adjacent to $x$. Let
    $Q' = y\dots x$ be the $u$ sector of $H$ that contains $v''$.
    Since $G$ is square-free, $v$ is not adjacent to $x$ or
    $y$. Hence, $Q'$ contains a unique neighbor of $v$, that is in its
    interior, a contradiction to Lemma~\ref{l:vSector}.

    Now, by considering a path $R_3$ from $u$ to $v$ with interior in
    the interior of $R$ (which exists from the two claims we just
    proved), we see that $R_1$, $R_2$ and $R_3$ form a theta.

    \medskip
    \noindent{\bf Case 2:} $P$ is a $u$-sector. 

    We apply Lemma~\ref{l:vSector} to $P$ and we observe that
    outcomes~\ref{i:vSector2} and~\ref{i:vSector1} do not hold, so
    outcome~\ref{i:vSector3} holds: $P$ contains at least three
    neighbors of $v$.  It follows that there exist two internally
    vertex disjoint paths $R_1$ and $R_2$, both from $u$ to $v$, with
    interior in $P$ and such that $V(R_1) \cup V(R_2)$ induces a hole.
    Let $x$ be the neighbor of $u'$ in $H$ that is not in $P$, and $y$
    be the neighbor of $u''$ in $H$ that is not in $P$.  Let $R$ be
    the path of $H$ from $x$ to $y$ that does not contain $P$.

    We claim that $u$ has a neighbor in the interior of
    $R$. Otherwise, $N_H(u) \subseteq \{u', u'', x, y\}$,
    contradicting Lemma~\ref{l:vHole}.

    We claim that $v$ has a neighbor in the interior of $R$.
    Otherwise, $v''$ must be one of $x$ or $y$, say $x$ up to
    symmetry. But since there is no even wheel in $G$, $v$ has an odd
    number of neighbors in $P$ and in $H$, so $v$ must be adjacent to
    $y$. Since $G$ is square-free, $v$ cannot be adjacent to both
    $u'$, $u''$, so suppose up to symmetry that is not adjacent to
    $u''$. Hence, $u''$ is the unique neighbor of $u$ in some
    $v$-sector of $H$ (moreover in its interior), a contradiction to
    Lemma~\ref{l:vSector}.

    Now, by considering a path $R_3$ from $u$ to $v$ with interior in
    the interior of $R$ (which exists from the two claims we just
    proved), we see that $R_1$, $R_2$ and $R_3$ form a theta, a
    contradiction.
\end{proof}

\begin{lemma}
  \label{l:sumUp}
  Let $H$ be a hole in a graph $G\in \cal C$. If $u$ and $v$ are
  non-adjacent vertices of $G\sm H$ that cross, then $u$ and $v$ are
  both clones w.r.t.~$H$ and they have exactly two common neighbors on
  $H$.
\end{lemma}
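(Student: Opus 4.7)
The plan is to combine Lemma~\ref{l:notNested} with Lemmas~\ref{l:vHole}, \ref{l:vMinorHole}, \ref{l:majorClone}, and~\ref{l:corn}, and to lean on one recurring observation: if $v$ is a clone of $y'$ w.r.t.~$H$, then the long $v$-sector of $H$ has vertex set $V(H)\sm\{y'\}$, so any vertex whose $H$-neighborhood avoids $y'$ is automatically nested with~$v$.

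First I would apply Lemma~\ref{l:notNested} and dispose of its outcomes~\ref{i:corn3} and~\ref{i:twoClones}. In~\ref{i:corn3}, $N_H(u)=N_H(v)$ is an independent triple, so neither vertex is a clone; by Lemma~\ref{l:vHole} both are therefore major, and then Lemma~\ref{l:corn} forces $uv\in E(G)$, contradicting the hypothesis. In~\ref{i:twoClones}, $N_H(u)=N_H(v)=\{x,y,z\}$ with $xyz$ an induced subpath of $H$, hence $xz\notin E(G)$, and since $uv\notin E(G)$ while $u$ and $v$ are both adjacent to $x$ and to~$z$, the set $\{u,v,x,z\}$ induces a square, contradicting $G\in\cal C$. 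So we are in outcome~\ref{i:cross}.

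Next I would show that $u$ is a clone (and by symmetry so is~$v$). Since pending vertices and caps are nested with every vertex of $G\sm V(H)$, Lemma~\ref{l:vMinorHole} leaves only \emph{major} or \emph{clone} as possibilities for~$u$. Assume for contradiction that $u$ is major; by Lemma~\ref{l:corn}, $v$ cannot also be major, so $v$ is a clone of some $y'$ with $N_H(v)=\{x',y',z'\}$ and $x'y'z'$ an induced path of $H$. Lemma~\ref{l:majorClone} then promotes $uv\notin E(G)$ to $uy'\notin E(G)$, whence $N_H(u)\subseteq V(H)\sm\{y'\}$ lies entirely in the long $v$-sector; hence $u$ and $v$ are nested, contradicting that they cross.

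So both $u$ and $v$ are clones; write $u$ as a clone of~$y$ and $v$ as a clone of~$y'$. Re-applying the long-sector observation in both directions forces $y'\in N_H(u)=\{x,y,z\}$ and $y\in N_H(v)=\{x',y',z'\}$. If $y'=y$, then $v$ is also a clone of $y$, so $N_H(v)=N_H(u)$ and we return to outcome~\ref{i:twoClones}, already excluded. Hence, up to the $x/z$ symmetry, $y'=x$; then $v$ is a clone of $x$ with $N_H(v)=\{a,x,y\}$, where $a$ is the $H$-neighbor of $x$ distinct from~$y$. Since $H$ is chordless, $a\ne z$, and therefore $N_H(u)\cap N_H(v)=\{x,y\}$, of size exactly~$2$. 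The main obstacle is psychological rather than technical: recognizing that the long-$v$-sector observation is the single tool that both eliminates the ``$u$ major, $v$ clone'' configuration and pins the overlap down to exactly two consecutive vertices; once that is in hand, everything else reduces to routine case checking.
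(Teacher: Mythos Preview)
Your proof is correct and follows essentially the paper's route: both arguments reduce to $u,v$ being major or clones, kill the both-major case via Lemma~\ref{l:corn}, the mixed case via Lemma~\ref{l:majorClone}, and the both-clones case via a square. The only differences are cosmetic: your preliminary detour through Lemma~\ref{l:notNested} is not needed (the paper goes straight to the major/clone case split), and your explicit ``long $v$-sector'' observation is precisely what the paper compresses into the phrases ``it follows that $u$ and $v$ are nested'' and ``they have two or three common neighbors on $H$ (because they cross)''.
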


\begin{proof}
  Since a vertex with no neighbor in $H$, a cap or a pendant vertex is
  nested with any other vertex outside $H$, by
  Lemma~\ref{l:vMinorHole}, $u$ and $v$ are major or clones. If they
  are both major, there is a contradiction by Lemma~\ref{l:corn}.  If
  $u$ is a clone of $y$ and $v$ is major (or vice versa), then by
  Lemma~\ref{l:majorClone}, $vy\notin E(G)$, and it follows that $u$
  and $v$ are nested, a contradiction.  If $u$ and $v$ are both
  clones, then they have two or three common neighbors on $H$ (because
  they cross). If they have three common neighbors, then $G$ contains
  a square, a contradiction. Hence, they have two common neighbors as
  claimed.
\end{proof}

\begin{lemma}
  \label{l:zeroNested}
  Let $H$ be a hole in a graph $G \in \cal C$ and let $P = u\dots v$ be a
  path of length at least~1, vertex-disjoint from $H$, and such that
  $u$ and $v$ have neighbors in $H$ and no internal vertex of $P$ has
  neighbor in $H$.  If $u$ and $v$ are nested, then one of the
  following holds (up to a swap of $u$ and $v$):

  \begin{enumerate}
  \item\label{i:zero5} $P$ has length 1, $u$ is major or is a clone,
    and $N_H(v)$ is an edge that contains exactly one neighbor of $u$.
  \item\label{i:zero6} $u$ is a major vertex or a clone, $v$ is a cap
    and $N_H(v) \subseteq N_H(u)$.
  \item\label{i:zero7} $|N_H(v)| = 1$ and $N_H(v) \subseteq N_H(u)$. 
  \item\label{i:zero8} $N_H(u) \cup N_H(v)$ is an edge of $H$.
  \end{enumerate}
\end{lemma}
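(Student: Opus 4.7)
\textit{Proof proposal.} My plan is a case analysis on the types (pending, cap, clone, or major) of $u$ and $v$ with respect to $H$, as classified by Lemmas~\ref{l:vHole} and~\ref{l:vMinorHole}. Write $\ell$ for the length of $P$. I first dispose of the case where both $u$ and $v$ are clones or major. Then $uv \notin E(G)$ by Lemma~\ref{l:sumUpMC}, so $\ell \geq 2$; since both have at least two neighbors on $H$ and are nested, there is a $u$-sector $S=\alpha\cdots\beta$ of $H$ containing $N_H(v)$. Letting $v_1, v_s$ denote the $v$-neighbors in $S$ closest to $\alpha$ and to $\beta$ respectively, the three paths $u\text{-}P\text{-}v$, $u\text{-}\alpha\text{-}\cdots\text{-}v_1\text{-}v$, and $u\text{-}\beta\text{-}\cdots\text{-}v_s\text{-}v$ are internally disjoint and chordless with no extra edges between them (the other $u$-neighbors on $H$ lie outside $S$, the other $v$-neighbors on $H$ lie strictly between $v_1$ and $v_s$, and $P$'s interior has no neighbor on $H$), so they form a theta, a contradiction. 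Hence at least one of $u, v$, say $u$, is pending or a cap.

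In each remaining sub-case I would check whether an outcome holds directly (possibly after swapping $u$ and $v$) and otherwise build a forbidden configuration. Outcomes~\ref{i:zero7} and~\ref{i:zero8} hold when $u$ is pending and $u' \in N_H(v)$; outcome~\ref{i:zero8} holds when both are caps with identical neighborhoods; outcome~\ref{i:zero6} (after swap) holds when $u$ is a cap and $v$ is a clone or major with $N_H(u) \subseteq N_H(v)$; and outcome~\ref{i:zero5} (after swap) holds when $u$ is a cap, $v$ is a clone or major, $|N_H(u) \cap N_H(v)| = 1$, and $\ell = 1$. The sub-cases where none of these applies yield a forbidden subgraph: a theta from $u'$ to some $v$-neighbor (or from $u'$ to $v$ itself) when $u$ is pending, $u' \notin N_H(v)$, and $v$ is pending or clone/major; a pyramid with apex $u'$ and triangle $\{v, x, y\}$ when $u$ is pending and $v$ is a cap at $\{x, y\}$ with $u' \notin \{x, y\}$; a prism from the triangles $\{u, x_u, y_u\}$, $\{v, x_v, y_v\}$ together with $P$ and the two arcs of $H$ separating them when both are caps with disjoint neighborhoods; and a pyramid with apex $v$ and triangle $\{u, x_u, y_u\}$ built from $v\text{-}P\text{-}u$ and two paths along the $v$-sector of $H$ containing $\{x_u, y_u\}$ when $u$ is a cap, $v$ is a clone or major, and $N_H(u) \cap N_H(v) = \emptyset$. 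Each of these constructions needs to be verified for internal disjointness, chordlessness, and absence of cross edges.

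The main obstacle is the two remaining partial-overlap sub-cases — (A) both $u$ and $v$ caps with $|N_H(u) \cap N_H(v)| = 1$, and (B) $u$ a cap, $v$ a clone or major, $|N_H(u) \cap N_H(v)| = 1$, and $\ell \geq 2$ — which I would handle by a single hole construction. Let $c$ be the unique common neighbor and $w$ the other $H$-neighbor of $u$. In~(A) set $v'$ equal to the second cap vertex of $v$; in~(B) pick $v' \in N_H(v) \setminus \{c\}$ minimizing the distance from $w$ along the path $H \setminus c$ (which exists since $|N_H(v)| \geq 2$ and $w \notin N_H(v)$). I would form the cycle
\[
  H' \;=\; u\text{-}w\text{-}\cdots\text{-}v'\text{-}v\text{-}P\text{-}u,
\]
concatenating the edge $uw$, the $(w, v')$-subpath of $H \setminus c$, the edge $v'v$, and $P$ traversed from $v$ back to $u$. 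Chordlessness of $H'$ relies on: $H$ chordless; $N_H(u) = \{c, w\}$ with $c \notin V(H')$; the minimality of $v'$, ruling out other $v$-neighbors in the $(w, v')$-subpath; $P$'s interior having no neighbor on $H$; and $\ell \geq 2$ in~(B), ruling out the chord $uv$. Now $c$'s neighbors in $H'$ are $\{u, v, w\}$ in~(B) and $\{u, v, w, v'\}$ in~(A), and $v$ lies at distance at least $2$ from both $u$ and $w$ in $H'$, so these neighbors are not contained in any $3$-vertex sub-path of $H'$; hence $c$ is major with respect to $H'$. But $uw \in E(G)$, so Lemma~\ref{l:vHole}'s dichotomy — $c$ has either at least five neighbors or exactly three pairwise non-adjacent ones in $H'$ — fails in both sub-cases, a contradiction.
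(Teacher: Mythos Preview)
Your proposal is correct and follows essentially the same approach as the paper: a case analysis on the types of $u$ and $v$ (pending, cap, clone, major), producing a theta, prism, pyramid, or even wheel whenever none of the four outcomes applies. Your organization differs slightly (you first eliminate the case where both are clone/major, whereas the paper orders the cases by the highest type present), and your unified treatment of the partial-overlap sub-cases~(A) and~(B) via the auxiliary hole $H'$ together with Lemma~\ref{l:vHole} is a mild repackaging of the paper's direct even-wheel and pyramid constructions in those same sub-cases --- but the substance is identical.
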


\begin{proof}
By Lemma~\ref{l:vMinorHole}, $u$ and $v$ are major, clone, cap or
pending. We may therefore consider four cases.  

\medskip 
  \noindent{\bf Case 1.} At least one of $u$ and $v$ is major.

  Up to symmetry, we suppose that $u$ is major.

  Suppose that $v$ is also major. We apply Lemma~\ref{l:corn} to $u$
  and $v$. Since $u$ and $v$ are nested, $P$ has length at least~2.  Hence
  $G$ contains a theta, a contradiction.  So, we may assume that $v$
  is minor.

  Suppose that $v$ is a clone of some vertex $x\in V(H)$.  Since $u$
  and $v$ are nested, $ux\notin E(G)$.  By Lemma~\ref{l:sumUpMC}, $P$
  has length at least~2.  So, $N_H(v)$ is included in some $u$-sector
  $Q$ of $H$ and $P$ and $Q\sm x$ form a theta, a contradiction.
  
  Suppose that $v$ is a cap and $N_H(v) = xy$.  If $x$ and $y$ are in
  the interior of some $u$-sector $Q$ of $H$, then $P$ and $Q$ form a
  pyramid. Hence, there exists a $u$-sector $R=u'\dots u''$ such that
  w.l.o.g.\ $x=u'$ and $y\in V(R)$. If $y = u''$ then~\ref{i:zero6} holds. If
  $y\neq u''$, then $P$ has length~1, for otherwise $P$ and $R$ form a
  pyramid.  Hence,~\ref{i:zero5} holds.

  Suppose that $v$ is pending. Then~\ref{i:zero7} holds for otherwise
  $N_H(v)$ is in the interior of some $u$-sector of $H$ that together
  with $P$ forms a theta.

  \medskip 
  \noindent{\bf Case 2.} None of $u, v$ is major, and at least one of $u, v$ is a
  clone.

  Up to symmetry, suppose that $u$ is a clone of $x$.

  Suppose that $v$ is a clone of $y$. Since $u$ and $v$ are nested, we
  have $x\neq y$ and $xy\notin E(G)$. By Lemma~\ref{l:sumUpMC}, $P$
  has length at least~2, so $P$ and $H\sm \{x, y\}$ form a theta from
  $u$ to $v$, a contradiction.

  Suppose that $v$ is a cap, and let $yz$ be the two neighbors of
  $v$. If $x\in \{y, z\}$, then~\ref{i:zero6} holds, so suppose $x\notin \{y,
  z\}$. Hence $P$ and $H\sm x$ form a pyramid, unless~\ref{i:zero5}
  holds.

   Suppose that $v$ is pending. Then~\ref{i:zero7} holds for otherwise
  $H\sm x$ and $P$ form a theta.
  
\medskip 
    \noindent{\bf Case 3}. None of $u, v$ is major or a clone, and at least one of
  $u, v$ is a cap.

  Up to symmetry, suppose that $u$ is a cap. 

  Suppose that $v$ is also a cap. Then~\ref{i:zero8} holds, for
  otherwise $H$ and $P$ form a prism or an even wheel. 

  Suppose that $v$ is pending. Then~\ref{i:zero7}  holds for otherwise
  $H$ and $P$ form a pyramid.
  
\medskip 
  \noindent{\bf Case 4}. Both $u, v$ are pending vertices.

  Then~\ref{i:zero7} or~\ref{i:zero8} holds, for otherwise  $H$ and
  $P$ form a theta. 

\end{proof}

\begin{lemma}
  \label{l:pathStrong}
  Let $G$ be a graph in $\cal C$, $H$ a hole in $G$ and $w$ a major
  vertex w.r.t.\ $H$. Suppose that $a, w', b, w''$ are four distinct
  vertices of $H$ that appear in this order along $H$ and such that
  $w', w''$ are adjacent to $w$. Then every path $P$ of $G\sm w$ from
  $a$ to $b$ has an internal vertex adjacent to $w$.
\end{lemma}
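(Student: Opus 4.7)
The plan is a proof by contradiction. Assume some path $P$ of $G\sm w$ from $a$ to $b$ has no internal vertex adjacent to $w$. Because $H$ is induced and $a,b$ are not consecutive on $H$ (the vertex $w'$ separates them on one arc and $w''$ on the other), we have $ab\notin E(G)$, and hence $|P|\ge 2$. Denote by $Q_1$ the subpath of $H$ from $a$ to $b$ that contains $w'$ in its interior, and by $Q_2$ the subpath containing $w''$; both have length at least $2$.

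The goal is to arrange, via a careful choice of $P$, that (i)~$V(P)\cap V(H)=\{a,b\}$ and (ii)~$V(P)\sm\{a,b\}$ is anticomplete to $V(H)\sm\{a,b\}$. Once (i) and (ii) hold, the three paths $P$, $Q_1$, $Q_2$ are internally vertex-disjoint chordless paths of length at least $2$ from $a$ to $b$, and no edge joins any two of their interiors (between the interiors of $Q_1$ and $Q_2$ such an edge would be a chord of the induced hole $H$). So $P\cup Q_1\cup Q_2$ is a theta from $a$ to $b$, contradicting $G\in\cal C$.

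To realize (i) and (ii), choose $P$ minimizing $|V(P)\cap V(H)|$ and, subject to that, $|P|$. For (i), suppose an internal vertex $v\in V(P)\cap V(H)$ exists; then $v\notin N[w]$, so $v$ lies in the interior of some $w$-sector of $H$. Since $a$ and $b$ themselves sit in two distinct $w$-sectors (as $w'$ and $w''$ separate them along opposite arcs of $H$), $v$ belongs to the $w$-sector of $a$, of $b$, or to a third one; in each sub-case one may substitute the initial subpath $aPv$ (or the terminal $vPb$) by a sub-arc of $H$ lying entirely within a single $w$-sector. Such a sub-arc has no internal $w$-neighbor, so the resulting $(a,b)$-walk remains valid, and extracting a simple path produces $P'$ with $|V(P')\cap V(H)|<|V(P)\cap V(H)|$, contradicting the choice of $P$. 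Condition (ii) follows analogously: an edge from $V(P)\sm\{a,b\}$ to $V(H)\sm\{a,b\}$ permits a similar shortcut that, after simple-path extraction, strictly decreases the lexicographic measure $(|V(P)\cap V(H)|,|P|)$.

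The delicate point is the re-routing above, especially in (ii) when the endpoint on $H$ of the offending cross-edge is itself a $w$-neighbor: the detour must \emph{end} at that vertex, not pass through it internally, so the sub-arc of $H$ must be chosen to terminate precisely at that $w$-neighbor before rejoining $P$ via the cross-edge. The case-by-case book-keeping---which $w$-sector the detour lies in, and whether the offending vertex is in $N_H(w)$---is the principal technical obstacle, but each individual case is a short exercise in the $w$-sector structure already developed in the preceding lemmas.
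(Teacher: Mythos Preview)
Your reduction to a theta through $a$ and $b$ does not go through, and the obstruction is exactly where you flag it as ``a short exercise''. Two concrete problems:

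\textbf{The re-routing for (i) is self-defeating.} You minimise $|V(P)\cap V(H)|$ and then propose to replace $aPv$ by a sub-arc of $H$. But that sub-arc lies entirely in $H$, so the resulting path has \emph{more} vertices in $H$, not fewer. Condition~(i) is attainable, but by a different mechanism: minimise $|P|$ over all counterexample quadruples $(a,w',b,w'')$, and if some $v\in P^*\cap V(H)$ exists then one of $aPv$, $vPb$ is a strictly shorter counterexample (with $v$ taking over the role of $a$ or $b$). This is what the paper does.

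\textbf{Condition (ii) is in general unachievable.} After (i), write $P^*=u\dots v$ with $ua,vb\in E(G)$. Nothing prevents $u$ from being, say, a clone of a vertex of $H$ near $a$: then $u$ has three neighbours in $H$, two of which lie in $V(H)\setminus\{a,b\}$. Replacing $a$ by one of those neighbours produces a path of the same length whose first internal vertex is still $u$, still with three neighbours in $H$. No re-routing eliminates these adjacencies; the cross-edges from $u$ (and symmetrically $v$) to $H$ are a structural feature, not an artefact of a bad choice of $P$. The internal vertices of $P^*$ may also see $w'$ or $w''$ without seeing $w$. So the hoped-for theta from $a$ to $b$ simply need not exist.

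The paper's proof accepts these cross-edges and analyses them. Using that $w$ is major and $u,v$ are non-adjacent to $w$, Lemma~\ref{l:sumUp} forces $u,v$ to be nested with $w$, so $N_H(u)\subseteq H_a$ and $N_H(v)\subseteq H_b$. Then Lemma~\ref{l:zeroNested} pins down how $u,v$ attach, and a short case analysis (on whether $N_H(Q^*)$ meets $\{w',w''\}$) produces a hole $H'$ containing all of $N_H(w)$ except $w'$. The contradiction is then a parity one: one of $(H,w)$, $(H',w)$ is an even wheel (or, when $|N_H(w)|=3$, a theta). The argument genuinely needs this wheel-parity step; it is not a matter of tidying up the book-keeping in your outline.
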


\begin{proof}
  Consider a counter-example such that $P$ is of minimum length. Note
  that $P$ has length at least~$2$.  Let $H_a$ (resp.~$H_b$) be the
  path of $H$ from $w'$ to $w''$ that contains $a$ (resp.\ $b$).  Let
  $H_{w'}$ (resp.~$H_{w''}$) be the path of $H$ from $a$ to $b$ that
  contains $w'$ (resp.\ $w''$).

  \begin{claim}
    $P^*$ is vertex-disjoint from $V(H)\cup \{w\}$.
  \end{claim}

  \begin{proofclaim}
    Since $P$ is a counterexample, its interior contains no neighbor
    of $w$, and since $a$, $b$, $w'$ and $w''$ are distinct, we have
    $V(P) \cap \{w', w''\} = \emptyset$.  So, an internal vertex of
    $P$ that is in $H$ would yield a smaller counterexample, a
    contradiction to the minimality of $P$. 
  \end{proofclaim}

  We set $Q = P^* = u\dots v$, where $u$ is adjacent to $a$ and $v$ is
  adjacent to $b$ (possibly, $u=v$).

 \begin{claim}
    \label{c:struct1}
    $u$ (resp. $v$) and $w$ are nested. 
  \end{claim}

  \begin{proofclaim}
    Since $P$ is a counterexample, $u$ and $w$ are non-adjacent. Since
    $w$ is major, by Lemma~\ref{l:sumUp}, $u$ and $w$ are
    nested. Similarly, $v$ and $w$ are nested. 
  \end{proofclaim}
  
  \begin{claim}
    \label{c:struct2}
    $u$ and $v$ are distinct and nested w.r.t.~$H$.
  \end{claim}

  \begin{proofclaim}
    By~(\ref{c:struct1}), $N_H(u)\subseteq V(H_a)$ and
    $N_H(v)\subseteq V(H_b)$. So, $u$ and $v$ are distinct (because
    $a\notin N(v)$) and nested.
  \end{proofclaim}

  \begin{claim}
    \label{c:fin}
    We may assume that $H'=auQvbH_{w''}a$ is a hole that contains all
    neighbors of $w$ in $H$ except $w'$.
   
  \end{claim}

  \begin{proofclaim}
    By the minimality of $P$, no internal vertex of $Q$ has a neighbor
    in $H_a^*$ or in $H_b^*$.  It follows that
    $N_H(Q^*) \subseteq \{w', w''\}$.

    Suppose first that $N_H(Q^*) = \{w', w''\}$. Then, $H$ together
    with a path from $w'$ to $w''$ with interior in $Q^*$ form a
    theta from $w'$ to $w''$, a contradiction.

    Suppose now that $N_H(Q^*) = \emptyset$.  Then,
    by~(\ref{c:struct2}), we may apply Lemma~\ref{l:zeroNested} to
    $Q$.  Since $\{a\} \subseteq N_H(u)\subseteq V(H_a)$ and
    $\{b\} \subseteq N_H(v)\subseteq V(H_b)$, \ref{i:zero6},
    \ref{i:zero7} and~\ref{i:zero8} of Lemma~\ref{l:zeroNested} cannot
    hold.  So~\ref{i:zero5} of Lemma~\ref{l:zeroNested} must hold.  Up
    to symmetry, we may therefore assume that $w'$ is the unique
    common neighbor of $u$ and $v$ on $H$ and
    $\{b\} = N_H(v)\sm \{w'\}$.  Since by~(\ref{c:struct1}) $u$ and
    $w$ are nested,  $w'$ is the unique neighbor of $w$ in $H_{w'}$.
    Also, $a$ may be chosen as close a possible to $w''$ along $H_a$,
    so that $H'=auQvbH_{w''}a$ is a hole that contains all
    neighbors of $w$ in $H$ except $w'$.

    Suppose finally that $|N_H(Q^*)| = 1$.  Up to symmetry we may
    assume $N_H(Q^*) = \{w'\}$.  If $w$ has a neighbor $z$ in
    $H_{w'}^*\sm w'$, then suppose up to symmetry that it is in
    $aH_{w'}w'$.  We see that the four vertices $a$, $z$, $w'$ and
    $w''$ appear in this order along $H$, so that a path from $a$ to
    $w'$ with interior in $Q$ contradicts the minimality of $P$.  It
    follows that $w$ has no neighbor in $H_{w'}^*\sm w'$.  We may
    choose $a$ and $b$ closest to $w''$ along $H_a$ and $H_b$
    respectively.  Since by~(\ref{c:struct1}) $u$ and $w$ are nested
    (and $v$ and $w$ are nested), this implies that $H'=auQvbH_{w''}a$
    is a hole that contains all neighbors of $w$ in $H$ except $w'$.
  \end{proofclaim}

  If $w$ has exactly three neighbors in $H$, then by
  Lemma~\ref{l:vHole} they are pairwise non-adjacent and $H'$
  (from~(\ref{c:fin})) and $w$ form a theta, a contradiction. So, $w$
  has at least five neighbors in $H$, so that $(H', w)$ is a
  wheel. But then, one of $(H, w)$ or $(H', w)$ is an even wheel, a
  contradiction.
\end{proof}

We can now prove Theorem~\ref{th:struct} restated below. 

\begin{customthm}{\ref{th:struct}}
  Let $G$ be a graph in $\cal C$, $H$ a hole in $G$ and $w$ a major
  vertex w.r.t.\ $H$. If $C$ is a connected component of $G\sm N[w]$,
  then there exists a $w$-sector $P=x\dots y$ of $H$ such that
  $N(C) \subseteq \{x, y\} \cup (N(w) \sm V(H))$.
\end{customthm}

\begin{proof}
  Set $W= N[w] \cap V(H)$ and $Z = N[w] \sm V(H)$.  Clearly,
  $N(C) \subseteq W \cup Z$.  We have to prove that there exists a
  $w$-sector $P=x\dots y$ of $H$ such that
  $N_W(C) \subseteq \{x, y\}$.  Otherwise, we are in one of the
  following cases.

  \noindent{\bf Case~1:}  there exists $a, w', b, w''$ in $W$, appearing in this
  order along $H$, with $a, b \in N_W(C)$.  In this case, a path from
  $a$ to $b$ with interior in $C$ contradicts
  Lemma~\ref{l:pathStrong}.

  \noindent{\bf Case~2:} $|W|=3$ and $N_W(C) = W = \{x, y, z\}$ (and
  by Lemma~\ref{l:vHole}, $x$ $y$ and $z$ are pairwise non-adjacent).
  In this case, suppose first that $C$ contains a vertex $a$ in
  $H\sm W$.  Up to symmetry, we may assume that $a$ is in the
  $w$-sector of $H$ from $x$ to $y$.  But then, $x$, $a$, $y$ and $z$
  contradict Lemma~\ref{l:pathStrong} because $C$ contains the
  interior of a path from
  $a$ to $z$. Hence, $C \cap V(H) = \emptyset$.  If some vertex
  $v$ of $C$ has more than one neighbor in $\{x, y, z\}$, then $w$ and
  $v$ are contained in a square of $G$, a contradiction.  So, every
  vertex of $C$ has at most one neighbor in $W$.  Consider a
  path $P$ with interior $C$ and that is either from $x$ to $y$, from
  $y$ to $z$, or from $z$ to $x$.  Suppose that $P$ has minimum length
  among all such paths.  Up to symmetry, $P= x\dots y$, and by
  minimality, $P$ contains no neighbor of $z$. It follows that $P$ and
  $H$ from a theta from $x$ to $y$.
\end{proof}

\section{Proper separators}
\label{s:cutset}

A separator in a graph is \emph{proper} if it is minimal and not a
clique.  In view of Theorem~\ref{th:main}, we may restrict our
attention to proper separators because it is known that in any graph
$G$ there exists at most $O(|V(G)|)$ minimal clique separators and
that they can be enumrated in time~$O(|V(G)||E(G)|)$,
see~\cite{DBLP:journals/algorithms/BerryPS10} for details. 

Our goal is to prove that a graph in $\cal C$ contains a ``small''
number of proper separators.  This goal is achieved in the next
section. Here we prove a series of technical lemmas telling where precisely
the vertices of a proper separator are. 

If $C$ is a separator of $G$, a connected component $D$ of $G\sm C$ is
{\em full} if every vertex of $C$ has a neighbor in $D$.

\begin{lemma}
  \label{l:2comp} 
  If $C$ is a proper separator of a graph $G \in \cal C$, then $G\sm C$ has
  exactly two full connected components.  
\end{lemma}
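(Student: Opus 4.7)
The plan is to obtain a lower bound on the number of full components directly from the characterization of minimal separators recalled at the start of Section~\ref{s:results}, and then to rule out three or more full components by producing a theta.

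First I would note that, by the equivalent definition of a minimal separator given in Section~\ref{s:results}, $G \setminus C$ has at least two full connected components. So the whole content of the lemma is the upper bound: it suffices to show that $G \setminus C$ has \emph{at most} two full components.

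Suppose for contradiction that $G \setminus C$ has three pairwise distinct full components $A_1, A_2, A_3$. Since $C$ is not a clique, pick two non-adjacent vertices $u, v \in C$. For each $i \in \{1,2,3\}$, both $u$ and $v$ have neighbors in $A_i$ (because $A_i$ is full), so I can pick a shortest path $P_i$ from $u$ to $v$ with interior contained in $A_i$; being shortest, $P_i$ is chordless, and being a $u,v$-path with $uv \notin E(G)$, it has length at least $2$.

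The key observation is then that $P_1 \cup P_2 \cup P_3$ induces a theta: the interiors of the three paths lie in distinct components of $G \setminus C$, hence are pairwise anticomplete, and by construction no interior vertex has any neighbor in $C$ other than possibly more of $u,v$, which would create a chord of some $P_i$ and contradict minimality. This theta contradicts $G \in \mathcal{C}$, completing the proof. I do not expect any serious obstacle; the only point requiring a moment of care is checking that each $P_i$ is induced and that there are no edges between the interiors, both of which follow immediately from shortest-path choice and the component structure of $G \setminus C$.
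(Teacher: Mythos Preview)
Your proposal is correct and follows essentially the same approach as the paper: pick a non-edge $c_1c_2$ in $C$ (using that $C$ is proper), and observe that three paths from $c_1$ to $c_2$ through three distinct full components form a theta. Your write-up is somewhat more detailed (you make the lower bound explicit and spell out why the paths are chordless and pairwise anticomplete), but the idea is identical; the only slightly loose phrase is the remark that interior vertices have no neighbors in $C$ other than $u,v$ --- they may, but since those other vertices of $C$ are not part of the theta this is irrelevant, and what you actually need (and correctly argue) is that there are no extra edges among the vertices $\{u,v\}\cup \bigcup_i V(P_i^*)$.
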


\begin{proof}
  Otherwise, let $c_1c_2$ be a non-edge in $C$ and $X$, $Y$, $Z$ be
  full components of $G\sm C$.  There exits a theta from $c_1$ to
  $c_2$, made of three paths with interior in $X$, $Y$ and $Z$
  respectively.  This is a contradiction.
\end{proof}

In what follows, when $C$ is a proper separator, we denote by $L$ and $R$
the two full components of $G\sm C$ that exist by Lemma~\ref{l:2comp}.
We call a $C$-hole any hole $H$ such that $V(H) \cap C = \{c_1, c_2\}$
where $c_1, c_2$ are non-adjacent vertices, one component of
$H\sm \{c_1, c_2\}$ is in $L$ and the other one is in $R$.  We then
say that $H$ is a $(C, c_1, c_2)$-hole.

For a $(C, c_1, c_2)$-hole $H$, we use notation $H_L$ for the path of
$H$ from $c_1$ to $c_2$ with interior in $L$ and $H_R$ for the path of
$H$ from $c_1$ to $c_2$ with interior in $R$.  We let $l_1$ be the
neighbor of $c_1$ in $H_L$.  We define similarly vertices $r_1$, $l_2$,
and $r_2$.

A $C$-hole $H$ is \emph{clean} w.r.t.~$C$ if every major vertex
w.r.t.~$H$ is in $C$.  The next lemma shows that clean holes exist.

\begin{lemma}
  \label{l:existClean}
  Let $C$ be a proper separator of a graph $G\in \cal C$.  If $c_1$ and
  $c_2$ are non-adjacent vertices of $C$, then a shortest 
  $(C, c_1, c_2)$-hole $H$ is clean w.r.t.~$C$
\end{lemma}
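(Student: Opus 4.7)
The plan is to argue by contradiction: assume $H$ is a shortest $(C,c_1,c_2)$-hole but some major vertex $w$ with respect to $H$ lies outside $C$. Then $w$ belongs to $L$ or to $R$, and by symmetry we may assume $w\in L$. Since $C$ separates $L$ from $R$, the vertex $w$ has no neighbor in $V(H_R)\setminus\{c_1,c_2\}$, so $N_H(w)\subseteq V(H_L)$.

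Applying Lemma~\ref{l:vHole} to $w$ and $H$, we know that $w$ either has at least five neighbors in $H$, or exactly three pairwise non-adjacent ones. Let $a$ and $b$ be the neighbors of $w$ on $H_L$ that are closest to $c_1$ and to $c_2$ respectively in the natural order along $H_L$; they are distinct since $w$ has at least three neighbors on $H_L$. In both cases of Lemma~\ref{l:vHole}, the subpath $aH_Lb$ has length at least $4$: in the three-neighbors case because $aH_Lb$ contains three neighbors of $w$ with consecutive gaps of length at least $2$, and in the five-neighbors case because $aH_Lb$ contains at least five distinct vertices of $H_L$.

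I would then form the cycle $H'$ from $H$ by replacing $aH_Lb$ by the two-edge path $awb$, and show that $H'$ is a $(C,c_1,c_2)$-hole strictly shorter than $H$, contradicting the choice of $H$. The only subtle step is to verify that $H'$ is induced: by the extremality of $a$ and $b$, the vertex $w$ has no neighbor in $V(c_1H_La)\cup V(bH_Lc_2)$ other than $a$ and $b$ themselves; by the separator property of $C$, $w$ has no neighbor in $V(H_R)\setminus\{c_1,c_2\}$; and any other potential chord of $H'$ would already have been a chord of $H$, which is impossible. Once $H'$ is known to be a hole, $V(H')\cap C=\{c_1,c_2\}$ and $|V(H')|\leq|V(H)|-2$ are immediate (the $L$-arc of $H'$ passes through $w$ and the $R$-arc is the unchanged $H_R$), giving the contradiction. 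The main obstacle is this chord-freeness check, which reduces entirely to the extremality of $a,b$ and the $L$--$R$ separation property of $C$.
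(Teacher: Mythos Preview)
Your argument is correct and follows essentially the same approach as the paper's proof. The paper's version is terser: it simply says that if $N_H(v)\subseteq V(H_L)$ then ``a shorter $(C,c_1,c_2)$-hole exists (using $v$ as a shortcut)'', whereas you spell out the construction of $H'$ and verify it is an induced cycle. One small point you state without justification is that $w$ must lie in $L$ or $R$ (rather than in some other component of $G\setminus C$); this is immediate since a major vertex has at least three neighbors in $H$ while any other component would force $N_H(w)\subseteq\{c_1,c_2\}$, but it is worth saying. Also note that for the length comparison you only need $|aH_Lb|\geq 3$, which follows directly from the definition of ``major''; your appeal to Lemma~\ref{l:vHole} gives the stronger bound $\geq 4$ and is fine but not strictly necessary.
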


\begin{proof}
  Consider a vertex $v\notin C$ that is major w.r.t\ $H$. If
  $N_H(v) \subseteq V(H_L)$, then a shorter $(C, c_1, c_2)$-hole
  exists (using $v$ as a shortcut), a contradiction. Similarly, there
  is a contradiction if $N_H(v) \subseteq V(H_R)$.  It follows that
  $v$ has neighbors in both $H_L^*$ and $H_R^*$, and in particular in
  both $L$ and $R$, a contradiction. This proves that $H$ is clean
  w.r.t.\ $C$.
\end{proof}

Let $C$ be a proper separator of a graph $G$ and $H$ be a $C$-hole. A
vertex in $G$ is \emph{$(C, H)$-heavy} if it is major w.r.t.\ $H$ and
has neighbors in the interiors of both $H_L$ and $H_R$.  Observe that
a $(C, H)$-heavy vertex must be in $C$, because it has neighbors in
both $L$ and $R$.

\begin{lemma}
  \label{l:heavy}
  Let $C$ be a proper separator of a graph $G\in \cal C$.  Let $H$ and
  $H'$ be two $(C, c_1, c_2)$-holes that are clean w.r.t.~$C$.  A
  vertex in $C$ is $(C, H)$-heavy if and only if it is $(C, H')$-heavy.
\end{lemma}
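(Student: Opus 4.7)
The plan is to prove, by symmetry between $H$ and $H'$, that if $v \in C$ is $(C,H)$-heavy then $v$ is $(C,H')$-heavy. Being $(C,H)$-heavy requires $v$ to be major w.r.t.~$H$, so $v \notin V(H)$; combined with $V(H) \cap C = \{c_1, c_2\}$ this gives $v \in C \sm \{c_1, c_2\}$, and symmetrically $v \notin V(H')$. I will show that $v$ is major w.r.t.~$H'$ and has neighbors in the interior of $H'_L$ and in the interior of $H'_R$.

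First I will produce those interior neighbors via Lemma~\ref{l:pathStrong}. Pick $l \in N_H(v)$ in the interior of $H_L$ and $r \in N_H(v)$ in the interior of $H_R$, which exist since $v$ is $(C,H)$-heavy. Then $c_1, l, c_2, r$ are four distinct vertices of $H$ appearing in this cyclic order, with $l, r \in N(v)$, so Lemma~\ref{l:pathStrong} applied with $w = v$, $w' = l$, $w'' = r$, $a = c_1$, $b = c_2$ shows that every path of $G \sm v$ from $c_1$ to $c_2$ has an internal neighbor of $v$. Since $v \notin V(H')$, both $H'_L$ and $H'_R$ are such paths and therefore contribute neighbors $l' \in N(v)$ in the interior of $H'_L$ and $r' \in N(v)$ in the interior of $H'_R$.

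Next I will promote this to majority of $v$ w.r.t.~$H'$. As $l'$ and $r'$ lie in different components of $G \sm C$, they are non-adjacent; Lemma~\ref{l:vMinorHole} then forces $v$ either to be major w.r.t.~$H'$ (in which case we are done) or to be a clone w.r.t.~$H'$ (the pending and cap cases being excluded by two non-adjacent neighbors). In the clone case, $N_{H'}(v) = \{x, y, z\}$ with $xyz$ an induced path of $H'$ and $\{l', r'\} = \{x, z\}$; the middle vertex $y$ has a neighbor in $L$ and one in $R$, so $y \in V(H') \cap C = \{c_1, c_2\}$. Up to symmetry $y = c_1$, so $N_{H'}(v) = \{l'_1, c_1, r'_1\}$ and $v c_1 \in E(G)$.

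The main obstacle is to rule out this clone case. The plan is to apply Theorem~\ref{th:struct} to $H$ and the major vertex $v$: let $K$ be the connected component of $G \sm N[v]$ containing $c_2$ (well defined since $c_2 \notin N(v)$, as $c_2 \notin N_{H'}(v)$). Since $l'_1$ is the only neighbor of $v$ in the interior of $H'_L$, the subpath of $H'_L$ from $l'_1$ to $c_2$ (with $l'_1$ excluded) lies in $L \sm N(v)$, attaches to $c_2 \in K$, and hence lies in $K$; likewise for $H'_R$. Therefore $l'_1, r'_1 \in N(K)$. If $l'_1 \in V(H)$ it must coincide with $l_1$, the unique neighbor of $c_1$ in $H_L$, and analogously $r'_1 \in V(H)$ forces $r'_1 = r_1$. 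In the symmetric subcase $l'_1 = l_1$ and $r'_1 = r_1$, Theorem~\ref{th:struct} forces the $v$-sector $x_K \ldots y_K$ to satisfy $\{x_K, y_K\} = \{l_1, r_1\}$; but Lemma~\ref{l:vHole} gives $|N_H(v)| \geq 5$ (since $l_1, c_1, r_1$ are not pairwise non-adjacent), so $v$ has further neighbors in $V(H) \sm \{l_1, c_1, r_1\}$ that would lie in the interior of any such $v$-sector, a contradiction. The asymmetric subcases (one of $l'_1, r'_1$ outside $V(H)$) are handled similarly by treating the stray neighbor as an element of $N(v) \sm V(H)$ and repeating the counting; I expect this to be the most delicate part of the argument.
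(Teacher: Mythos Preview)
Your reduction to the clone case is correct and matches the paper: using Lemma~\ref{l:pathStrong} to force internal neighbors of $v$ in both $H'_L$ and $H'_R$, and then Lemma~\ref{l:vMinorHole} to conclude that $v$ is a clone of $c_1$ (or $c_2$) w.r.t.~$H'$, is exactly Claim~(\ref{c:heavyClone}) of the paper's proof.

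The gap is in how you exclude the clone case. Your symmetric subcase $l'_1=l_1$ and $r'_1=r_1$ is fine: then $l_1,r_1\in N(K)\cap V(H)$ forces a $v$-sector of $H$ with endpoint set $\{l_1,r_1\}$, and since $c_1\in N(v)$ rules out the short arc while $|N_H(v)|\ge 5$ rules out the long arc, you get a contradiction. But the asymmetric subcases do not yield to ``repeating the counting''. If, say, $l'_1\notin V(H)$, then $l'_1\in N(v)\sm V(H)$ and Theorem~\ref{th:struct} places no constraint on it: the conclusion $N(K)\subseteq\{x,y\}\cup(N(v)\sm V(H))$ simply absorbs $l'_1$ into the second set. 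All you can extract from Theorem~\ref{th:struct} is that the $v$-sector of $H$ containing $c_2$ is the unique one meeting $K$, and (when $r'_1=r_1$) that $r_1$ is one of its ends, hence $N_{H_R}(v)=\{c_1,r_1\}$. This is compatible with $v$ having many neighbors in $H_L^*$ and does not produce a contradiction. The case $l'_1,r'_1\notin V(H)$ gives even less.

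The paper takes a different route once $v$ is known to be a clone of $c_1$ w.r.t.~$H'$. It first shows (after a harmless WLOG) that $l_1\neq l'_1$ and that $N_H(l'_1)\subseteq\{c_1,l_1\}$. It then lets $Q$ be the $v$-sector of $H$ containing $c_2$, walks along $H'_L$ from $l'_1$ towards $c_2$ until the first vertex $x$ with a neighbour in $Q$, and checks via Lemma~\ref{l:pathStrong} that the subpath $S=l'_1H'_Lx$ is vertex-disjoint from $H$ with only the obvious attachments. Finally $S$ together with $H$ violates Lemma~\ref{l:zeroNested}. This is genuinely more delicate than an application of Theorem~\ref{th:struct} and is where the real work lies.
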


\begin{proof}
  Otherwise, suppose up to symmetry that some vertex $v$ is
  $(C, H)$-heavy and not $(C, H')$-heavy.  Hence, $v$ has a neighbor
  $v_L$ in the interior of $H_L$ and a neighbor $v_R$ in the interior
  of $H_R$.  
  
  \begin{claim}
    \label{c:heavyClone}
    We may assume that $v$ is a clone of $c_1$ w.r.t.\ $H'$.
  \end{claim}

  \begin{proofclaim}
    The vertices $c_1, v_L, c_2, v_R$ are distinct and appear in this
    order along $H$. By Lemma~\ref{l:pathStrong}, the path $H'_L$ has
    an internal vertex adjacent to $v$.  Similarly, $H'_R$ has an
    internal vertex adjacent to $v$.  Since $v$ is not
    $(C, H')$-heavy, the only possibility is that $v$ is a clone of
    $c_1$ or $c_2$ w.r.t.\ $H'$, and up to symmetry, we suppose it is
    a clone of $c_1$.
  \end{proofclaim}

  \begin{claim}
    \label{c:vLIntern}
    We may assume that $v_L$ is an internal vertex of $l_1H_Lc_2$ (in
    particular, $H_L$ has length at least~3).
  \end{claim}

  \begin{proofclaim}
    Since $v$ is not a clone w.r.t.\ $H$,
    $N_H(v)\not\subseteq\{c_1, l_1, r_1\}$.  Since
    by~(\ref{c:heavyClone}) $vc_2\notin E(G)$, $v$ has a neighbor in
    the interior of either $l_1H_Lc_2$ or $r_1H_Rc_2$.  Up to
    symmetry, we may assume that $v$ has a neighbor in the interior of
    $l_1H_Lc_2$.  Hence, $v_L$ can be chosen in the interior of
    $l_1 H_L c_2$.
  \end{proofclaim}

  \begin{claim}
    \label{c:heavyNeq}
    $l_1\neq l'_1$.
  \end{claim}

  \begin{proofclaim}
    Otherwise the vertices $l_1, v_L, c_2, v_R$ are distinct and
    appear in this order along $H$. By Lemma~\ref{l:pathStrong}, the
    path $l_1 H'_L c_2$ has an internal vertex adjacent to $v$, a
    contradiction to~(\ref{c:heavyClone}).  
  \end{proofclaim}
  
  \begin{claim}
    \label{c:heavyNeighlp1}
    $\{c_1\} \subseteq N_H(l'_1)\subseteq \{c_1, l_1\}$.
  \end{claim}
  
  \begin{proofclaim}
    Otherwise $l'_1$ has two non-adjacent neighbors in $H_L$, and
    since $H$ is clean w.r.t.\ $C$, by Lemma~\ref{l:vMinorHole},
    $l'_1$ is a clone of $l_1$ w.r.t.~$H$.  Hence, the hole
    $H_{l'_1\sm l_1}$ contains four distinct vertices (namely $l'_1$,
    $v_L$, $c_2$, $v_R$). By Lemma~\ref{l:pathStrong}, $v$ has a
    neighbor in the interior of $l'_1H'_Lc_2$. This
    contradicts~(\ref{c:heavyClone}).
  \end{proofclaim}
    
  By~(\ref{c:heavyClone}), $v$ is not adjacent to $c_2$.  It follows
  that $c_2$ is an internal vertex of some $v$-sector $Q$ of $H$.  We
  set $Q = q_L \dots q_R$ with $q_L\in L$ and $q_R \in R$.  Note that
  by~(\ref{c:vLIntern}), $q_L\neq l_1$.  Let $x$ be the vertex of
  $H'_L$ with a neighbor in $Q$, closest to $c_1$ along $H'_L$.
  Note that $x$ exists because of $c_2$.  We set $S = l'_1 H'_L x$.

  \begin{claim}
    \label{c:propS1}
    $S$ has length at least~1.
  \end{claim}

  \begin{proofclaim}
    Otherwise $S$ has length zero, so $x=l'_1$ and $l'_1$ has a
    neighbor in $Q$.  This contradicts~(\ref{c:heavyNeighlp1}).
  \end{proofclaim}

  \begin{claim}
    \label{c:propS}
    $S$ is vertex disjoint from $H$ and the only
    edges between $S$ and $H$ are $l'_1c_1$, possibly $l'_1l_1$, and the
    edges between $x$ and $Q$.  
  \end{claim}
  
  \begin{proofclaim}
    By~(\ref{c:heavyNeq}), $l'_1\notin V(H)$ and
    by~(\ref{c:heavyNeighlp1}), the only edges between $l'_1$ and $Q$
    are $l'_1c_1$ and possibly $l'_1l_1$.  Note that by the definition
    of $x$, $S$ is vertex disjoint from $Q$ and $x$ is the only vertex
    of $S$ with neighbors in $Q$.  Suppose that $S$ contains any vertex $b$ of
    $H$ or that there is any edge $ab$ with $a\in V(S)$, $b\in V(H)$ and
    $ab$ is not $l'_1c_1$, $l'_1l_1$ or an edge between $x$ and $Q$.  Then
     consider the four distinct vertices of $H$: $c_2$, $q_L$, $b$
    and $v_R$.   We see that $V(H'_L) \cup \{b\}$ contains a path $P$
    from $c_2$ to $b$. By~(\ref{c:heavyClone}), $P$ contains no
    internal vertex adjacent to $v$.  This contradicts
    Lemma~\ref{l:pathStrong}.
  \end{proofclaim}
  
  By~(\ref{c:propS1}) and~(\ref{c:propS}), $S$ and $H$ contradict
  Lemma~\ref{l:zeroNested}. 
\end{proof}

By Lemma~\ref{l:heavy}, for a vertex not in $C$, being heavy does not
depend on the choice of a particular hole, but only on the choice of
$C$, $c_1$ and $c_2$.  The notion of $(C, c_1, c_2)$-heavy vertex is
therefore relevant: a vertex is \emph{$(C, c_1, c_2)$-heavy} if for
some (or equivalently every) clean $(C, c_1, c_2)$-hole $H$, it is
$(C, H)$-heavy.  

Until the end of the section, we do not recall in the statements of the
lemmas that we deal with a graph $G$ in $\cal C$, a proper separator $C$,
a clean $(C, c_1, c_2)$-hole $H$ with  the following notation: $l_1$ is the
neighbor of $c_1$ in $H_L$ and $l'_1$ is the neighbor of $l_1$ in
$H_L\sm c_1$.  We define similarly vertices $r_1$, $r'_1$, $l_2$,
$l'_2$, $r_2$ and $r'_2$.

For $i=1, 2$, we denote by $L_i$ the set made of $l_i$ and all the
clones of $l_i$ w.r.t.~$H$.  We denote by $R_i$ the set made of $r_i$
and all the clones of $r_i$ w.r.t.~$H$.  We denote by $C_i$ the set of
vertices of $G$ that are not $(C, c_1, c_2)$-heavy and have neighbors
in both $L_i$ and $R_i$ (observe that $c_i \in C_i$).  Note that
possibly $L_1=L_2$ or $R_1=R_2$ (not both since $H$ is not a $C_4$).
Observe that $L_i$ is possibly not included in $L$, because some
vertices of $L_i$ can be in $C$. Similarly, $R_i$ is possibly not
included in $R$.  And $C_i$ is possibly not included in $C$ because
some vertices of $C_i$ might be in $L$ or in $R$ (not in both, because
as we will see, $C_i$ is clique and $L$ is anticomplete to $R$).

\begin{lemma}
  \label{l:LCRdisjoint}
  For $i\in \{1, 2\}$, $L_i$, $R_i$ and $C_i$ are pairwise disjoint cliques. Moreover,
  $L_i$ is anticomplete to $R_i$, and $C_i$ is anticomplete to $H\sm
  \{c_i, l_i, r_i\}$. 
\end{lemma}

\begin{proof}
  We prove the lemma for $i=1$ ($i=2$ is similar). Clearly, $L_1$ and
  $R_1$ are disjoint.  They are cliques for otherwise, $G$ contains a
  square.  By Lemma~\ref{l:sumUpMC}, $L_1$ is anticomplete to~$R_1$.
  It follows that $C_1$ is disjoint from both $L_1$ and $R_1$.

  Let us prove that $C_1$ is anticomplete to $H\sm \{c_1, l_1, r_1\}$.
  Otherwise, let $c\in C_1$ be a vertex with some neighbor in
  $H\sm \{c_1, l_1, r_1\}$.  Note that $c\neq c_1$. Also,
  $cc_1\in E(G)$ for otherwise $G$ contains a square (with $c$, $c_1$,
  and neighbors of $c$ in $L_1$ and $R_1$).  Let $l\in L_1$ and
  $r \in R_1$ be neighbors of $c$ (they exists by definition of
  $C_1$). We see that $c$ is major w.r.t.\ $H'=(H_{l/l_1})_{r/r_1}$.
  Hence, by Lemma~\ref{l:majorClone} (applied twice) $c$ is major
  w.r.t.\ $H$, and therefore $(C,c_1,c_2)$-heavy, a contradiction to
  the definition of $C_1$.

  It remains to prove that $C_1$ is a clique, so suppose for a
  contradiction that $c$ and $c'$ are non-adjacent vertices of $C_1$.
  Let $l, r$ be neighbors of $c$ in $L_1, R_1$ respectively, and
  $l', r'$ be neighbors of $c'$ in $L_1, R_1$ respectively.  If $c$
  and $c'$ have common neighbors in both $L_1$ and $R_1$, then $G$
  contains a square, a contradiction.  Hence, we may assume that
  $l\neq l'$ and that $cl', c'l\notin E(G)$.
  
  If $c$ and $c'$ have a common neighbor $r''\in R_1$, then the paths
  $r''cl$, $r''c'l'$ and $r''r'_1P l'_1$ form a pyramid.  So,
  $r\neq r'$, $cr'$ and $c'r\notin E(G)$.  Hence, the paths $rcl$,
  $rc'l'$ and $P$ form a prism.
\end{proof}

\begin{lemma}
  \label{l:LCRdisjoint}
  For $i\in \{1, 2\}$, $L_i$, $R_i$ and $C_i$ are pairwise disjoint cliques. Moreover,
  $L_i$ is anticomplete to $R_i$, and $C_i$ is anticomplete to $H\sm
  \{c_i, l_i, r_i\}$. 
\end{lemma}

\begin{proof}
  We prove the lemma for $i=1$ ($i=2$ is similar). Clearly, $L_1$ and
  $R_1$ are disjoint.  They are cliques for otherwise, $G$ contains a
  square.  By Lemma~\ref{l:sumUpMC}, $L_1$ is anticomplete to~$R_1$.
  It follows that $C_1$ is disjoint from both $L_1$ and $R_1$.

  Let us prove that $C_1$ is anticomplete to $H\sm \{c_1, l_1, r_1\}$.
  Otherwise, let $c\in C_1$ be a vertex with some neighbor in
  $H\sm \{c_1, l_1, r_1\}$.  Note that $c\neq c_1$. Also,
  $cc_1\in E(G)$ for otherwise $G$ contains a square (with $c$, $c_1$,
  and neighbors of $c$ in $L_1$ and $R_1$).  Let $l\in L_1$ be a
  neighbor of $c$ (it exists by definition of $C_1$).  We set
  $P=r'_1 H_R c_2 H_L l'_1$ (it has length at least~1, and is induced
  by $V(H)\sm \{l_1, c_1, r_1\}$).  So, $c$ has a neighbor in $P$ and
  we let $x$ be the neighbor of $c$ in $P$ closest to $l'_1$ along
  $P$.  We see that if $cl_1\notin E(G)$, then $l\neq l_1$ and the
  hole $cxPl'_1l_1c_1c$ is the rim of an even wheel with center $l$, a
  contradiction.  Hence, $cl_1\in E(G)$, and symmetrically
  $cr_1\in E(G)$.  It follows that $c$ is major w.r.t.\ $H$ and has
  neighbors in both $H_L^*$ and $H_R^*$, a contradiction to the
  definition of $C_1$.
  
  It remains to prove that $C_1$ is a clique, so suppose for a
  contradiction that $c$ and $c'$ are non-adjacent vertices of $C_1$.
  Let $l, r$ be neighbors of $c$ in $L_1, R_1$ respectively, and
  $l', r'$ be neighbors of $c'$ in $L_1, R_1$ respectively.  If $c$
  and $c'$ have common neighbors in both $L_1$ and $R_1$, then $G$
  contains a square, a contradiction.  Hence, we may assume that
  $l\neq l'$ and that $cl', c'l\notin E(G)$.
  
  If $c$ and $c'$ have a common neighbor $r''\in R_1$, then the paths
  $r''cl$, $r''c'l'$ and $r''r'_1P l'_1$ form a pyramid.  So,
  $r\neq r'$, $cr'$ and $c'r\notin E(G)$.  Hence, the paths $rcl$,
  $rc'l'$ and $P$ form a prism.
\end{proof}

For $i\in \{1, 2\}$, an \emph{$(L, i)$-viaduct w.r.t.\ $(C, H)$} is a
path $Q=u_L\dots u_R$ of $G$ such that:

  \begin{enumerate}
  \item $V(Q^*) \cap (V(H)\cup L_i \cup R_i) = \emptyset$;
  \item $V(Q^*)$ is anticomplete to $V(H\sm c_i)$;
  \item $C\cap V(Q) =\{u_L\}$;
  \item $u_R \in R_i\sm C$ (so possibly, $u_R=r_i$);
  \item one of the following holds:
    \begin{itemize}
    \item
      $u_L$ is major w.r.t.~$H$, $N_H(u_L)\subseteq V(H_L)$, $u_Lc_i \in E(G)$; or
    \item 
      $u_L\in L_i\cap C$.
    \end{itemize}
 \end{enumerate}
  
 For $i\in \{1, 2\}$, an \emph{$(R, i)$-viaduct w.r.t.\ $(C, H)$} is a
 path $Q=u_L\dots u_R$ of $G$ such that:

  \begin{enumerate}
  \item $V(Q^*) \cap (V(H) \cup L_i \cup R_i) = \emptyset$;
  \item $V(Q^*)$ is anticomplete to $V(H\sm c_i)$;
  \item $C\cap V(Q) =\{u_R\}$;
  \item $u_L \in L_i\sm C$ (so possibly, $u_L=l_i$);
  \item one of the following holds:
    \begin{itemize}
    \item
      $u_R$ is major w.r.t.~$H$, $N_H(u_R)\subseteq V(H_R)$, $u_Rc_i \in E(G)$; or
    \item 
      $u_R\in R_i\cap C$.
    \end{itemize}
 \end{enumerate}

 We call \emph{viaduct} any path that is an $(L, i)$-viaduct or an
 $(R, i)$-viaduct for $i\in \{1, 2\}$. 

 \begin{lemma}
   \label{l:descViaducts}
   For every $i\in \{1, 2\}$, every $(L,i)$-viaduct and $(R, i)$-viaduct has
   length at least~2 and contains an odd number of neighbors of $c_i$
   (at least~3).
  \end{lemma}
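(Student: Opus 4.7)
By symmetry, it suffices to treat the $(L,1)$-viaduct case; let $Q=u_L\dots u_R$ be such a viaduct. The plan is to first establish the length bound, then build a hole $E$ from $Q$ together with a suitable portion of $H$, and finally apply the absence of even wheels and thetas in $\cal C$ to pin down the parity of the number of neighbors of $c_1$ on $Q$.

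For the length bound, both $u_L$ and $u_R$ are adjacent to $c_1$: for $u_L$, this is either stipulated by the viaduct definition (the ``major'' case) or it follows from $u_L\in L_1$, since then $u_L$ is $l_1$ or a clone of $l_1$ and both are adjacent to $c_1$; for $u_R$ it follows from $u_R\in R_1$. Next, $u_L u_R\notin E(G)$: if $u_L\in L_1\cap C$ this is Lemma~\ref{l:LCRdisjoint} ($L_1$ is anticomplete to $R_1$); if $u_L$ is major with $N_H(u_L)\subseteq V(H_L)$, then $r_1\notin V(H_L)$ gives $u_L r_1\notin E(G)$, which suffices when $u_R=r_1$, and when $u_R$ is a clone of $r_1$ I apply Lemma~\ref{l:majorClone}. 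Consequently $Q$ has length at least~$2$.

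To construct $E$: in the case $u_L\in L_1\cap C$, let $H'$ be the hole obtained from $H$ by substituting the clones $u_L$ and $u_R$ for $l_1$ and $r_1$ respectively (doing nothing if $u_R=r_1$), and let $E$ be the cycle obtained from $H'$ by replacing the length-$2$ subpath $u_L c_1 u_R$ by $Q$. In the case $u_L$ major, let $s$ be the neighbor of $u_L$ on $H_L$ closest to $c_2$ along $H_L$; Lemma~\ref{l:vHole} combined with $N_H(u_L)\subseteq V(H_L)$ and $u_Lc_1\in E(G)$ forces $s\in V(H_L)\sm\{c_1,l_1\}$, and I set $E=u_L\,Q\,u_R\,r'_1\,H_R\,c_2\,H_L\,s\,u_L$ (substituting $u_R$ for $r_1$ if $u_R$ is a clone). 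In either construction $E$ is an induced cycle: $Q$ is induced by the viaduct definition, the $H$-part of $E$ is a subpath of a hole, $V(Q^*)$ is anticomplete to $V(H\sm c_1)$ by viaduct hypothesis, and any possible chord from $u_L$ into the $H$-part is excluded either by the choice of $s$ (when $u_L$ is major) or by the clone description $N_H(u_L)=\{c_1,l_1,l'_1\}$ (when $u_L\in L_1$).

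For the parity, since $N_H(c_1)=\{l_1,r_1\}$ and neither $l_1$ nor $r_1$ (unless $r_1=u_R$) lies in $V(E)$, the neighbors of $c_1$ on $V(E)$ are exactly $u_L$, $u_R$, and the internal vertices of $Q$ adjacent to $c_1$. Writing $m\geq 0$ for the number of such internal neighbors, $c_1$ has $m+2$ neighbors on the hole $E$. If $m=0$ then $c_1\notin V(E)$ has two non-adjacent neighbors on $E$, and $Q$, the other arc of $E$, and $u_L c_1 u_R$ are three internally vertex-disjoint paths of length at least~$2$ from $u_L$ to $u_R$, forming a theta, a contradiction. If $m\geq 2$ is even then $(E,c_1)$ is an even wheel, again a contradiction. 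Hence $m$ is odd and positive, so $Q$ contains the odd number $m+2\geq 3$ of neighbors of $c_1$. The main technical obstacle is verifying that $E$ is an induced cycle in the ``major'' case: this needs Lemma~\ref{l:vHole} to guarantee an $s$ strictly past $l'_1$ along $H_L$, and careful handling of the subcase where $u_R$ is a clone of $r_1$ (so that $u_R r'_1$ is an edge by the clone property and $r_1$ is appropriately skipped).
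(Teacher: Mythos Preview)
Your proof is correct and follows essentially the same approach as the paper: construct a hole $E$ (the paper calls it $J$) from $Q$ together with the appropriate arc of $H$, observe that $c_1$'s only neighbours on $E$ lie on $Q$, and then apply theta-freeness and even-wheel-freeness to force an odd number of neighbours, at least three. Your argument for the length bound is slightly different and arguably cleaner --- you show $u_Lu_R\notin E(G)$ directly via Lemma~\ref{l:LCRdisjoint} (when $u_L\in L_1$) and Lemma~\ref{l:majorClone} (when $u_L$ is major and $u_R$ is a clone of $r_1$), whereas the paper assumes length~1 and applies Lemma~\ref{l:zeroNested} to reach a contradiction --- but the core construction and parity argument coincide.
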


  \begin{proof}
    Suppose $Q = u_L\dots u_R$ is an $(L, 1)$-viaduct (the proof is similar
    for other types of viaducts).

    Since $u_L\in C$ and $u_R\notin C$, we have $u_L \neq u_R$.  So
    $Q$ has length at least~1, and suppose for a contradiction that is
    has length~1.  Then clearly $u_R\neq r_1$ and since $u_L$ and
    $u_R$ are nested, by Lemma~\ref{l:zeroNested} applied to $Q$ and
    $H$, and~\ref{i:zero5} is the only possible outcome.  So, one end
    of $Q$ is a cap, a contradiction to the definition of viaducts
    (observe however that if $u, v$ are vertices like in
    outcome~\ref{i:zero5} Lemma~\ref{l:zeroNested}, then $uvr_1$ is
    possibly a viaduct of length~2).  So $Q$ has length at least~2. 

    Observe that $u_L\notin V(H)$ (while $u_R$ is either in $V(H)$ or is
    a clone of $r_1$).  Let $x$ be the neighbor of $u_L$ in $H_L$,
    closest to $c_2$ along $H_L$.  Consider the hole $J$ induced by
    $V(Q) \cup V(xH_Lc_2) \cup (V(H_R) \sm \{c_1, r_1\})$ (note that
    $r_1$ may be in $J$, when $r_1=u_R$). Now, $c_1\notin V(J)$ and
    $c_1$ contains two non-adjacent neighbors in $J$ (namely $u_L$ and
    $u_R$), hence, by Lemmas~\ref{l:vHole} and \ref{l:vMinorHole}, $c$ is
    a clone or a major vertex w.r.t.~$J$, and it therefore has an odd
    number of neighbors in $J$ (at least~3).
  \end{proof}

  The \emph{potential} of $(C, c_1, c_2)$ is the number of
  $(C, c_1, c_2)$-heavy vertices.  The main result of this section is
  the following.
  
\begin{lemma}
  \label{l:main}
  Let $C$ be a proper separator of a graph $G\in \cal C$.  Let $c_1$
  and $c_2$ be non-adjacent vertices of $C$, chosen such that the
  potential of $(C, c_1, c_2)$ is maximum.  Let $H$ be a clean
  $(C, c_1, c_2)$-hole.  If $c\in C\sm \{c_1, c_2\}$, then one of the
  following statements holds:

  \begin{enumerate}
  \item\label{i:heavy} $c$ is $(C, H)$-heavy;
  \item\label{i:sets} For some $i\in \{1, 2\}$, $c$ has a neighbor in
    $L_i\sm C$ and a neighbor in $R_i\sm C$;
  \item \label{i:viaduct} $c$ is the end of some viaduct
    w.r.t.~$(C, H)$.
  \end{enumerate}
\end{lemma}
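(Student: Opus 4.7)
The plan is to suppose that $c \in C \setminus \{c_1, c_2\}$ satisfies neither (i) nor (ii), and to construct a viaduct with $c$ as its $C$-endpoint. The first task is to classify $c$ according to its interaction with $H$, using Lemmas~\ref{l:vHole} and~\ref{l:vMinorHole}. Since $c$ is not $(C,H)$-heavy, either $c$ is minor w.r.t.\ $H$, or $c$ is major with $N_H(c)$ contained in $V(H_L)$ or in $V(H_R)$. Up to the $L$--$R$ symmetry I treat the case where $c$'s $H$-neighbors (if any) lie in $V(H_L)$, and I pick the index $i \in \{1, 2\}$ so that $c_i$ is the endpoint of $H_L$ most naturally associated to $c$ (via the edge $cc_i$, or by $c$ being a clone of $l_i$, etc.).

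Next, I would show that $c$ already satisfies the ``$C$-end'' requirement of an $(L,i)$-viaduct: either $c$ is major with $N_H(c) \subseteq V(H_L)$ and $cc_i \in E(G)$, or $c \in L_i \cap C$. This is where the maximality of the potential of $(C, c_1, c_2)$ enters. If neither configuration held, one would swap $c_i$ with $c$: there is a clean $(C, c, c_{3-i})$-hole $H'$ obtained by rerouting $H$ through $c$ and a suitable neighbor of $c$, and by Lemma~\ref{l:heavy} every vertex that was $(C, c_1, c_2)$-heavy remains heavy for the new pair; meanwhile $c_i$ itself becomes heavy for $(C, c, c_{3-i})$. This strictly increases the potential, contradicting its maximality. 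Lemma~\ref{l:heavy} is what makes this swap argument well-defined, since being heavy does not depend on the particular clean hole chosen.

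Finally, I construct the path $Q$ witnessing the viaduct. Since $c \in C$ has neighbors in both full components, take a shortest path $Q = c\, v_1 \dots v_k\, u_R$ from $c$ to a vertex $u_R \in R_i \setminus C$, whose interior $v_1, \dots, v_k$ lies in $V(G) \setminus (V(H) \cup C)$; such a path exists because some neighbor of $c$ in $R$ can be routed through the connected component $R$ to reach $r_i$ or one of its clones. By shortest-path minimality one gets $C \cap V(Q) = \{c\}$ and $V(Q^*) \cap (V(H) \cup L_i \cup R_i) = \emptyset$, the failure of (ii) being exactly what prevents $Q^*$ from visiting $L_i \cup R_i$ prematurely. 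The delicate point, which I expect to be the main obstacle, is verifying the last clause of the viaduct definition: $V(Q^*)$ must be anticomplete to $V(H) \setminus \{c_i\}$. An internal vertex with a stray neighbor on $H \setminus c_i$ would, combined with $H$ and $c$, produce a theta, pyramid, prism, or even wheel; this is ruled out by Lemma~\ref{l:pathStrong} applied to appropriate major vertices of $H$, together with Lemmas~\ref{l:sumUp} and~\ref{l:vSector}. A secondary subtlety is the case where $c$ has no neighbor in $H$ at all, where the identification of $i$ and the choice of the first step of $Q$ both require care to prevent the path from re-entering $C$ or crossing to the wrong side of $H$.
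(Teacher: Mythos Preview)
Your strategy --- use the maximality of the potential to force $c$ into one of the viaduct-end configurations, then build the viaduct path --- is close in spirit to the paper's argument, but the order is inverted and the ``swap'' step hides almost all of the real work.

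Two concrete gaps. First, your appeal to Lemma~\ref{l:heavy} is incorrect. That lemma compares two clean $(C,c_1,c_2)$-holes for the \emph{same} pair $(c_1,c_2)$; it says nothing about replacing $c_1$ by $c$. The assertion ``every vertex that was $(C,c_1,c_2)$-heavy remains heavy for $(C,c,c_{3-i})$'' is precisely what the paper proves as a separate claim, and only after it has explicitly constructed the new hole $J$ and shown it clean. Both of those steps take genuine work and cannot be short-circuited by citing Lemma~\ref{l:heavy}. Moreover, $c_i$ need not become heavy for the new pair: in the paper's argument $c_1$ may end up a clone of $c$ in $J$, and that is exactly the branch that produces outcome~\ref{i:sets}, which you have assumed fails.

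Second, when $c$ has no neighbour in $H$ (or is a cap/pending vertex away from $\{l_i,c_i,r_i\}$), $c$ cannot be the $C$-end of any viaduct, so under your hypothesis that \ref{i:sets} fails you must reach a contradiction. Your plan is to ``reroute $H$ through $c$'', but there is nothing to reroute: $c$ is not on or near $H$. You need paths from $c$ to $H_L^*$ and to $H_R^*$ with controlled attachments to $H$, and this is precisely what the paper builds first: shortest paths $Q_L=c\dots c_L$ into $L$ and $Q_R=c\dots c_R$ into $R$, with $c_L,c_R$ the first vertices seeing $H_L^*,H_R^*$. The analysis of $c_L,c_R$ via Lemma~\ref{l:zeroNested} and (\ref{c:cloneInC}) is what delivers the viaduct (exactly when $c=c_L$ or $c=c_R$), or outcome~\ref{i:sets}, or the hole $J$ on which the potential argument is then run. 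In short, the path must be built and analysed before the swap, not after; your step~3 presupposes the output of the paper's entire construction, and your step~4 only handles one side.
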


\begin{proof}
  Since $C$ is a proper separator and $L$ is connected, there exists a
  path $Q_L = c \dots c_L$ such that $V(Q_L\sm c) \subseteq L$ and
  $c_L$ has neighbors in the interior of $H_L$ (possibly $c = c_L$).
  There exits a similar path $Q_R = c\dots c_R$.  We set
  $Q = c_LQ_L c Q_R c_R$ and suppose that $Q$ is minimal (so $Q_L$ and
  $Q_R$ are shortest paths).

  \begin{claim}
    \label{i:length1}
    We may assume that $Q$ has length at least~1. In particular,
    $c_L$ and $c_R$ are nested w.r.t.\ $H$.
  \end{claim}

  \begin{proofclaim}
    Otherwise, $Q = c= c_L= c_R$. Since $c$ has a neighbor in $H_L^*$
    and in $H_R^*$,  it is either a major vertex or a clone. If it is
    major, then it is heavy w.r.t.\ $H$ and~\ref{i:heavy} holds.  If it
    is a clone, it must be a clone of $c_1$ or $c_2$, so~\ref{i:sets}
    holds.
  \end{proofclaim}
  
  \begin{claim}
    \label{c:intern}
    We may assume that $c_1$ has neighbors in the interior of $Q$ and
    $c_2$ has no neighbors in the interior of $Q$.
  \end{claim}

  \begin{proofclaim}
    Suppose that both $c_1$ and $c_2$ have neighbors in the interior
    of $Q$. Then, $H$ and a shortest path from $c_1$ to $c_2$ with interior in
    the interior of $Q$ form a theta, a contradiction.

    So, suppose that none of $c_1, c_2$ have neighbors in the interior
    of $Q$.  Since by~(\ref{i:length1}) $c_L$ and $c_R$ are nested
    w.r.t.\ $H$, we apply Lemma~\ref{l:zeroNested} to $Q$.  Since
    $c_L$ has neighbors in the interior of $H_L$ and $c_R$ has
    neighbors in the interior of $H_R$, outcomes~\ref{i:zero6},
    \ref{i:zero7} and~\ref{i:zero8} cannot hold.

    Hence outcome \ref{i:zero5} holds.  So $Q=c_Lc_R$,
    $c\in \{c_L, c_R\}$ and exactly one of $c_1$ or $c_2$ (say $c_1$)
    is a common neighbor of $c_L$ and $c_R$, and up to symmetry, $c_L$
    is major or clone of $l_1$, and $c_R$ is a cap.  If $c=c_L$, then
    $c_Lc_Rr_1$ is an $(L, 1)$-viaduct and~\ref{i:viaduct} holds.  If
    $c=c_R$, then since $H$ is clean, $c_L$ cannot be major, so it is
    a clone, $c_L\in L_1\sm C$, $cr_1\in E(G)$ so~\ref{i:sets} holds.

    Hence, we may assume that exactly one of $c_1$ or $c_2$ has
    neighbors in the interior of $Q$, and up to symmetry, we may
    assume that it is $c_1$. 
  \end{proofclaim}

  \begin{claim}
    \label{c:newC}
    If $c_Lc_1\notin E(G)$, then $ N_H(c_L) = \{l_1\}$.  
  \end{claim}

  \begin{proofclaim}
    Let $x_L$ be the neighbor of $c_1$ in $Q$, closest to $c_L$ along
    $Q$ ($x_L$ exists by~(\ref{c:intern}) and $x_L\neq c_L$ by
    assumption).  Since $c_L$ and $x_L$ are nested (because $x_L$ has
    no neighbor in the interior of $H_L$), we may apply
    Lemma~\ref{l:zeroNested} to $c_L Q x_L$.  Since $c_L$ and $x_L$
    have no common neighbor on $H$,~\ref{i:zero5},
    \ref{i:zero6} and~\ref{i:zero7} do not
    hold. Hence~\ref{i:zero8} holds and
    $N_H(c_L) = \{l_1\} $.
  \end{proofclaim}

  \begin{claim}
    \label{c:pendingN}
    If $c_L$ is a cap or a pending vertex, then
    $\{l_1\} \subseteq N_H(c_L) \subseteq \{l_1, c_1\}$.

    If $c_R$ is a cap or a pending vertex, then
    $\{r_1\} \subseteq N_H (c_R) \subseteq \{r_1, c_1\}$.
  \end{claim}
  
  \begin{proofclaim}
    If $c_Lc_1\notin E(G)$, then our claim holds by~(\ref{c:newC}). 
   Otherwise, $c_Lc_1\in E(G)$, $c_L$ must be a cap
    and  $N_H(c_L) = \{c_1, l_1\}$.  The proof is similar for the
    claim about $c_R$.
  \end{proofclaim}

  \begin{claim}
    \label{c:cloneInC}
    If $c_L$  is a clone or a major vertex w.r.t.\ $H$,
    then  $c_1c_L\in E(G)$ and either $c_L\in C$ or $c_L \in L_1 \sm
    C$.

    The analogous statement holds for $c_R$. 
  \end{claim}

  \begin{proofclaim}
    By symmetry, it suffices to prove the statement for $c_L$, so
    assume that $c_L$ is a clone or a major vertex w.r.t.~$H$.
    By~(\ref{c:newC}), $c_Lc_1\in E(G)$.  If $c_L$ is major then
    $c_L\in C$ because $H$ is clean w.r.t.\ $C$.  If $c_L$ is clone and
    $c_L\notin C$, then $c_L\in L_1\sm C$ because $c_L$ has no
    neighbors in $R$.
  \end{proofclaim}

  \begin{claim}
    \label{c:cleftright}
    We may assume that $c_L\in L_1\sm C$ or
    $\{l_1\} \subseteq N_H(c_L) \subseteq \{l_1, c_1\}$, and
    $c_R\in R_1\sm C$ or
    $\{r_1\} \subseteq N_H(c_R) \subseteq \{r_1, c_1\}$.
  \end{claim}

  \begin{proofclaim}
    By symmetry it suffices to prove the statement about $c_L$.  If
    $c_L$ is a pending vertex or a cap, then the result follows
    by~(\ref{c:pendingN}).  So, suppose that $c_L$ is a clone or a
    major vertex. By~(\ref{c:cloneInC}), $c_1c_L\in E(G)$ and either
    $c_L\in C$ or $c_L\in L_1\sm C$.  We may assume that $c_L\in C$,
    and hence $c_R\notin C$. Note that by~(\ref{i:length1}) and since
    $c_1c_L\in E(G)$, if $c_L$ is a clone w.r.t.~$H$, then it is a
    clone of $l_1$, and if it is major then $N_H(c_L) \subseteq
    V(H_L)$.

    If $c_R$ is pending or cap w.r.t.~$H$, then by~(\ref{c:pendingN}),
    $\{r_1\} \subseteq N(c_R) \cap V(H) \subseteq \{r_1, c_1\}$ and
    hence $c_LQc_R r_1$ is an $(L, 1)$-viaduct and~\ref{i:viaduct}
    holds. So, we may assume that $c_R$ is a clone or a major vertex
    w.r.t.~$H$.  By~(\ref{c:cloneInC}) and since $c_R\notin C$, it
    follows that $c_R\in R_1$.  But then $Q$ is an $(L, 1)$-viaduct
    and~\ref{i:viaduct} holds.
  \end{proofclaim}

By~(\ref{c:cleftright}),
$\{l_1\} \subseteq N_H(c_L) \subseteq \{l'_1, l_1, c_1\}$ and
    $\{r_1\} \subseteq N_H(c_R) \subseteq \{r'_1, r_1, c_1\}$.
So, $V(Q) \cup V(H) \sm \{c_1\}$ contains  a hole
 that contains $Q$ and $c_2$, which we denote by $J$.  Note that $J$ is
 a $(C, c, c_2)$-hole.

  \begin{claim}
    \label{i:Jclean}
    $J$ is a clean w.r.t.\ $C$.  
  \end{claim}

  \begin{proofclaim}
    Otherwise, let $d\notin C$ be a vertex that is major w.r.t.~$J$.
    By symmetry, we may assume that $d\in L$.

    Since $d$ is major w.r.t.~$J$ and not major w.r.t.~$H$ (since $H$
    is clean), $d$ must have a neighbor in $Q$.  Let $d_L$
    (resp.~$d_R$) be the neighbor of $d$ in $Q$ that is closest to
    $c_L$ (resp.~$c_R$) along $Q$. Note that $d_L, d_R \in V(c_L Q c)$
    since $d\in L$.  If $d_LQd_R$ is of length greater than~2, then
    $V(Q_L)\cup \{d\}$ contains a path from $c$ to $c_L$ that is
    shorter than $Q_L$, contradicting the minimality of $Q_L$.  So,
    $d_LQd_R$ is of length at most~2.

    Since $d$ is major w.r.t.~$J$, it follows that
    $N_J(d) \not\subseteq V(Q)$. Suppose that
    $N_J(d) \subseteq V(Q) \cup \{c_2\}$. So $d$ is adjacent to $c_2$.
    By Lemma~\ref{l:vHole}, it follows that $d$ has exactly three
    neighbors in $J$ that are furthermore pairwise non-adjacent,
    namely $d_L$, $d_R$ and $c_2$.  But then $d_LQd_R$ and $d$ form a
    square. Therefore $d$ has a neighbor in $H_L\sm \{c_1, c_2\}$.  By
    minimality of $Q_L$, it follows that
    $N_Q(d) \subseteq \{c_L, c'_L\}$, where $c'_L$ is the neighbor of
    $c_L$ in $Q$. By Lemma~\ref{l:vHole} applied to $d$ and $J$, $d$
    has two non-adjacent neighbors in $J\sm Q$.  Since
    $V(J\sm Q) \subseteq V(H)$ and $d$ is not major w.r.t.~$H$, it
    follows that $d$ is a clone of some vertex $d'$ w.r.t.~$H$, where
    $d'$ is an internal vertex of $J\sm Q$ (so
    $d'\notin \{c_1, c_2, l_1\}$ and $d'=l'_1$ is possibly only when
    $c_L$ is not a clone).  So, by Lemma~\ref{l:vHole},
    $N_Q(d) = \{c_L, c'_L\}$.  If $c_L\in L_1$, then
    $(H_{c_L\sm l_1}, d)$ is an even wheel. So
    $N_H(c_L) \subseteq \{l_1, c_1\}$. Note that by minimality of $Q$,
    no internal vertex of $Q$ has a neighbor in $H\sm \{c_1, c_2\}$.
    Let $c'_1$ be the neighbor of $c_1$ in the interior of $Q$ that is
    closest to $c_L$ along $Q$ (it exists by~(\ref{c:intern})). If
    $N_H(c_L) = \{l_1\}$, then $H_{d\sm d'}$ and $c'_LQc'_1$ form a
    theta from $d$ to $c_1$.  So $N_H(c_L) = \{l_1, c_1\}$. Let $D$ be
    the path from $d$ to $l_1$ contained in
    $(H\sm \{d', c_2\})\cup \{d\}$.  Then $D$ and $c_LQc'_1$ form an
    even wheel with center $c_L$.
  \end{proofclaim}

  \begin{claim}
    \label{i:heavyImplies}
    Let $d \in C\sm \{c, c_1, c_2\}$.  If $d$ is $(C, H)$-heavy, then
    $d$ is $(C, J)$-heavy.
  \end{claim}

  \begin{proofclaim}
    For suppose that $d$ is $(C, H)$-heavy but not $(C, J)$-heavy.  So
    $d$ is major w.r.t.~$H$ and has neighbors in both $H_L^*$ and
    $H_R^*$.

    Suppose that $d$ does not have a neighbor in $J_L^*$.  Then
    $c_L\in L_1 \sm C$, $\{l_1\} \subseteq N_{H_L}(d) \subseteq \{l_1,
    c_1\}$, and $d$ is not adjacent to $c_L$.  But then since $d$ is
    major w.r.t.~$H$, by Lemma~\ref{l:vHole}, it follows that
    $H_{c_L\sm l_1}$ and $d$ form either an even wheel with center $d$
    or a theta. So $d$ has a neighbor in $J_L^*$, and by symmetry $d$
    has a neighbor in $J_R^*$.  Since $d$ is not major w.r.t.~$J$, it
    follows that $d$ is a clone of $c$ or $c_2$ w.r.t.~$J$.

    If $d$ is a clone of $c$ w.r.t.~$J$, then $N_H(d)\subseteq \{l_1, c_1,
    r_1\}$ contradicting the assumption that $d$ is major
    w.r.t.~$H$. So $d$ is a clone of $c_2$ w.r.t.~$J$.

    Since $H$ is of length greater than~4, w.l.o.g.\ $H_L$ is of
    length greater than~2. In particular, $J$ contains $l_2$. Since
    $d$ is major w.r.t.~$H$ and $H$ contains $l_2$ and $c_2$, by
    Lemma~\ref{l:vHole}, $d$ has at least five neighbors in $H$.  It
    follows that $d$ is adacent to $l_1$ or $r_1$.  If $d$ is adjacent
    to $l_1$, then $c_L\in L_1\sm C$ and $d$ is not adjacent to
    $c_L$. But then $(H_{c_L\sm l_1}, d)$ is an even wheel. So $d$ is
    not adjacent to $l_1$ and hence $N_H(d) = \{l_2, c_2, r_2, r_1,
    c_1\}$.  In particular, $r_1\neq r_2$, i.e.\ $H_R$ is of length
    greater than~2. But then, we get a contradiction by a symmetric
    argument.  
  \end{proofclaim}

  To conclude the proof, by~(\ref{i:Jclean}) $J$ is clean
  w.r.t.~$C$.  Also, $c_1$ has neighbor in $J_L^*$ and $J_R^*$, and is
  therefore major w.r.t.~$J$ or a clone of $c$.  In this last case,
  $c_L\in L_1\sm C$, $c_R\in R_1\sm C$ and $Q$ has length~2,
  so~\ref{i:sets} holds.   Hence, we may assume that $c_1$ is major
  w.r.t.~$J$.

  Note that $c$ is not major w.r.t.~$H$.  By~(\ref{i:heavyImplies}),
  we see that the number of $(C, J)$-heavy vertices is greater than
  the number of $(C, H)$-heavy vertices. So, by
  Lemma~\ref{l:heavy}, the potential of $(C, c, c_2)$ is greater than
  then potential of $(C, c_1, c_2)$, a contradiction to the choice of
  $c_1$ and $c_2$.
\end{proof}

\begin{lemma}
  \label{l:candidates}
  For  $i\in \{1, 2\}$, there does not exist both an $(L, i)$-viaduct and
  an $(R, i)$-viaduct. In particular, at least one of $L_i, R_i$
  contains no vertex of $C$. 
  \end{lemma}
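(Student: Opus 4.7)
The plan is to argue by contradiction. Assume that for some $i\in\{1,2\}$ (by symmetry, $i=1$) there exist both an $(L,1)$-viaduct $Q_L=u_L\dots u_R$ and an $(R,1)$-viaduct $Q_R=v_L\dots v_R$, chosen with $|V(Q_L)|+|V(Q_R)|$ minimum. Lemma~\ref{l:descViaducts} guarantees that each of $Q_L$, $Q_R$ has length at least~2 and contains an odd number (at least~3) of neighbors of $c_1$. From the viaduct definitions every endpoint $u_L,u_R,v_L,v_R$ is adjacent to $c_1$; since $u_L\in C$ but $v_L\notin C$ we have $u_L\neq v_L$, and similarly $u_R\neq v_R$.

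The key step is to combine the two viaducts into a single hole $J$ on which $c_1$ will be a wheel-center. In the basic case $u_L\in L_1\cap C$ and $v_R\in R_1\cap C$, the clique property of $L_1$ and $R_1$ from Lemma~\ref{l:LCRdisjoint} yields the edges $u_Lv_L$ and $u_Rv_R$, so I form $J$ by concatenating $Q_L$, the edge $u_Rv_R$, the reversal of $Q_R$, and the edge $v_Lu_L$. Chordlessness of $J$ is verified as follows: chords inside either $Q_L$ or $Q_R$ are forbidden because both are induced paths; any chord joining $V(Q_L^*)$ to $V(Q_R^*)$ would allow one of the two viaducts to be shortened, contradicting the minimality of $|V(Q_L)|+|V(Q_R)|$; and the potential chords $u_Lu_R$, $v_Lv_R$, $u_Lv_R$, $u_Rv_L$ are excluded by the anticompleteness of $L_1$ to $R_1$ from Lemma~\ref{l:LCRdisjoint} together with Lemmas~\ref{l:majorClone} and~\ref{l:sumUpMC}. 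The cases in which $u_L$ or $v_R$ is instead a major vertex are handled by inserting a short segment of $H_L$ or $H_R$ between the offending major endpoint and its $L_1$- or $R_1$-counterpart and invoking Lemmas~\ref{l:vHole} and~\ref{l:corn} to preserve chordlessness.

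Once $J$ is a hole, since the interiors of $Q_L$ and $Q_R$ are disjoint from $V(H)$ while $c_1\in V(H)$, and every endpoint of each viaduct differs from $c_1$, we have $c_1\notin V(J)$. The number of neighbors of $c_1$ on $J$ equals (number in $Q_L$)~$+$~(number in $Q_R$) with no overcounting (since $u_L\neq v_L$ and $u_R\neq v_R$), so it is odd~$+$~odd~$=$~even and at least~6. Thus $(J,c_1)$ is an even wheel, contradicting $G\in\cal C$.

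For the ``in particular'' statement, suppose that both $L_i\cap C$ and $R_i\cap C$ are non-empty; pick $u_L\in L_i\cap C$ and $v_R\in R_i\cap C$. Since $u_L\in C$ has a neighbor in the full component $R$ of $G\sm C$, a shortest path from $u_L$ through $R\cup\{u_L\}$ to a vertex of $R_i$ (reachable via $r_i\in R$) produces an $(L,i)$-viaduct; symmetrically we obtain an $(R,i)$-viaduct, contradicting the main statement just proved. The main technical difficulty is the chordlessness verification in the construction of $J$, which requires careful case analysis on whether each endpoint is an original $l_i/r_i$, a clone, or a major vertex, and a careful exploitation of the minimality of the chosen viaducts.
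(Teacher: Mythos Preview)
Your overall strategy---glue the two viaducts into a hole $J$ and exhibit $c_1$ as the center of an even wheel---matches the paper's. However, your chordlessness check for $J$ has a genuine gap.

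You enumerate three families of potential chords: those inside $Q_L$ or $Q_R$, those between $V(Q_L^*)$ and $V(Q_R^*)$, and the four ``diagonal'' pairs $u_Lu_R,\,v_Lv_R,\,u_Lv_R,\,u_Rv_L$. But you omit the chords from $u_L$ to $V(Q_R^*)$ and from $v_R$ to $V(Q_L^*)$. These cannot be excluded by anything you cite: since $C\cap V(Q_L)=\{u_L\}$ and $u_R\in R_1\setminus C\subseteq R$, we have $V(Q_L\setminus u_L)\subseteq R$, and symmetrically $V(Q_R\setminus v_R)\subseteq L$; but $u_L\in C$ may well have neighbors in $L\supseteq V(Q_R^*)$, and $v_R\in C$ may have neighbors in $R\supseteq V(Q_L^*)$. (Incidentally, this same observation shows that your minimality argument for chords between $V(Q_L^*)$ and $V(Q_R^*)$ is vacuous: those two sets lie in different components of $G\setminus C$, so such chords never exist anyway. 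Moreover, even if they did, a chord there would not obviously shorten a \emph{viaduct}, which has rigid endpoint and interior constraints.)

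The paper confronts these ``$C$-endpoint to opposite-interior'' chords head-on. It first observes that if no such chord exists, then $J$ is a hole on which $c_1$ has an even number (odd~$+$~odd) of neighbors by Lemma~\ref{l:descViaducts}, giving an even wheel. If both $u_L$ has a neighbor in $Q_R\setminus v_L$ and $v_R$ has a neighbor in $Q_L\setminus u_R$, one builds a theta from $u_L$ to $v_R$ using the two shortcut subpaths and a path through $H$. If only one type of chord exists (say $v_R$ has a neighbor $p$ in $Q_L\setminus u_R$ but $u_L$ is anticomplete to $Q_R\setminus v_L$), one gets a theta or a pyramid depending on whether $u_L$ is major or a clone. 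None of this case analysis is present in your write-up, and the even-wheel conclusion is simply false when these chords exist.

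Your ``in particular'' argument is also too quick: a shortest path from $u_L\in L_i\cap C$ through $R$ to $R_i$ need not have its interior disjoint from $V(H)\cup L_i\cup R_i$ nor anticomplete to $V(H\setminus c_i)$, so it is not automatically an $(L,i)$-viaduct.
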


  \begin{proof}
    Suppose there exists an $(L, 1)$-viaduct $P= u_L\dots u_R$ and an
    $(R, 1)$-viaduct $Q = v_L \dots v_R$ (the case where $i=2$ is
    similar). Then, $u_L\in C$, $u_R\in R_1$, $v_L\in L_1$ and
    $v_R\in C$.  Note that $N_H(u_L)\subseteq V(H_L)$ and
    $N_H(v_R)\subseteq V(H_R)$.

    \begin{claim}
      \label {c:jCycle}
      $V(P\sm u_L)$ and $V(Q\sm v_R)$ are disjoint and anticomplete.
      Moreover, $u_Lv_R\notin E(G)$.
    \end{claim}

    \begin{proofclaim} 
      The first claim is because $L$ and $R$ are connected components
      of $G\sm C$ and $V(P\sm u_L)\subseteq R$ and
      $V(Q\sm v_R)\subseteq L$.

      Since $u_L$ and $v_R$ are nested and
      both major or clones w.r.t.~$H$, $u_Lv_R\notin E(G)$ follows from
      Lemma~\ref{l:sumUpMC}.
    \end{proofclaim}

    Let $x_L$ be the neighbor of $u_L$ in $v_L l'_1H_L c_2$, closest
    to $v_L$ along this path.  Let $y_L$ be the neighbor of $u_L$ in
    $v_L l'_1H_L c_2$, closest to $c_2$ along this path.  Note that
    $x_L$ and $y_L$ exist and are distinct from the definition of
    viaducts and Lemma~\ref{l:descViaducts}. 
    (but possibly, $x_L=v_L$, $y_L=l'_1$ and $x_Ly_L\in E(G)$ when
    $u_L$ is a clone w.r.t.~$H$). Let $x_R$ be the neighbor of $v_R$
    in $u_R r'_1H_R c_2$, closest to $u_R$ along this path. Let $y_R$
    be the neighbor of $v_R$ in $u_R r'_1H_R c_2$, closest to $c_2$
    along this path.

    If $x_L\neq v_L$, we set $S_L = u_L x_L H_L l'_1 v_L$.  If
    $x_L=v_L$ we set $S_L = u_L v_L$.  If $x_R\neq u_R$, we set
    $S_R = v_R x_R H_L r'_1 u_R$.  If $x_R=u_R$ we set $S_R =
    v_Ru_R$.

    By~(\ref {c:jCycle}), $J = u_LS_Lv_LQv_RS_Ru_RPu_L$ is a cycle
    whose only possible chords are edges from $u_L$ to $Q\sm v_L$ and from
    $v_R$ to $P\sm u_R$.  And such chords exist for otherwise,  $J$ is a
    hole and by Lemma~\ref{l:descViaducts}, $c_1$ has an even number
    of neighbors in $J$.

    Up to the symmetry between $P$ and $Q$, we suppose that $v_R$ has
    a neighbor in $P\sm u_R$ and let $p$ be the neighbor of $v_R$ in $P$
    closest to $u_L$ along $P$ (note that by~(\ref {c:jCycle}), $p\neq
    u_L$). 

    If $u_L$ has a neighbor in $Q\sm v_L$, then let $q$ be the
    neighbor of $u_L$ in $Q$ closest to $v_R$ along $Q$ (note that
    by~(\ref {c:jCycle}), $q\neq v_R$).  We see that the three paths
    $v_R p P u_L$, $v_R Q q u_L$ and $v_R y_R H_R c_2 H_L y_L u_L$
    form theta, a contradiction.  Hence, $u_L$ has no neighbor in
    $Q\sm v_L$.

    If $x_Ly_L\notin E(G)$, then the three paths $v_R p P u_L$,
    $v_R Q v_L S_L x_L u_L$ and $v_R y_R H_R c_2 H_L y_L u_L$ form a
    theta.  If $x_Ly_L\in E(G)$, then $x_L=v_L$ and $y_L=l'_1$,
    so the three paths $v_R p P u_L$, $v_R Q v_L$ and
    $v_R y_R H_R c_2 H_L l'_1$ form a pyramid.  In every case, there
    is a contradiction.
  \end{proof}

  \begin{lemma}
    \label{l:Cmonotonous}
    Let $c_1$, $c_2$, and $H$ be as in Lemma~\ref{l:main}. Let $i\in \{1, 2\}$.
    \begin{itemize}
    \item Suppose $R_i\cap C=\emptyset$.  Then there exists a vertex
      $c\in C_i$ such that for all $x\in C_i$, $x\in C$ if and only if
      $N_{L_i}(c) \subseteq N_{L_i}(x)$.
    \item Suppose $L_i\cap C=\emptyset$.  Then there exists a vertex
      $c\in C_i$ such that for all $x\in C_i$, $x\in C$ if and only if
      $N_{R_i}(c) \subseteq N_{R_i}(x)$.
    \end{itemize}
  \end{lemma}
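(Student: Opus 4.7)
The plan is to invoke Lemma~\ref{l:candidates} to reduce to the case $R_i \cap C = \emptyset$, the other case being symmetric. Since $R_i$ is a clique containing $r_i \in R$ (Lemma~\ref{l:LCRdisjoint}) and $L$ is anticomplete to $R$ in $G\sm C$, we have $R_i \subseteq R$ and $L_i \subseteq L \cup C$. In particular, any $x \in C_i \sm C$ lies in $R$ (it has a neighbor in $R_i \subseteq R$), which forces $N_{L_i}(x) \subseteq L_i \cap C$.

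The argument then reduces to two claims: (a) the family $\mathcal F = \{N_{L_i}(x) : x \in C_i \cap C\}$ is totally ordered by inclusion, and (b) the minimum element of $\mathcal F$, attained by some $c \in C_i \cap C$, contains a vertex of $L_i \cap L$. Given both, the lemma follows at once: for $x \in C_i \cap C$, $N_{L_i}(c) \subseteq N_{L_i}(x)$ by minimality; for $x \in C_i \sm C$, $N_{L_i}(x) \subseteq L_i \cap C$ cannot contain the $L_i \cap L$ witness in $N_{L_i}(c)$.

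For~(a), I would argue by contradiction: pick $x, y \in C_i \cap C$ with incomparable $L_i$-neighborhoods and witnesses $l \in N_{L_i}(x)\sm N_{L_i}(y)$, $l' \in N_{L_i}(y)\sm N_{L_i}(x)$, $r \in N_{R_i}(x)$, $r' \in N_{R_i}(y)$. By Lemma~\ref{l:LCRdisjoint}, $ll'$ and $xy$ are edges and $L_i$ is anticomplete to $R_i$. I would then form a clean $(C, c_1, c_2)$-hole by the cloning substitution $H' = (H_{l\sm l_i})_{r\sm r_i}$ when $l \in L$ (cleanness is inherited through Lemma~\ref{l:majorClone}, and $r\notin C$ is automatic from $R_i \subseteq R$), and use $H'$ together with the known adjacencies of $y, l', r'$ to exhibit a forbidden subgraph (theta, pyramid, prism, even wheel or square); the subcase $l \in L_i \cap C$ is handled by a direct 3-path-configuration argument using the clone structure around $l_i$ and the anticompleteness between $L_i$ and $R_i$.

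For~(b), note that $c_i \in C_i \cap C$ satisfies $N_{L_i}(c_i) = L_i \ni l_i$, since every clone of $l_i$ is adjacent to $c_i$ by the definition of clone; so $\mathcal F$ is nonempty and the minimum $c$ exists. Apply Lemma~\ref{l:main} to $c$: the heavy outcome is ruled out by $c \in C_i$; the outcome in which $c$ has a neighbor in $L_i \sm C$ and $R_i \sm C$ immediately supplies an $L_i \cap L$ neighbor; the remaining subcases (viaduct end, or neighbors in $L_{3-i} \sm C$ and $R_{3-i} \sm C$) are eliminated by combining the anticompleteness of $C_i$ to $H \sm \{c_i, l_i, r_i\}$ (Lemma~\ref{l:LCRdisjoint}) with the structural description of viaducts (Lemma~\ref{l:descViaducts}) and the hypothesis $R_i \cap C = \emptyset$. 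The main obstacle is~(a): the cloning substitution must yield a genuine new clean hole, and the witness $y$ together with $l', r'$ must produce a bona fide 3-path configuration rather than collapse into the $L_i$- or $R_i$-clique edges, so the case analysis on the position of $l, l'$ (in $L$ versus $C$) and on the type of $y$ with respect to $H'$ is where the bulk of the work lies.
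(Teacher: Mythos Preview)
Your high-level plan (claims (a) and (b)) is exactly the paper's structure, but you have misidentified where the work lies. Claim~(a) is not ``the main obstacle''; it is a one-liner. You yourself observe, via Lemma~\ref{l:LCRdisjoint}, that $L_i$ and $C_i$ are cliques, so $ll'\in E(G)$ and $xy\in E(G)$. Together with $xl,\,yl'\in E(G)$ and $xl',\,yl\notin E(G)$, the set $\{x,l,l',y\}$ already induces a square, contradicting $G\in\mathcal C$. That is the entire proof of~(a); no clone substitution $H'=(H_{l\setminus l_i})_{r\setminus r_i}$, no case split on $l\in L$ versus $l\in C$, and no forbidden-configuration hunt is needed. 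The paper uses precisely this square argument (phrased as: if $N_{L_i}(c)\not\subseteq N_{L_i}(x)$ and $N_{L_i}(x)\not\subseteq N_{L_i}(c)$, pick witnesses and get a square), and then minimality of $c$ finishes the backward implication.

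Your treatment of~(b) is in the right direction but too vague where it matters. The elimination of outcome~\ref{i:heavy} of Lemma~\ref{l:main} is correct (membership in $C_i$ forbids heaviness). For outcome~\ref{i:viaduct}, the clean argument is: $N_H(c)\subseteq\{c_i,l_i,r_i\}$ by Lemma~\ref{l:LCRdisjoint}, so $c$ is not major; and $c\notin L_1\cup R_1\cup L_2\cup R_2$ because $C_i$ is disjoint from $L_i\cup R_i$, while membership in $L_{3-i}$ or $R_{3-i}$ would force $c$ to be adjacent to $c_{3-i}\in H\setminus\{c_i,l_i,r_i\}$. Hence $c$ cannot be the $C$-end of any viaduct. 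Your appeal to Lemma~\ref{l:descViaducts} is beside the point here. Likewise, ``combining anticompleteness with the hypothesis $R_i\cap C=\emptyset$'' is not what rules out outcome~\ref{i:sets} with index $3-i$; the same $N_H(c)\subseteq\{c_i,l_i,r_i\}$ constraint does the job. Once these eliminations are made precise, (b) is short, and the forward implication (any $x$ with $N_{L_i}(c)\subseteq N_{L_i}(x)$ sees both $L$ and $R$, hence lies in $C$) is immediate from the $L_i\setminus C$ witness and $R_i\subseteq R$.
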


  \begin{proof}
    By symmetry, it is enough to prove the first claim for $i=1$, so
    suppose $R_1\cap C=\emptyset$. Note that $C_1\cap C \neq \emptyset$ since
    $c_1\in C_1$.  Let $c\in C_1\cap C$ be such that $N_{L_1}(c)$ is
    minimal (inclusion wise).  By Lemma~\ref{l:LCRdisjoint},
    $N_H(c) \subseteq \{c_1, l_1, r_1\}$, and hence by
    Lemma~\ref{l:main} applied to $c$, $c$ has a neighbor $l$ in
    $L_1\sm C$ and a neighbor $r$ in $R_1\sm C$. Let $x\in C_1$. 
 
  If $N_{L_1}(c) \subseteq N_{L_1}(x)$, then $x$ is adjacent to
  $l\in L_1\sm C\subseteq L$ and to some vertex of
  $R_1= R_1\sm C \subseteq R$.  Hence, $x\in C$.

  Conversely, suppose that $x\notin C$ and
  $N_{L_1}(c) \not\subseteq N_{L_1}(x)$.  This means that $c$ has a
  neighbor $y$ in $L_1$ that is not adjacent to $x$. If $x$ has a
  neighbor $z$ in $L_1\sm N(c)$, then $xzycx$ is a square by
  Lemma~\ref{l:LCRdisjoint}, a contradiction.  Hence,
  $N_{L_1}(x) \subsetneq N_{L_1}(c)$, contradicting the choice of $c$.
\end{proof}

 \section{The main proof}
 \label{s:final}

 We describe two algorithms ${\cal A}_{L, L}$ and ${\cal A}_{L, R}$
 that enumerate some proper separators of an input graph $G$.  Note
 that these algorithms can be applied to any graph.
 Algorithm~${\cal A}_{L, L}$ is described in Table~\ref{t:algo1}.
 Algorithm~${\cal A}_{L, R}$ is very similar to~${\cal A}_{L, L}$,
 only steps \ref{step:enumHalf}--\ref{step:final} slightly differ (the
 roles of $L_2$ and $R_2$ are swapped).  In Table~\ref{t:algo2} we
 indicate what are these steps.

 \newcommand{\noidd}[1]{\parbox[t]{13.5cm}{#1}}

 \newcommand{\idd}[1]{\rule{.5cm}{0cm}\parbox[t]{13cm}{#1}}
 \newcommand{\iddd}[1]{\rule{1cm}{0cm}\parbox[t]{12.5cm}{#1}}
\newcommand{\idddd}[1]{\rule{1.5cm}{0cm}\parbox[t]{12cm}{#1}}

\renewcommand{\theenumi}{\arabic{enumi}.}
\renewcommand{\labelenumi}{\theenumi}
\begin{table}

\begin{enumerate}
\item \label{step:init}  \noidd{Enumerate all pairs of distinct and non-adjacent vertices
 $(c_1, c_2)$ of $G$.
 
 Set $C=\{c_1, c_2\}$.}

\item\label{step:enum4} \idd{Enumerate all 4-tuple of vertices $(l_1, r_1, l_2, r_2)$ such
  that $\{l_1, c_1, r_1\}$ and $\{l_2, c_2, r_2\}$ both induce a path
  of length~2 and $\{l_1, l_2\}$ is anticomplete to $\{r_1, r_2\}$.
  Note that possibly  $l_1=l_2$ or $r_1=r_2$.} 
  
\item \label{step:heavy}\iddd{Add to $C$ every vertex $v$ such that
    $c_1$ and $c_2$ are in two distinct connected components of
    $G\sm (N[v]\sm \{c_1, c_2\})$.  If
    $\{l_1, r_1, l_2, r_2\} \cap C\neq \emptyset$, discard $C$.}

\item\label{step:computeHole}\iddd{In $G\sm ((C\cup N(c_1) \cup
    N(c_2))\sm \{l_1, r_1, l_2, r_2\})$, compute a shortest
    path $H_L$ from $l_1$ to $l_2$ and a shortest path $H_R$ from
    $r_1$ to $r_2$. Set $H= c_1r_1H_Rr_2c_2l_2H_Ll_1c_1$.  If $H$ is
    not a hole, discard $C$.}

\item\label{step:clean}\iddd{Add to $C$ every vertex that is major w.r.t.~$H$.}
    
\item \iddd{Compute the set $L_1$ of clones of $l_1$ w.r.t.\ $H$ and add
  $l_1$ to $L_1$.  Compute similar sets $R_1$, $L_2$ and $R_2$.}

\item \iddd{Compute the set $C_1$ of vertices that have neighbors in both
  $L_1 $ and $R_1$ and that are not major w.r.t.\ $H$.  Note that
  $c_1\in C_1$. Compute a similar set $C_2$.}

\item\label{step:checkCliques}\iddd{Check that
    $L_1, C_1, R_1, L_2, C_2, R_2$ are disjoint cliques, except that
    possibly exactly one of the equalities $L_1=L_2$ and $R_1=R_2$ holds.
    If this check fails, discard $C$.}

\item\label{step:enumHalf}\iddd{Enumerate all pairs of vertices
    $c'_1\in C_1$, $c'_2\in C_2$.}

\item\label{step:half1}\idddd{Add to $C$ all vertices $x$ from $C_1$ such that
  $N_{L_1}(c'_1) \subseteq N_{L_1}(x)$.}

\item\label{step:half2}\idddd{Add to $C$ all vertices $x$ from $C_2$ such that
  $N_{L_2}(c'_2) \subseteq N_{L_2}(x)$.}

\item\label{step:candidate}\idddd{Add to $C$ every vertex $c\in L_1$
    such that there exists a path $Q = c\dots v$ with the following
    properties: $v\in R_1$, $V(Q)\cap C =\emptyset$,
    $V(Q^*) \cap (V(H)\cup L_i \cup R_i) = \emptyset$ and $Q^*$ is
    anticomplete to $H\sm c_1$.}
\item\label{step:candidate2}\idddd{Add to $C$ every vertex $c\in L_2$
    such that there exists a path $Q = c\dots v$ with the following
    properties: $v\in R_2$, $V(Q)\cap C =\emptyset$,
    $V(Q^*) \cap (V(H)\cup L_2 \cup R_2) = \emptyset$ and $Q^*$ is
    anticomplete to $H\sm c_2$.}

 \item\label{step:final}\idddd{Check whether $C$ is proper separator of $G$. If not, discard
   $C$. Otherwise  return $C$, and go to the next step of enumeration
   (so step~\ref{step:init}, \ref{step:enum4} or \ref{step:enumHalf}).}  
   
\end{enumerate}
\caption{\label{t:algo1}Algorithm ${\cal A}_{L, L}$}
\end{table}

\begin{table}
\begin{enumerate}
\setcounter{enumi}{8}
\item\label{step:enumHalfLR}\iddd{Enumerate all pairs of vertices
    $c'_1\in C_1$, $c'_2\in C_2$.}

\item\label{step:half1LR}\idddd{Add to $C$ all vertices $x$ from $C_1$ such that
  $N_{L_1}(c'_1) \subseteq N_{L_1}(x)$.}

\item\label{step:half2LR}\idddd{Add to $C$ all vertices $x$ from $C_2$ such that
  $N_{R_2}(c'_2) \subseteq N_{R_2}(x)$.}

\item\label{step:candidateLR}\idddd{Add to $C$ every vertex $c\in L_1$
    such that there exists a path $Q = c\dots v$ with the following
    properties: $v\in R_1$, $V(Q)\cap C =\emptyset$,
    $V(Q^*) \cap (V(H)\cup L_i \cup R_i) = \emptyset$ and $Q^*$ is
    anticomplete to $H\sm c_1$.}
\item\label{step:candidate2LR}\idddd{Add to $C$ every vertex $c\in R_2$
    such that there exists a path $Q = c\dots v$ with the following
    properties: $v\in L_2$, $V(Q)\cap C =\emptyset$,
    $V(Q^*) \cap (V(H)\cup L_2 \cup R_2) = \emptyset$ and $Q^*$ is
    anticomplete to $H\sm c_2$.}
 \item\label{step:finalLR}\idddd{Check whether $C$ is proper separator of $G$. If not, discard
   $C$. Otherwise  return $C$, and go to the next step of enumeration
   (so step~\ref{step:init}, \ref{step:enum4} or \ref{step:enumHalf}).}  
\end{enumerate}
\caption{\label{t:algo2}Algorithm ${\cal A}_{L, R}$ (only steps
  \ref{step:enumHalf}--\ref{step:final} are described)}
\end{table}

\begin{lemma}
  Let $G$ be a graph in $\cal C$. If we run the two algorithms
  ${\cal A}_{L, L}$ and ${\cal A}_{L, R}$ on $G$, then the output is the list
  of all proper separators of $G$ and the running time is at most
  $O(|V(G)|^{10})$.
\end{lemma}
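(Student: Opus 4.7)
The plan is to prove two things: (i) every proper separator of $G$ is output by at least one of $\mathcal{A}_{L,L}$ or $\mathcal{A}_{L,R}$, and (ii) the combined running time is $O(n^{10})$. Soundness is immediate because step~\ref{step:final} explicitly verifies that each candidate is a proper separator before outputting it.

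For completeness, I would fix a proper separator $C$ of $G$. First apply Lemma~\ref{l:2comp} to obtain the two full components $L,R$ of $G\sm C$. Then pick non-adjacent $c_1,c_2\in C$ maximizing the potential of $(C,c_1,c_2)$ so that Lemma~\ref{l:main} applies, and invoke Lemma~\ref{l:existClean} to obtain a clean $(C,c_1,c_2)$-hole together with the standard data $l_1,r_1,l_2,r_2$, $L_i,R_i,C_i$. By Lemma~\ref{l:candidates}, for each $i\in\{1,2\}$ at least one of $L_i\cap C$, $R_i\cap C$ is empty; up to swapping the names $l_i\leftrightarrow r_i$ inside the enumeration of step~\ref{step:enum4}, we may assume $R_1\cap C=\emptyset$. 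Then run $\mathcal{A}_{L,L}$ if additionally $R_2\cap C=\emptyset$, and $\mathcal{A}_{L,R}$ otherwise, focusing on the iteration that selects this $(c_1,c_2)$ in step~\ref{step:init} and this $(l_1,r_1,l_2,r_2)$ in step~\ref{step:enum4}.

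In that iteration I would check step by step that the accumulated candidate matches $C$. Step~\ref{step:heavy} only adds vertices of $C$: a vertex $v\notin C\cup\{c_1,c_2\}$ lies in $L\cup R$, say $L$, so $N(v)\cap R=\emptyset$ and the component $R$ together with $c_1,c_2$ contains a path from $c_1$ to $c_2$ avoiding $N[v]\sm\{c_1,c_2\}$. Granting that the hole $H$ computed in step~\ref{step:computeHole} is a clean $(C,c_1,c_2)$-hole, Lemma~\ref{l:heavy} gives that step~\ref{step:clean} adds only vertices of $C$, Lemma~\ref{l:LCRdisjoint} validates the clique checks of step~\ref{step:checkCliques}, and Lemma~\ref{l:Cmonotonous} says that choosing $c'_i$ to be the vertex of $C_i\cap C$ whose $L_i$-neighborhood (or $R_2$-neighborhood, in $\mathcal{A}_{L,R}$) is minimal makes steps~\ref{step:half1}--\ref{step:half2} add exactly $C_i\cap C\sm\{c_i\}$. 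Steps~\ref{step:candidate}--\ref{step:candidate2} then add precisely the viaduct ends in $L_1\sm C$ and $L_2\sm C$ (or $R_2\sm C$), with Lemma~\ref{l:candidates} ensuring that no viaduct of the opposite orientation exists to be missed. Finally, Lemma~\ref{l:main} says every vertex of $C\sm\{c_1,c_2\}$ is heavy, lies in some $C_i\cap C$, or is the end of a viaduct, so the accumulated set equals $C$ and step~\ref{step:final} accepts it.

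For the complexity bound, the outer enumeration in steps~\ref{step:init}, \ref{step:enum4} and~\ref{step:enumHalf} contributes $O(n^2)\cdot O(n^4)\cdot O(n^2)=O(n^8)$ iterations. Each subtask inside an iteration runs in $O(n^2)$: BFS for the shortest paths of step~\ref{step:computeHole}, neighborhood and major-vertex scans for steps~\ref{step:heavy}--\ref{step:half2}, and the viaduct searches of steps~\ref{step:candidate}--\ref{step:candidate2} by precomputing the admissible internal-vertex subgraph together with its connected components once, then testing reachability per candidate in amortized $O(n)$. This yields $O(n^{10})$ in total. The main obstacle I anticipate is justifying that the hole $H$ computed in step~\ref{step:computeHole} is a clean $(C,c_1,c_2)$-hole: the restriction in step~\ref{step:computeHole} removes heavy vertices and the neighbors of $c_1,c_2$ other than $\{l_1,r_1,l_2,r_2\}$, but non-heavy vertices of $C$ may remain, and the shortest paths might a priori shortcut through them. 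Ruling this out, likely via a length comparison against the $L$- and $R$-sides of the shortest clean hole provided by Lemma~\ref{l:existClean} combined with the maximality of the potential of $(C,c_1,c_2)$, is where the main technical work will lie.
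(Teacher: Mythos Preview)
Your overall plan matches the paper's, and your identification of the crux---showing that the hole $H$ built in step~\ref{step:computeHole} is a clean $(D,c_1,c_2)$-hole (I use $D$ for the target separator to avoid your overloading of $C$)---is correct. But the way you propose to fill that gap is not what works, and the two missing ingredients are concrete.

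First, you only argue that step~\ref{step:heavy} adds a subset of $D$; you never argue that it adds \emph{all} $(D,c_1,c_2)$-heavy vertices. This direction is essential, and it is exactly where Theorem~\ref{th:struct} (the decomposition theorem) is used: if $v$ is $(D,J)$-heavy for the reference shortest clean hole $J$, then $c_1,c_2$ lie in distinct $v$-sectors of $J$, and Theorem~\ref{th:struct} forces them into distinct components of $G\sm(N[v]\sm\{c_1,c_2\})$, so step~\ref{step:heavy} picks $v$ up. Without this you cannot conclude that the accumulated $C$ at step~\ref{step:computeHole} contains all heavy vertices.

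Second, the length comparison you propose only shows cleanness \emph{once you know} $H$ is a $(D,c_1,c_2)$-hole; it cannot by itself rule out the computed shortest paths crossing $D$. The missing observation is that Lemma~\ref{l:main}, applied to the reference hole $J$, says every vertex of $D\sm\{c_1,c_2\}$ is either heavy or (by the definition of viaducts and the fact that every $c\in C_i\sm\{c_i\}$ is adjacent to $c_i$) a neighbor of $c_1$ or $c_2$. Combined with the previous paragraph, this means the set removed in step~\ref{step:computeHole} contains all of $D$, so $H_L^*\subseteq L$, $H_R^*\subseteq R$, $H$ really is a $(D,c_1,c_2)$-hole, and only then does your length comparison with $J$ show it is shortest, hence clean by Lemma~\ref{l:existClean}. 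Your phrase ``maximality of the potential'' is pointing at Lemma~\ref{l:main}, but you need to cash it out this way rather than hope it interacts directly with the length argument.
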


\begin{proof}
  Because of step~\ref{step:final}, the algorithm obviously outputs a
  list of proper separators of~$G$.  Conversely, consider a proper
  separator $D$, and let us check that at least one of
  ${\cal A}_{L, L}$ or ${\cal A}_{L, R}$ outputs $D$.

  Let $c_1, c_2$ be vertices in $D$ such that the potential of
  $(D, c_1, c_2)$ is maximum. At some point, in step~\ref{step:init},
  the algorithm considers the pair of vertices $(c_1, c_2)$ and
  correctly puts $c_1$ and $c_2$ in $C$.  At this step, $C\subseteq D$.

  Let $J$ be a shortest $(D, c_1, c_2)$-hole. Note that by
  Lemma~\ref{l:existClean}, $J$ is clean.  We use the notation
  $l_1, r_1, l_2, r_2$ as in section~\ref{s:cutset}.  At some point in
  step~\ref{step:enum4}, the algorithm considers the 4-tuple
  $(l_1, r_1, l_2, r_2)$.

  In step~\ref{step:heavy}, we claim that all $(D, c_1, c_2)$-heavy
  vertices are put in $C$.  Indeed, let $v$ be such a
  $(D, c_1, c_2)$-heavy vertex.  Note that by Lemma~\ref{l:heavy}, $v$
  is $(D, J)$-heavy and hence $c_1$ and $c_2$ do not belong to the
  same $v$-sector of $J$.
  By Theorem~\ref{th:struct} applied to $J$ and $v$,
  vertices $c_1$ and $c_2$ are in different components of
  $G\sm (N[v]\sm \{c_1, c_2\})$.  Note that all $(D, J)$-heavy vertices are
  obviously in $D$, since they have neighbors in both $L$ and
  $R$. Hence, at this step, we have $C\subseteq D$.  Note that if
  $l_1$ say is put in $C$ at this step, then $l_1$ is adjacent to some
  vertex of $R$, a contradiction, so it is correct to discard
  $C$. There is a similar argument for $r_1, l_2, r_2$.
  
  We claim that in step~\ref{step:computeHole} a clean
  $(D, c_1, c_2)$-hole is computed.  
  Indeed, by Lemma~\ref{l:main} applied to $J$, all vertices of $D$ are either heavy
  (and these are already in $C$), or adjacent to $c_1$ or $c_2$ from
  definitions of  viaducts and by Lemma~\ref{l:LCRdisjoint}.  So, when we compute the paths,
  all vertices of $D$ are removed. Note that the paths exist, because
  of $J$.  Hence, $V(H_L^*)\subseteq L$ and $V(H_R^*)\subseteq R$
  showing that $H$ is a hole. In particular, it is correct to discard
  $C$ when $H$ is not a hole.  Clearly, $H$ is a $(D, c_1,
  c_2)$-hole.  Since $H$ and $J$ both go through $c_1$, $l_1$, $r_1$,
  $c_2$, $l_2$, $r_2$ and $J$ is a shortest $(D, c_1,
  c_2)$-hole, the length of $H$ is the same as the length of $J$, and
  hence $H$ is also a shortest $(D, c_1,
  c_2)$-hole. So,  by Lemma~\ref{l:existClean}, $H$ is clean w.r.t.\ $D$.  Note
  that $H$ and $J$ are potentially different holes, but they both go
  through the same vertices $c_1, l_1, r_1, c_2, l_2, r_2$ and they
  have the same heavy vertices by Lemma~\ref{l:heavy}.
  
  Since $H$ is clean, in step~\ref{step:clean} it is correct to put in
  $C$ every vertex that is major w.r.t.~$H$.  We still have
  $C\subseteq D$.

  In step~\ref{step:checkCliques}, we know by
  Lemma~\ref{l:LCRdisjoint} (and since $H$ is not a square) that
  discarding $C$ is correct whenever we have to do so.
  
  Now, by Lemma~\ref{l:candidates} and the symmetry between $L$ and
  $R$ we may assume that $R_1\cap D = \emptyset$.  More precisely, if
  $R_1\cap D \neq \emptyset$ then $L_1\cap D = \emptyset$, so at some
  other step of enumeration, the 4-tuple $(r_1, l_1, r_2, l_2)$ (and
  not $(l_1, r_1, l_2, r_2)$) is considered, so that
  $R_1\cap D = \emptyset$.

  By Lemma~\ref{l:candidates}, we may consider two cases:

  \begin{itemize}
  \item Case 1: $R_1\cap D$ and $R_2\cap D$ are empty (algorithm
    ${\cal A}_{L, L}$).
  \item Case 2: $R_1\cap D$ and $L_2\cap D$ are empty (algorithm
    ${\cal A}_{L, R}$).
  \end{itemize}

  We shall prove that in each case, the algorithm that is indicated
  above outputs $D$. The cases being similar, we just handle Case~1,
  so we suppose that $R_1\cap D$ and $R_2\cap D$ are empty.  Hence, by
  Lemma~\ref{l:Cmonotonous} there exists a vertex $c'_1\in C_1$ such
  that for all $x\in C_1$, $x\in D$ if and only if
  $N_{L_1}(c'_1) \subseteq N_{L_1}(x)$, and there exists a similar vertex
  $c'_2\in C_2$.  At some point, in step~\ref{step:enumHalf}, the
  vertices $c'_1, c'_2$ will be considered. And by
  Lemma~\ref{l:Cmonotonous}, steps~\ref{step:half1}
  and~\ref{step:half2} correctly put in $C$ the sets $C_1\cap D$ and
  $C_2\cap D$.

  At this step of the algorithm, all major vertices
  and vertices of $C_1\cap D$ and $C_2\cap D$ are in $C$.  The only
  vertices in $D\sm C$ are therefore in $L_1$ and $L_2$.  Let $c$ be a
  vertex in $C\cap L_1$.  By Lemma~\ref{l:main}, there exists a
  viaduct with end $c$, so that in step~\ref{step:candidate} a path
  $Q$ is detected and $c$ is correctly added to $C$.  Conversely, if
  some path $Q$ is detected in step~\ref{step:candidate}, then
  $c\in D$. For otherwise, since $R_1\cap D=\emptyset$, $Q$ is a path
  from $L$ to $R$, so it must contain a vertex of $D$.  Since all
  vertices of $D\sm (L_1 \cup L_2)$ are in $C$ and therefore not in
  $Q$, they are not used by $Q$, so there is a contradiction.

  Similarly, in step~\ref{step:candidate2}, $D\cap L_2$ is put in $C$.

  Now, $C=D$. Hence, in step~\ref{step:final}, $C$ is detected as a
  proper separator and the algorithm outputs
  $D$ as claimed.

  \newpage
  \noindent{\bf Complexity analysis}
  \label{p:compAnalysis}
  
  The enumeration of all vertices takes time $O(n^8)$, and for each of
  them, all the computations that we do rely on connectivity checks
  that can be implemented to run in time $O(n^2)$ with BFS.  The total
  running time is therefore $O(n^{10})$.
\end{proof}

We can now prove Theorem~\ref{th:main}, restated below. 

\begin{customthm}{\ref{th:main}}
  Every graph in $\cal C$ on $n$ vertices contains at most $O(n^{8})$ minimal
  separators.  There is an algorithm of complexity~$O(n^{10})$ that
  enumerates them. Consequently, there exists a polynomial time algorithm
  for the Maximum Weighted Independent Set restricted to $\cal C$. 
\end{customthm}

\begin{proof}
  For each of the 8-tuple of vertices that is considered by algorithms
  ${\cal A}_{L, L}$ and ${\cal A}_{L, L}$, each algorithm outputs at
  most one proper separator.  Hence, there is at most $O(n^8)$ such
  separators.  As explained at the beginning of
  Section~\ref{s:cutset}, non-proper minimal separators are all clique
  separators, and there are at most $O(n)$ and they can be enumerated
  in time $O(n^3)$. In Section~\ref{s:results}, it is explained why this
  implies that the Maximum Weighted Independent Set restricted to
  $\cal C$ can be solved in polynomial time.

\end{proof}

\subsection*{Complexity of MWIS in $\cal C$}

We do not recall here the definition of a \emph{potential maximal
  clique}, see~\cite{DBLP:journals/tcs/BouchitteT02}.  A potential
maximal clique in a graph $G$ is a subset of $V(G)$ with special
properties.  We denote by $m$ the number of edges in $G$, by $p$ the
number of potential maximal cliques in $G$ and by $s$ be the number of
minimal separators in $G$.  In~\cite{DBLP:journals/tcs/BouchitteT02},
it is proved that $p\leq O(ns^2 + ns + 1)$ (Proposition 22) and that,
given the list of minimal separators, the potential maximal cliques of
$G$ can be listed in time $O(n^2ms^2)$ (Theorem
23). In~\cite{DBLP:conf/soda/LokshantovVV14}, based
on~\cite{DBLP:conf/stacs/FominV10}, it is proved that, given the list
of potential maximal cliques, the MWIS problem can be solved in time
$O(n^5mp)$ in any graph (Proposition 1).  By Theorem~\ref{th:main},
$s\leq O(n^8)$, so $p\leq O(n^{17})$.  Hence, in $\cal C$, the MWIS
problem can be solved in time $O(n^{24})$.

\section{Acknowledgement}

We thank Tara Abrishami, Marcin Pilipczuk and Paul Seymour for useful
discussions. This work was initiated in the Bellairs Research
Institute of McGill University where the four authors were invited at
the 2019 Barbados Graph Theory Workshop, organized by Sergey Norin,
Paul Seymour and David Wood.


\begin{thebibliography}{10}

\bibitem{adlerLMRTV:rwehf}
Isolde Adler, Ngoc{-}Khang Le, Haiko M{\"{u}}ller, Marko Radovanovi\'c, Nicolas
  Trotignon, and Kristina~Vu\v skovi\'c.
\newblock On rank-width of even-hole-free graphs.
\newblock {\em Discrete Mathematics {\&} Theoretical Computer Science}, 19(1),
  2017.

\bibitem{DBLP:journals/algorithms/BerryPS10}
Anne Berry, Romain Pogorelcnik, and Genevi{\`{e}}ve Simonet.
\newblock An introduction to clique minimal separator decomposition.
\newblock {\em Algorithms}, 3(2):197--215, 2010.

\bibitem{DBLP:journals/siamcomp/BouchitteT01}
Vincent Bouchitt{\'{e}} and Ioan Todinca.
\newblock Treewidth and minimum fill-in: Grouping the minimal separators.
\newblock {\em {SIAM} J. Comput.}, 31(1):212--232, 2001.

\bibitem{DBLP:journals/tcs/BouchitteT02}
Vincent Bouchitt{\'{e}} and Ioan Todinca.
\newblock Listing all potential maximal cliques of a graph.
\newblock {\em Theor. Comput. Sci.}, 276(1-2):17--32, 2002.

\bibitem{DBLP:journals/corr/abs-1903-04761}
Maria Chudnovsky, Marcin Pilipczuk, Michal Pilipczuk, and St{\'{e}}phan
  Thomass{\'{e}}.
\newblock On the maximum weight independent set problem in graphs without
  induced cycles of length at least five.
\newblock {\em CoRR}, abs/1903.04761, 2019.

\bibitem{chudSey:bisimplicial}
Maria Chudnovsky and Paul Seymour.
\newblock Even-hole-free graphs still have bisimplicial vertices.
\newblock {\em CoRR}, abs/1909.10967, 2019.

\bibitem{DBLP:journals/corr/abs-1901-00335}
Konrad~K. Dabrowski, Matthew Johnson, and Dani{\"{e}}l Paulusma.
\newblock Clique-width for hereditary graph classes.
\newblock {\em CoRR}, abs/1901.00335, 2019.

\bibitem{DBLP:conf/stacs/FominV10}
Fedor~V. Fomin and Yngve Villanger.
\newblock Finding induced subgraphs via minimal triangulations.
\newblock In Jean{-}Yves Marion and Thomas Schwentick, editors, {\em 27th
  International Symposium on Theoretical Aspects of Computer Science, {STACS}
  2010, March 4-6, 2010, Nancy, France}, volume~5 of {\em LIPIcs}, pages
  383--394. Schloss Dagstuhl - Leibniz-Zentrum fuer Informatik, 2010.

\bibitem{DBLP:conf/soda/LokshantovVV14}
Daniel Lokshantov, Martin Vatshelle, and Yngve Villanger.
\newblock Independent set in \emph{P}\({}_{\mbox{5}}\)-free graphs in
  polynomial time.
\newblock In Chandra Chekuri, editor, {\em Proceedings of the Twenty-Fifth
  Annual {ACM-SIAM} Symposium on Discrete Algorithms, {SODA} 2014, Portland,
  Oregon, USA, January 5-7, 2014}, pages 570--581. {SIAM}, 2014.

\bibitem{vuskovic:evensurvey}
Kristina Vu{\v s}kovi{\'c}.
\newblock Even-hole-free graphs: a survey.
\newblock {\em Applicable Analysis and Discrete Mathematics}, 10(2):219--240,
  2010.

\bibitem{vuskovic:truemper}
Kristina Vu{\v s}kovi{\'c}.
\newblock The world of hereditary graph classes viewed through {T}ruemper
  configurations.
\newblock In S.~Gerke S.R.~Blackburn and M.~Wildon, editors, {\em Surveys in
  Combinatorics, London Mathematical Society Lecture Note Series}, volume 409,
  pages 265--325. Cambridge University Press, 2013.

\end{thebibliography}

\end{document}